\newcommand{\version}{long}
\let\@tmp\@xfloat     
\let\@xfloat\@tmp                    
\newclass{\ioPSPACE}{i.o.\text{-}PSPACE}
\newlang{\Halt}{Halt}
\ifundef{\toct}{\usepackage{etoolbox,ifthen,tabulary}

\newcommand{\ifft}{\ifundef{\shortiff}{if and only if }{iff }}
\newcommand{\spc}{{ }}

\newcommand{\inpuborpriv}[2]{\ifdef{\priv}{#2}{#1}}
\newcommand{\inpub}[1]{\inpuborpriv{#1}{}}
\newcommand{\inpriv}[1]{\inpuborpriv{}{#1}}
\newcommand{\indraft}[1]{\ifthenelse{\equal{\version}{draft}}{#1}{}}
\newcommand{\infinal}[1]{\ifthenelse{\equal{\version}{final}}{#1}{Only shown in the final version}}
\newcommand{\inshort}[1]{\ifthenelse{\equal{\version}{short}}{#1}{}}
\newcommand{\inlong}[1]{\ifthenelse{\equal{\version}{long}}{#1}{}}
\newcommand{\inshortorlong}[2]{\inshort{#1}\inlong{#2}}

\newcommand{\intoctornot}[2]{\ifdef{\toct}{#1}{#2}}
\newcommand{\intoct}[1]{\intoctornot{#1}{}}
\newcommand{\notintoct}[1]{\intoctornot{}{#1}}

\newcommand{\classorcat}{\inpuborpriv{class}{category}}

\newcommand{\appref}[1]{Appendix~\ref{app:#1}}
\newcommand{\secref}[1]{Section~\ref{sec:#1}}

\newcommand{\renameenv}[2]{
  \expandafter\let\csname #1#2\expandafter\endcsname
  \csname #1\endcsname
  \expandafter\let\csname end#1#2\expandafter\endcsname
  \csname end#1\endcsname
  \expandafter\let\csname #2\endcsname\relax
  \expandafter\let\csname end#2\endcsname\relax}

\ifundef{\defaultlists}{
  \usepackage[inline,shortlabels]{enumitem}
  \setenumerate[1]{(a),itemsep=0pt,topsep=3pt,parsep=0pt,partopsep=0pt}
  \setenumerate[2]{(i),noitemsep,topsep=3pt,parsep=0pt,partopsep=0pt}
  \setenumerate[3]{(A),noitemsep,topsep=3pt,parsep=0pt,partopsep=0pt}
  \setenumerate[4]{(I),noitemsep,topsep=3pt,parsep=0pt,partopsep=0pt}
  \setitemize{noitemsep,topsep=3pt,parsep=0pt,partopsep=0pt}
  \setdescription{noitemsep,topsep=3pt,parsep=0pt,partopsep=0pt}
  \setlist{noitemsep,topsep=3pt,parsep=0pt,partopsep=0pt}}{}

\newcolumntype{x}[1]{>{\centering\arraybackslash}m{#1}}
}{}{} 
\let\eqref\relax
\DeclareFontFamily{U}{mathx}{\hyphenchar\font45}
\DeclareFontShape{U}{mathx}{m}{n}{
      <5> <6> <7> <8> <9> <10>
      <10.95> <12> <14.4> <17.28> <20.74> <24.88>
      mathx10
      }{}
\DeclareSymbolFont{mathx}{U}{mathx}{m}{n}
\DeclareMathSymbol{\bigtimes}{1}{mathx}{"91}
\algnewcommand{\Input}{\item[\textbf{Input:}]}
\algnewcommand{\Output}{\item[\textbf{Output:}]}
\newcommand\restr[2]{{
  \left.\kern-\nulldelimiterspace
  #1
  \vphantom{\big|}
  \right|_{#2}
  }}
\newcommand{\grp}{\mathbf{Grp}}
\newcommand{\graph}{\mathbf{Graph}}
\newcommand{\id}{\mathrm{id}}
\newcommand{\concat}{:}
\newcommand{\abs}[1]{\left\vert#1\right\vert}
\newcommand{\can}{\mathrm{Can}}
\newcommand{\cay}{\mathrm{Cay}}
\newcommand{\iso}{\mathrm{Iso}}
\newcommand{\setb}[2]{\left\{#1 \;\middle|\; #2\right\}}
\newcommand{\nth}[1]{\ensuremath{{#1}^{\mathrm{th}}}}
\newcommand{\figref}[1]{Figure~\ref{fig:#1}}
\newcommand{\thmref}[1]{Theorem~\ref{thm:#1}}
\newcommand{\lemref}[1]{Lemma~\ref{lem:#1}}
\newcommand{\defref}[1]{Definition~\ref{defn:#1}}
\newcommand{\corref}[1]{Corollary~\ref{cor:#1}}
\newcommand{\eqref}[1]{(\ref{eq:#1})}
\newcommand{\eps}{\epsilon}
\newcommand{\bmg}{\mathbf{g}}
\newcommand{\bmh}{\mathbf{h}}
\newcommand{\bmk}{\mathbf{k}}
\newcommand{\la}{\leftarrow}
\newcommand{\ra}{\rightarrow}
\newcommand{\tril}{\triangleleft}
\def\thm@space@setup{\thm@preskip=3pt \thm@postskip=3pt}
\renewenvironment{proof}[1][\proofname]{\par
  \pushQED{\qed}
  \normalfont
  \topsep3pt \partopsep0pt 
  \trivlist
  \item[\hskip\labelsep
        \itshape
    #1\@addpunct{.}]\ignorespaces
  }{
    \popQED\endtrivlist\@endpefalse
    \addvspace{0pt plus 0pt} 
  }
\ifundef{\dontnumberwithin}{\declaretheorem[numberwithin=section]{dummy}}{\declaretheorem{dummy}} 
\declaretheorem[sibling=dummy]{theorem}
\declaretheorem[sibling=dummy]{lemma}
\declaretheorem[sibling=dummy]{definition}
\declaretheorem[sibling=dummy]{corollary}
\ifundef{\defaultthmcontinues}{\renewcommand{\thmcontinues}[1]{}}{}
\newcommand{\augcomp}{\mathbf{ACP}}
\newcommand{\augcomptree}{\mathbf{ACPTree}}
\newcommand{\alphadecomp}{\ensuremath{\mathbf{\alpha}\text{-}\mathbf{Decomp}}}
\newcommand{\alphapair}{\ensuremath{\mathbf{\alpha}\text{-}\mathbf{Pair}}}
\newcommand{\bigleafprod}{\bigodot}
\newcommand{\leafprod}{\odot}
\begin{document}

\title{Beating the Generator-Enumeration Bound for Solvable-Group Isomorphism\footnote{A preliminary version of this work appeared as a portion of~\cite{rosenbaum2013a}.}}
\author{David J. Rosenbaum \\ {\small University of Washington} \\ {\small Department of Computer Science \& Engineering} \\ {\small Email: djr@cs.washington.edu}}
\date{December  1, 2014}

\maketitle
\thispagestyle{empty}

\begin{abstract}
  \begin{sloppypar}
    We consider the isomorphism problem for groups specified by their multiplication tables.  Until recently, the best published bound for the worst-case was achieved by the $n^{\log_p n + O(1)}$ generator-enumeration algorithm.  In previous work with Fabian Wagner, we showed an $n^{(1 / 2) \log_p n + O(\log n / \log \log n)}$ time algorithm for testing isomorphism of $p$-groups by building graphs with degree bounded by $p + O(1)$ that represent composition series for the groups and applying Luks' algorithm for testing isomorphism of bounded degree graphs.

    In this work, we extend this improvement to the more general class of solvable groups to obtain an $n^{(1 / 2) \log_p n + O(\log n / \log \log n)}$ time algorithm.  In the case of solvable groups, the composition factors can be large which prevents previous methods from outperforming the generator-enumeration algorithm.  Using Hall's theory of Sylow bases, we define a new object that generalizes the notion of a composition series with small factors but exists even when the composition factors are large.  By constructing graphs that represent these objects and running Luks' algorithm, we obtain our algorithm for solvable-group isomorphism.  We also extend our algorithm to compute canonical forms of solvable groups while retaining the same complexity.  
  \end{sloppypar}
\end{abstract}


\inlong{\notintoct{
    \newpage
    \setcounter{page}{1}}}

\section{Introduction}
\label{sec:intro}
We study the \emph{group isomorphism problem} in which we must decide if two finite groups given as Cayley tables are isomorphic.

While efficient algorithms are known for testing isomorphism of various special types of groups~\cite{lipton1977a,savage1980a,vikas1996a,kavitha2007a,legall2008a,qiao2011a,babai2011a,codenotti2011a,babai2012a,babai2012b,grochow2013a}, the class of $p$-groups, and hence the more general solvable groups, are conjectured~\cite{babai2011a,codenotti2011a,babai2012a} to contain the hard case of the group isomorphism problem.  For these classes, the $n^{\log_p n + O(1)}$ \emph{generator-enumeration bound}~\cite{felsch1970a,lipton1977a,miller1978a}, where $p$ is the smallest prime dividing the order of the group, has been the tightest worst-case result for several decades.  Deriving this bound is straightforward: such a group must have a generating set of size at most $\log_p n$; since any isomorphism is defined by the image of any generating set, we can test isomorphism by considering all $n^{\log_p n + O(1)}$ possible images of the set.  Obtaining an upper bound of $n^{(1 - \eps) \log_p n + O(1)}$ where $\eps > 0$ for the class of $p$-groups was therefore a longstanding open problem~\cite{lipton2011a}.

In previous work with Wagner~\cite{rosenbaum2013c} (following~\cite{wagner2011a,rosenbaum2013a}), this was accomplished by showing a square root speedup over generator enumeration for the class of $p$-groups\footnote{Subsequent to~\cite{rosenbaum2013c}, James Wilson (personal communication) showed that the algorithm for $p$-group isomorphism from~\cite{obrien1994a} runs in at most $n^{c \log_p n + O(1)}$ time where $c < 1 / 4$.  However, his analysis has not been published and is limited to $p$-groups.}.  However, this left the problem of obtaining an improvement over the generator-enumeration algorithm for solvable groups unresolved.

In the present paper, we show a similar upper bound for the class of solvable-groups.  Our construction is based on the techniques of~\cite{rosenbaum2013c} combined with Hall's theory of Sylow bases~\cite{hall1938a}.

For purposes of comparison, we review the algorithm of~\cite{rosenbaum2013c}.  We say that two composition series are isomorphic if there is an isomorphism that sends each subgroup in the first series to the corresponding subgroup in the second series.  The algorithm consists of two main steps:
\begin{enumerate}[1)]
\item an $n^{(1 / 2) \log_p n + O(1)}$ time Turing reduction from group isomorphism to composition-series isomorphism and
\item an algorithm for testing $p$-group composition series isomorphism in $n^{O(p)}$ time.
\end{enumerate}
Step (1) follows by bounding the number of composition series.  For step (2), we construct rooted trees whose levels represent the factors in the composition series; the multiplication table is then encoded by attaching gadgets to the leaves.  Since the orders of the composition factors bound the number of children at the corresponding levels of the tree and each leaf is connected to a constant number of gadgets, the resulting graph has degree at most $p + O(1)$.  This yields a polynomial-time Karp reduction from composition-series isomorphism to low-degree graph isomorphism.  Combining this with an $n^{O(d)}$ time algorithm~\cite{luks1982a,babai1983a,babai1983b} for testing isomorphism of graphs of degree at most $d$ yields an $n^{O(p)}$ time algorithm for $p$-group composition-series isomorphism as claimed in step (2).

Combining steps (1) and (2) yields an $n^{(1 / 2) \log_p n + O(p)}$ algorithm for $p$-groups (we will refer to this as the graph-isomorphism component of the $p$-group algorithm).  This algorithm is faster than generator-enumeration when $p$ is small and slower when it is large.  (We consider a prime small if it is at most $\alpha = \log n / \log \log n$ and large if it is greater than $\alpha$.)  By choosing between these two algorithms according to the value of $p$, we obtain an $n^{(1 / 2) \log_p n + O(\log n / \log \log n)}$ time algorithm; this gives a square root speedup over generator enumeration regardless of the value of $p$.

Our main result leverages Hall's theory of Sylow bases~\cite{hall1938a} to extend this algorithm to solvable groups.

\begin{restatable}{theorem}{soliso}
  \label{thm:sol-iso}
  Solvable-group isomorphism is decidable in $n^{(1 / 2) \log_p n + O(\log n / \log \log n)}$ deterministic time.
\end{restatable}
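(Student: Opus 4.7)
The starting point is that generator enumeration already meets the target whenever the smallest prime $p$ dividing $\abs{G}$ is large: if $p > \alpha \coleq \log n/\log\log n$, then $\log_p n = O(\alpha)$, so the $n^{\log_p n + O(1)}$ generator-enumeration algorithm runs in $n^{O(\log n/\log\log n)}$ time, already inside the target. So the actual work is the regime $p \le \alpha$, where the aim is to parallel the two-step structure of~\cite{rosenbaum2013c}: a Turing reduction from solvable-group isomorphism to a structured ``augmented composition-series isomorphism'' problem with at most $n^{(1/2)\log_p n + O(1)}$ instances, plus an $n^{O(\alpha)}$-time algorithm for the structured problem.

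For the reduction I would define an \emph{augmented composition series} of a solvable group $G$ using Hall's theory of Sylow bases~\cite{hall1938a} as follows. Fix a Sylow basis $\{P_{p_1},\ldots,P_{p_k}\}$ of $G$; because Sylow subgroups in a Sylow basis pairwise commute, the partial products $P_{p_1}\cdots P_{p_i}$ form an ascending subgroup chain of $G$ whose $i$-th factor is isomorphic to $P_{p_i}$. Refining this chain by a composition series inside each $P_{p_i}$ yields a composition series of $G$ in which every factor has \emph{prime} order $p_i$---the ``small-factor'' feature needed by the graph encoding, available even when $p_i$ itself is large. The number of augmented series is at most (number of Sylow bases) $\times \prod_i(\text{composition series of }P_{p_i})$. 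All Sylow bases are conjugate by Hall, so the first factor is $\le n$. The $p$-group counting from~\cite{rosenbaum2013c} bounds the $i$-th factor by $\abs{P_{p_i}}^{(1/2)\log_{p_i}\abs{P_{p_i}}}$; writing $e_i = \log_{p_i}\abs{P_{p_i}}$ and using $e_i \le \log_p n$ (since every $p_i \ge p$), one gets $\sum_i e_i^2\log p_i \le (\log_p n)(\log n)$, so the product is at most $n^{(1/2)\log_p n + O(1)}$. Fixing one augmented series of $G_1$ and enumerating all augmented series of $G_2$ then reduces solvable-group isomorphism to that many instances of augmented-series isomorphism.

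The structured problem in turn is solved in $n^{O(\alpha)}$ time by a hybrid on the prime sizes. Split the primes dividing $\abs{G}$ into $S = \{p_i \le \alpha\}$ and $L = \{p_i > \alpha\}$. For the $S$-part of the augmented series I would apply the rooted-tree-with-gadgets construction of~\cite{rosenbaum2013c} level by level inside each $P_{p_i}$ with $p_i \in S$: each such level has branching $\le p_i \le \alpha$, so the resulting graph has degree $O(\alpha)$ and Luks' algorithm~\cite{luks1982a} decides isomorphism of the $S$-gadget in $n^{O(\alpha)}$. For the $L$-part, the total number of composition factors across large Sylow subgroups satisfies $\sum_{p_i \in L} e_i \le \log_\alpha n = O(\alpha)$, because each large $p_i$ contributes at least $\log \alpha$ to $\log\abs{G}$ per level; this gives a generating set of size $O(\alpha)$ for the Hall $L$-subgroup of $G_2$, whose $n^{O(\alpha)}$ possible images I can enumerate, discarding any choice inconsistent with the fixed augmented series and folding the surviving partial map into the $S$-graph sent to Luks. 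Each augmented-series-isomorphism instance is thus handled in $n^{O(\alpha)}$ time, and multiplying by the $n^{(1/2)\log_p n + O(1)}$ enumerated instances gives the theorem.

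The main obstacle will be the definition in the reduction step: the augmented object must simultaneously be rigid enough that the count is $n^{(1/2)\log_p n + O(1)}$ (which uses Hall conjugacy to collapse what would otherwise be a wild proliferation of Sylow bases) and canonical enough that isomorphic solvable groups always admit a matching augmented series, so that enumerate-and-test is complete. Once the right notion is pinned down, the counting inequality on $\sum e_i^2\log p_i$ and the hybrid graph-plus-enumeration tester are both direct extensions of the $p$-group ingredients from~\cite{rosenbaum2013c} to the Sylow-basis setting.
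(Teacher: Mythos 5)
Your plan follows essentially the same route as the paper: split the primes at the threshold $\alpha=\log n/\log\log n$ using a Hall Sylow basis, use Hall conjugacy to make the enumeration of the large-vs-small split cost only a factor $n$, pay $n^{(1/2)\log_p n+O(1)}$ to enumerate composition series, handle the large-prime part by enumerating generator images (Wagner's trick, $n^{\log_\alpha n}$) and the small-prime part by a coset-tree-plus-multiplication-gadget graph of degree $O(\alpha)$ fed to the Babai--Luks bounded-degree machinery.

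One concrete step in your reduction is wrong as stated, and it is exactly the point you flag as the main obstacle. You claim that the partial products $P_{p_1}\cdots P_{p_i}$ of a Sylow basis, refined by a composition series inside each $P_{p_i}$, yield ``a composition series of $G$.'' In general they do not: a solvable group need not have a Sylow tower (e.g.\ $S_4$ has no normal Sylow subgroup), so the partial products are not subnormal in $G$ and the ``factors'' are not quotient groups; worse, the refinement terms $P_{p_1}\cdots P_{p_{i-1}}K_j$ for a proper $K_j<P_{p_i}$ need not even be subgroups of $G$, since only the full members of a Sylow basis are guaranteed to permute. So the augmented object, taken literally as a subgroup chain through $G$, is not well defined. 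The paper avoids this by never forming a series through $G$ at all: it keeps the pair $(P_1,S_2)$, where $P_1$ is the large-prime Hall subgroup (no series chosen for it) and $S_2$ is a composition series of the small-prime Hall subgroup $P_2$ only, and the tree is the leaf product $T(P_1,\bmg)\leafprod T(S_2)$, whose lower levels are translates $x_1\,(yP_{2,i})$ inside each coset $x_1P_2$ --- a nested partition of $G$ that needs no normality and no intermediate subgroups of $G$. Your counting and your hybrid tester survive this correction essentially unchanged (the enumeration is over Sylow bases and per-Sylow composition series, which is isomorphism-invariant by Hall conjugacy), but the object you encode into the graph has to be this pair/partition structure, not a composition series of $G$.
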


The algorithm for solvable groups follows the same framework but is more complicated.  The main conceptual challenge is because solvable groups can have composition factors of large order as well as other composition factors of small order.  This is problematic since both generator enumeration and the graph-isomorphism based $p$-group algorithm just described will take roughly $n^{\log_p n}$ time for a group that has many small composition factors and one large composition factor.

In order to overcome this obstacle, we need a way to (in effect) apply the graph-isomorphism component of the $p$-group algorithm to the part of the group that corresponds to the small prime factors while applying the generator-enumeration algorithm to the part of the group that corresponds to large prime factors.  Since these two parts of a solvable group do not form a direct product decomposition, we need a way of actually combining these two algorithms since we cannot separate the group into independent parts and run the algorithms separately.

Wagner~\cite{wagner2012a} gave a method for reducing the degree of the graph by restricting the isomorphism to be fixed on the quotient of $G$ by a subgroup $G_i$ in the composition series.  If there is a subgroup $G_i$ in the composition series whose prime divisors are all large, then the number of ways of fixing the isomorphism on the quotient $G / G_i$ is relatively small so we can test isomorphism of the composition series.  Thus, we could handle large composition factors if we had a way of moving all the large primes to the top of the composition series.

Since it is not clear that there is always a composition series with all the large primes at the top, we use a different structure.  The key idea in our algorithm for solvable-group isomorphism is to use Sylow bases to separate the large and small prime divisors \footnote{We thank Laci Babai for suggesting this simplification.  An earlier version of this work broke $G$ into many factors which made it more complicated.} (according to the threshold $\alpha = \log n / \log \log n$) into subgroups $P_1$ and $P_2$ of $G$ such that $G = P_1 P_2$.  We call the pair $(P_1, P_2)$ an \emph{$\alpha$-decomposition} for $G$ and define it formally later.  We also let $(Q_1, Q_2)$ be an $\alpha$-decomposition for $H$.  The correctness of this step is guaranteed by the following lemma which follows easily from Hall's theorems~\cite{hall1938a}.


\begin{restatable}{lemma}{solred}
  \label{lem:sol-red}
  For any $\alpha$, solvable-group isomorphism is deterministic polynomial-time Turing-reducible to testing isomorphism of $\alpha$-decompositions of the group.
\end{restatable}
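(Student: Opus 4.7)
The plan is to base the reduction on Hall's theorem for solvable groups~\cite{hall1938a}, which supplies both the existence of $\alpha$-decompositions and a polynomial bound on their number. Writing $\pi = \{p \le \alpha : p \mid |G|\}$ and $\pi' = \{p > \alpha : p \mid |G|\}$, Hall's existence theorem guarantees a Hall $\pi$-subgroup $P_1 \le G$ and a Hall $\pi'$-subgroup $P_2 \le G$. Because $|P_1|$ and $|P_2|$ are coprime with product $|G|$, I get $P_1 \cap P_2 = \{1\}$, and hence $|P_1 P_2| = |P_1| \cdot |P_2| = |G|$, so $G = P_1 P_2$. Thus $(P_1, P_2)$ is an $\alpha$-decomposition. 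The same calculation shows that any Hall $\pi$-subgroup of $H$ paired with any Hall $\pi'$-subgroup of $H$ is automatically an $\alpha$-decomposition, a fact I will use when enumerating.

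The reduction itself proceeds as follows. On input $(G, H)$, I first reject if $|G| \ne |H|$; otherwise I compute a fixed $\alpha$-decomposition $(P_1, P_2)$ of $G$, which is doable in $n^{O(1)}$ time for Cayley-table input by standard algorithms for Hall subgroups of solvable groups. Next I enumerate all pairs $(Q_1, Q_2)$ where $Q_1$ is a Hall $\pi$-subgroup and $Q_2$ is a Hall $\pi'$-subgroup of $H$; by Hall's conjugacy theorem the number of Hall $\pi$-subgroups of $H$ divides $|H| \le n$ (and likewise for $\pi'$), so at most $n^2$ pairs arise. For each pair I call the oracle on $((G; P_1, P_2), (H; Q_1, Q_2))$, and accept iff some call accepts. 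This is a deterministic polynomial-time Turing reduction.

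Correctness is a two-line check. If some oracle call accepts, it supplies a group isomorphism $G \to H$, so $G \cong H$. Conversely, any isomorphism $\phi: G \to H$ sends $P_1$ to a Hall $\pi$-subgroup and $P_2$ to a Hall $\pi'$-subgroup of $H$ (since $\phi$ preserves subgroup orders), so the pair $(\phi(P_1), \phi(P_2))$ appears in the enumeration and the corresponding oracle call succeeds. The only nontrivial ingredient is Hall's theorem, which is doing all three jobs at once: providing existence of $\alpha$-decompositions, conjugacy so that enumeration is polynomial, and completeness so that some enumerated pair is the image of $(P_1, P_2)$ under a putative isomorphism.
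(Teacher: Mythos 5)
Your reduction is correct and is in substance the paper's own argument: both rest on Hall's theory for solvable groups, on polynomial-time computability of Hall subgroups/Sylow bases (the paper cites Kantor--Taylor for this), and on enumerating conjugate decompositions of $H$ before calling the oracle. The difference is packaging. The paper computes a full Sylow basis of each group, forms the two parts of the decomposition as products of the basis elements corresponding to large resp.\ small primes, and enumerates the at most $n$ conjugates of the Sylow basis of $H$, invoking Hall's conjugacy theorem for Sylow bases (Theorem~\ref{thm:sol-conj}); you instead take a Hall $\pi$- and a Hall $\pi'$-subgroup directly and enumerate the two conjugacy classes independently, giving at most $n^2$ candidate pairs and invoking conjugacy of Hall subgroups. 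Your version is marginally more elementary (no Sylow bases needed for this lemma) at the cost of $n^2$ rather than $n$ oracle calls, which is immaterial for a polynomial-time Turing reduction; your observation that coprimality forces $Q_1\cap Q_2=1$ and hence $Q_1Q_2=H$ as sets correctly shows every enumerated pair is a legitimate oracle instance. Two small points to tidy: your labels are swapped relative to Definition~\ref{defn:alpha-decomp}, where $P_1$ is the part whose prime divisors exceed $\alpha$; and the enumeration of all Hall $\pi$-subgroups of $H$ should be made explicit, namely compute one such subgroup in polynomial time and list its at most $n$ conjugates, with Hall's conjugacy theorem guaranteeing that this list is exhaustive.
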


 We then choose a composition series $S_2$ for $P_2$ and a composition series $S_2'$ for $Q_2$.  There is no need to choose composition series for $P_1$ and $Q_1$ since we plan to apply Wagner's degree reduction trick to these subgroups.  We call the pairs $(P_1, S_2)$ and $(Q_1, S_2')$ \emph{$\alpha$-composition pairs} for $G$ and $H$.  We say that $(P_1, S_2)$ is isomorphic to $(Q_1, S_2')$ if there is an isomorphism from $G$ to $H$ that restricts to isomorphisms from $P_1$ to $Q_1$ and $S_2$ to $S_2'$.  By enumerating all possible composition series as in the case for $p$-groups, we can reduce the problem to $\alpha$-composition pair isomorphism.

\begin{restatable}{lemma}{alphared}
  \label{lem:alpha-red}
  Testing isomorphism of the $\alpha$-decompositions $(P_1, P_2)$ and $(Q_1, Q_2)$ of the groups $G$ and $H$ is $n^{(1 / 2) \log_p n + O(1)}$ deterministic time Turing reducible to testing isomorphism of $\alpha$-composition pairs for $(P_1, P_2)$ and $(Q_1, Q_2)$ where $p$ is the smallest prime dividing the order of the group.
\end{restatable}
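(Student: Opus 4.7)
My plan is to enumerate composition series of $Q_2$ and query the $\alpha$-composition pair oracle on each. First I would fix once and for all any single composition series $S_2$ of $P_2$, which can be constructed in polynomial time. The crucial observation is that if $\phi \colon G \to H$ is an isomorphism of the $\alpha$-decompositions, then $\phi$ restricts to an isomorphism $P_2 \to Q_2$ that carries $S_2$ to a composition series $\phi(S_2)$ of $Q_2$, so $(P_1, S_2)$ and $(Q_1, \phi(S_2))$ are isomorphic as $\alpha$-composition pairs. Conversely, any isomorphism of the $\alpha$-composition pairs $(P_1, S_2) \to (Q_1, S_2')$ carries $P_1 \to Q_1$ and, by reading off the top term of each series, $P_2 \to Q_2$, hence is also an isomorphism of the underlying $\alpha$-decompositions. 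Thus the two $\alpha$-decompositions are isomorphic if and only if some composition series $S_2'$ of $Q_2$ makes the oracle accept on $((P_1, S_2), (Q_1, S_2'))$.

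Given this equivalence, the reduction algorithm simply enumerates all composition series $S_2'$ of $Q_2$ by depth-first search (at each step choosing a normal subgroup of prime index in the current group, with polynomial overhead per branch) and runs the oracle on each. To bound the runtime I need to count the composition series of $Q_2$. At any group of order $m$, the number of normal subgroups of prime index $q$ is at most $(q^r - 1)/(q-1) \leq 2 q^{r-1}$, where $r$ is the $q$-rank of the abelianization modulo $q$. Grouping steps by the prime used and exploiting that the remaining $p$-part of the group order strictly decreases at each $p$-step, the total exponent of each prime $p$ in the resulting product bound is at most $a_p(a_p - 1)/2$, where $p^{a_p}$ is the $p$-part of $|Q_2|$. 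A Cauchy--Schwarz-style estimate then yields $\sum_p a_p^2 \log p \leq (\max_p a_p) \cdot \log |Q_2| \leq \log_p |Q_2| \cdot \log |Q_2|$, so the number of composition series of $Q_2$ is at most $|Q_2|^{(1/2) \log_p |Q_2| + O(1)} \leq n^{(1/2) \log_p n + O(1)}$, where $p$ is the smallest prime dividing $|G|$ (the degenerate case $|P_2| = 1$ requires only a single oracle call).

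I expect the main obstacle to be the composition-series counting argument in the solvable setting: unlike the $p$-group case in~\cite{rosenbaum2013c} where all composition factors share a single prime, here one must track rank contributions across multiple primes simultaneously, and in particular show that using the single smallest prime $p$ suffices to control the overall exponent. Once this count is in hand, the enumeration and per-branch work are routine, and multiplying through yields the claimed $n^{(1/2)\log_p n + O(1)}$ bound on both the number of oracle queries and the total running time.
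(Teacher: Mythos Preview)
Your reduction is exactly the one in the paper: fix one composition series $S_2$ for $P_2$, enumerate all composition series $S_2'$ for $Q_2$, and query the oracle on each pair; the correctness argument you give matches the paper's verbatim.

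The only substantive difference is in how you bound the number of composition series of $Q_2$. You track, prime by prime, the $\mathbb{F}_q$-rank of the abelianization at each step and then assemble the bounds via the inequality $\sum_q a_q^2 \log q \le (\max_q a_q)\sum_q a_q\log q \le (\log_p|Q_2|)\log|Q_2|$. This is correct, but it is more work than needed. The paper instead invokes its \lemref{comp-bound}, which counts \emph{all} maximal subgroup chains (not just composition series) by a one-line generator-enumeration argument: the $(k{+}1)$st subgroup in a chain is determined by adjoining a single coset representative to $G_k$, giving at most $|G|/|G_k|\le n/p^k$ choices, so the total is at most $\prod_{k=0}^{\lfloor\log_p n\rfloor-1}(n/p^k)=n^{(1/2)\log_p n+O(1)}$. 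This avoids entirely the multi-prime bookkeeping you flagged as the main obstacle, works for arbitrary (not just solvable) groups, and yields the enumeration procedure for free. Your rank-based count is a legitimate alternative and even gives slightly sharper constants in some regimes, but the paper's argument is the more economical route here.
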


It remains to show how to test if two $\alpha$-composition pairs are isomorphic.  Solving this problem is the main challenge in generalizing the $p$-group algorithm to solvable groups.  As before, we accomplish this by constructing a graph.  However, now our graph for $G$ must represent both the decomposition $G = P_1 P_2$ and the composition series $S_2$.  We start by constructing a tree; the top of the tree corresponds to the subgroup $P_1$ while the bottom corresponds to $S_2$.  The degree of the top part of the tree is reduced to a constant using Wagner's trick at the cost of a factor of $n^{\alpha + O(1)}$.  Extra gadgets are used to require any isomorphism to respect the decomposition $G = P_1 P_2$.  The multiplication table is represented by attaching gadgets to the leaves in the same way as before.  The result is a graph that has degree $\alpha + O(1)$ and represents the isomorphism class of the $\alpha$-composition pair $(P_1, S_2)$.  Combining with the $n^{O(d)}$ time algorithm~\cite{luks1982a,babai1983a,babai1983b} for testing isomorphism of graphs of degree at most $d$ completes the proof of \thmref{sol-iso}.

As in the case of $p$-groups, we extend our algorithm for solvable-group isomorphism to compute canonical forms of solvable groups within the same amount of time\footnote{In a follow-up paper~\cite{rosenbaum2013b}, we show how to combine this canonization algorithm with a general collision detection framework to reduce the $1 / 2$ in the exponent of \thmref{sol-iso} to $1 / 4$.}.

In \secref{alpha-red}, we reduce solvable-group isomorphism to $\alpha$-decomposition isomorphism and from $\alpha$-decomposition isomorphism to $\alpha$-composition pair isomorphism.  In \secref{graph-red}, we present the reduction from $\alpha$-composition pair isomorphism to low-degree graph isomorphism.  In \secref{sol-algorithms}, we derive our algorithms for solvable-group isomorphism.  

\section{Reducing solvable-group isomorphism to $\alpha$-composition pair isomorphism}
\label{sec:alpha-red}
In this section, we define the notions of $\alpha$-decompositions and $\alpha$-composition pairs and show Turing reductions from solvable-group isomorphism to $\alpha$-decomposition isomorphism and from $\alpha$-decomposition isomorphism to $\alpha$-composition isomorphism.  The first reduction can be done in polynomial time using Hall's theorems~\cite{hall1938a} while the second follows by counting the number of composition series.

From now on, we assume for convenience that the groups $G$ and $H$ have the same order; if this is not the case, then $G$ and $H$ are not isomorphic.  We let $\alpha$ be a parameter that we will later set to $\log n / \log \log n$.  We start with the definition of an $\alpha$-decomposition.

\begin{definition}
  \label{defn:alpha-decomp}
  Let $G$ be a group.  An $\alpha$-decomposition of $G$ is a pair of subgroups $(P_1, P_2)$ such that\inshortorlong{
    \begin{enumerate*}
    \item $G = P_1 P_2$,
    \item every prime dividing $\abs{P_1}$ is greater than $\alpha$ and
    \item every prime dividing $\abs{P_2}$ is at most $\alpha$\inshort{.}
    \end{enumerate*}}{
    \begin{enumerate}
    \item $G = P_1 P_2$,
    \item every prime dividing $\abs{P_1}$ is greater than $\alpha$ and
    \item every prime dividing $\abs{P_2}$ is at most $\alpha$
    \end{enumerate}}
\end{definition}

We say that the $\alpha$-decompositions $(P_1, P_2)$ and $(Q_1, Q_2)$ for the groups $G$ and $H$ are isomorphic if there is an isomorphism $\phi : G \ra H$ such that $\phi[P_i] = Q_i$ for each $i$.  In order to reduce solvable-group isomorphism to $\alpha$-decomposition isomorphism, we now recall two of Hall's theorems.  First, we need to define a Sylow basis.

\begin{definition}
  Let $G$ be a group whose order has the prime factorization $n = \prod_{i = 1}^\ell p_i^{e_i}$.  A Sylow basis for $G$ is a set $\setb{P_i'}{1 \leq i \leq \ell}$ where each $P_i'$ is a Sylow $p_i$-subgroup of $G$ and $P_i' P_j' = P_j' P_i'$ for all $i$ and $j$.
\end{definition}

In a Sylow basis $\setb{P_i'}{1 \leq i \leq \ell}$, we will always assume that each $P_i'$ is a Sylow $p_i$-subgroup of $G$.  We say that the Sylow bases $\setb{P_i}{1 \leq i \leq \ell}$ of $G$ and $\setb{Q_i}{1 \leq i \leq \ell}$ of $H$ are isomorphic if there exists an isomorphism $\phi : G \ra H$ such that $\phi[P_i] = Q_i$ for all $i$.  It is easy to construct an $\alpha$-decomposition from a Sylow basis by letting $P_1$ be the product of the Sylow subgroups that correspond to primes that are \emph{greater} than $\alpha$ and letting $P_2$ be the product of the Sylow subgroups that correspond to primes that are \emph{less} than $\alpha$.  

The following theorem\intoct{ due to Hall~\cite{hall1938a} (cf.~\cite{robinson1996a})} is useful for proving that the reduction from solvable-group isomorphism to $\alpha$-decomposition isomorphism takes polynomial time.

\intoctornot{
\begin{theorem}
  \label{thm:sol-syl}
  A group $G$ is solvable \ifft it has a Sylow basis.
\end{theorem}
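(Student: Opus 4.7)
The plan is to prove both directions of this classical result of Hall, using as a black box his companion theorem that a finite group is solvable if and only if, for every prime $p$ dividing $\abs{G}$, there exists a Sylow $p$-complement in $G$ (a subgroup of order $\abs{G}/p^{e}$, where $p^{e}$ is the exact power of $p$ dividing $\abs{G}$).

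For the forward direction, I would start from a solvable $G$ with $\abs{G} = \prod_{i=1}^\ell p_i^{e_i}$ and apply Hall's theorem to obtain a Sylow $p_i$-complement $K_i$ of order $\abs{G}/p_i^{e_i}$ for each $i$. Setting $P_i = \bigcap_{j \neq i} K_j$, the Poincar\'e index inequality gives $[G : P_i] \leq \prod_{j \neq i} [G : K_j] = \prod_{j \neq i} p_j^{e_j}$, so $p_i^{e_i}$ divides $\abs{P_i}$; on the other hand the containment $P_i \subseteq K_j$ for each $j \neq i$ rules out every prime other than $p_i$ from $\abs{P_i}$, forcing $\abs{P_i} = p_i^{e_i}$ and so $P_i$ is a Sylow $p_i$-subgroup. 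To verify $P_i P_j = P_j P_i$, I would consider $L = \bigcap_{k \neq i,j} K_k$, which by the same counting has order $p_i^{e_i} p_j^{e_j}$ and contains both $P_i$ and $P_j$; since $\abs{P_i P_j} = \abs{P_i}\abs{P_j}$ by coprimality of orders, the set $P_i P_j$ fills $L$, so $P_i P_j = L = P_j P_i$.

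For the reverse direction, given a Sylow basis $\{P_1, \ldots, P_\ell\}$ of $G$, the permuting property ensures that each product $K_i = \prod_{j \neq i} P_j$, taken in any order, is a subgroup of order $\abs{G}/p_i^{e_i}$ and hence a Sylow $p_i$-complement. Applying the converse direction of Hall's characterization then yields solvability of $G$.

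The main obstacle is this reverse direction, which rests on the nontrivial half of Hall's solvability criterion; the forward direction is a straightforward counting construction from the complements and can be made essentially algorithmic. Without citing Hall's criterion, the reverse direction would require a substantially longer inductive argument on $\abs{G}$, exploiting the fact that for any subset $S$ of indices the product $\prod_{j \in S} P_j$ is itself a subgroup inheriting a Sylow basis, together with a careful analysis of minimal normal subgroups to rule out nonabelian chief factors.
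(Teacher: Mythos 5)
Your argument is correct, but note that the paper does not prove this statement at all: it is quoted as Hall's classical theorem with a citation to~\cite{hall1938a} (cf.~\cite{robinson1996a}), so there is no in-paper proof to match. What you give is the standard textbook derivation from Hall's other classical criterion (existence of a Sylow $p$-complement, i.e.\ a Hall $p'$-subgroup, for every prime $p$ dividing $\abs{G}$): in the forward direction the intersections $P_i = \bigcap_{j \neq i} K_j$ are Sylow $p_i$-subgroups and pairwise permute because $P_i$ and $P_j$ both sit inside $L = \bigcap_{k \neq i, j} K_k$ of order $p_i^{e_i} p_j^{e_j}$, which $P_i P_j$ fills by coprimality; in the reverse direction the pairwise-permuting Sylow subgroups assemble into Sylow complements and the converse of Hall's criterion gives solvability. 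Both steps are sound; the only loose phrasing is the claim that the Poincar\'e bound $[G : P_i] \leq \prod_{j \neq i} p_j^{e_j}$ ``gives $p_i^{e_i}$ divides $\abs{P_i}$'' --- it directly gives only $\abs{P_i} \geq p_i^{e_i}$, and divisibility (hence equality) follows only after you have also ruled out the other primes via $P_i \subseteq K_j$, so the two observations should be combined rather than read sequentially. Be aware that the hard content of the converse (which ultimately rests on Burnside's $p^a q^b$ theorem) is still outsourced to Hall's $p$-complement theorem as a black box, so relative to the paper you have not made the result more self-contained, only traded one citation to Hall for another; that is a perfectly reasonable trade, and your forward construction has the added virtue of being effectively computable, which is in the spirit of how the paper uses Sylow bases algorithmically (via~\cite{kantor1988a}).
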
}{
\begin{theorem}[Hall~\cite{hall1938a}, cf.~\cite{robinson1996a}]
  \label{thm:sol-syl}
  A group $G$ is solvable \ifft it has a Sylow basis.
\end{theorem}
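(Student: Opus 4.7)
The plan is to prove both implications by invoking Hall's classical theorems on the existence of Hall subgroups in solvable groups; the theorem is essentially the combination of these results, so I will cite them and show how to assemble the Sylow basis.

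For the forward direction, suppose $G$ is solvable with $|G| = \prod_{i=1}^\ell p_i^{e_i}$. Hall's existence theorem gives, for each $i$, a \emph{Sylow $p_i$-complement} $H_i \leq G$ of order $|G|/p_i^{e_i}$ (i.e., a Hall $\pi_i$-subgroup for $\pi_i$ the complementary set of primes). I would then define
\[
  P_i \coleq \bigcap_{j \neq i} H_j
\]
and verify by a standard index/counting argument (the indices $[G : H_j]$ are pairwise coprime for $j \neq i$, so the intersection has index $\prod_{j \neq i}[G:H_j] = |G|/p_i^{e_i}$) that $|P_i| = p_i^{e_i}$, i.e., $P_i$ is a Sylow $p_i$-subgroup. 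To establish the permutability $P_i P_j = P_j P_i$, the key observation is that $P_i P_j \subseteq H_k$ for every $k \neq i,j$, so
\[
  P_i P_j \;\subseteq\; \bigcap_{k \neq i,j} H_k.
\]
Since the right-hand side is a subgroup whose order (by the same index argument) is exactly $p_i^{e_i} p_j^{e_j} = |P_i||P_j|/|P_i \cap P_j|$, the inclusion is an equality and $P_i P_j$ is itself a subgroup, which forces $P_i P_j = P_j P_i$. Thus $\{P_i\}$ is a Sylow basis.

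For the reverse direction, suppose $G$ admits a Sylow basis $\{P_i\}$. I would show by induction on $k$ that any partial product $P_{i_1} P_{i_2} \cdots P_{i_k}$ is a subgroup (using pairwise permutability to rearrange factors), whence $H_i \coleq \prod_{j \neq i} P_j$ is a Hall $p_i'$-subgroup of $G$. Thus $G$ possesses a Hall $p'$-subgroup for every prime $p \mid |G|$, and the converse half of Hall's theorem (which rests on Burnside's $p^a q^b$-theorem applied to minimal counterexamples) then yields that $G$ is solvable.

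The main obstacle is really the forward direction, which ultimately depends on Hall's existence theorem for Hall subgroups in solvable groups — a result whose own proof proceeds by induction on $|G|$ and a careful case analysis on a minimal normal subgroup of the solvable group. Since the statement in the paper is attributed directly to Hall and only used as a black box in the subsequent reductions, the cleanest route is to cite~\cite{hall1938a} (or the textbook treatment in~\cite{robinson1996a}) for both halves and present only the construction of $P_i = \bigcap_{j \neq i} H_j$ together with the permutability argument above.
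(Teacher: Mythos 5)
Your proposal is correct: the intersection-of-Sylow-complements construction $P_i = \bigcap_{j\neq i} H_j$, the coprime-index counting, the $P_iP_j = P_jP_i$ argument, and the converse via Hall's solvability criterion (resting on Burnside) are all the standard argument found in~\cite{robinson1996a}. This matches the paper's treatment in spirit, since the paper offers no proof at all --- it quotes the result from Hall~\cite{hall1938a} and uses it as a black box, exactly as you ultimately do for the two deep ingredients (existence of Hall subgroups in solvable groups, and solvability from the existence of all Sylow complements).
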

}

\intoctornot{Hall also showed~\cite{hall1938a} (cf.~\cite{robinson1996a}) that t}{T}wo Sylow bases $\setb{P_i'}{1 \leq i \leq \ell}$ and $\setb{Q_i'}{1 \leq i \leq \ell}$ of $G$ are conjugate if there exists $g \in G$ such that for all $i$, $P_i'^g = Q_i'$.

\intoctornot{
\begin{theorem}
  \label{thm:sol-conj}
  Any two Sylow bases of a solvable group are conjugate.
\end{theorem}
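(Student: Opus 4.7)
The plan is to prove \thmref{sol-conj} by strong induction on $\abs G$, combining the existence of Sylow bases (\thmref{sol-syl}) with a minimal-normal-subgroup reduction and the Schur--Zassenhaus lemma. The base case $\abs G = 1$ is trivial; for the inductive step, fix two Sylow bases $\mathcal B = \{P_1, \ldots, P_\ell\}$ and $\mathcal B' = \{Q_1, \ldots, Q_\ell\}$ of a nontrivial solvable group $G$ and let $N$ be a minimal normal subgroup. By solvability $N$ is an elementary abelian $p_k$-group for some index $k$, and since every normal $p_k$-subgroup is contained in every Sylow $p_k$-subgroup, $N \subseteq P_k \cap Q_k$.

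First I would pass to the quotient $G/N$. The images $\{\bar P_i\}$ and $\{\bar Q_i\}$ are Sylow bases of $G/N$, because pairwise permutability descends and each $\bar P_i$ is a Sylow $p_i$-subgroup of the quotient. By the inductive hypothesis there is $\bar g$ with $\bar P_i^{\bar g} = \bar Q_i$ for every $i$; lifting $\bar g$ to $g \in G$ and replacing $\mathcal B$ by $\mathcal B^g$, I may assume $P_i N = Q_i N$ for all $i$. For $i = k$ this already forces $P_k = Q_k$, since both subgroups contain $N$ and have order $\abs{P_k}$. Next, the Hall $p_k'$-subgroups $H = \prod_{i \neq k} P_i$ and $H' = \prod_{i \neq k} Q_i$ satisfy $HN = H'N$ by the equalities just established, and both are complements to $N$ in $HN$. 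The coprimality $\gcd(\abs H, \abs N) = 1$ lets Schur--Zassenhaus supply $n \in N$ with $H^n = H'$, and because $n \in N \subseteq P_k$ conjugation by $n$ preserves $P_k$. Replacing $\mathcal B$ by $\mathcal B^n$, I have arranged $P_k = Q_k$ and $H = H'$.

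The main obstacle is the final alignment of the Sylow subgroups indexed by $i \neq k$. At this stage $\{P_i\}_{i \neq k}$ and $\{Q_i\}_{i \neq k}$ are two Sylow bases of the solvable group $H$, which has strictly smaller order than $G$, so the inductive hypothesis provides $h \in H$ with $P_i^h = Q_i$ for every $i \neq k$. The delicate point is that $h$ must additionally preserve $P_k = Q_k$: an arbitrary $h \in H$ need not normalise $P_k$. I would resolve this by appealing to Hall's theory of system normalisers; $P_k^h$ and $P_k$ are both Sylow $p_k$-subgroups of $G$ that together with $\{Q_i\}_{i \neq k}$ extend to Sylow bases of $G$, and any two such extensions turn out to be conjugate by an element of the common normaliser $\bigcap_{i \neq k} N_G(Q_i)$. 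Composing this last conjugation with the earlier $g$, $n$, and $h$ produces a single element of $G$ mapping $\mathcal B$ to $\mathcal B'$ and closes the induction.
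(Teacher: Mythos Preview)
The paper does not prove \thmref{sol-conj} at all: it is quoted as Hall's classical 1938 result (with a pointer to Robinson's textbook) and used as a black box. So there is no ``paper's approach'' to compare against; you are supplying a proof where the authors deliberately chose to cite one.

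As for the argument itself, your reduction via a minimal normal subgroup $N$ and Schur--Zassenhaus is the standard opening, and everything through the point where you arrange $P_k = Q_k$ and $H = H'$ is correct. The difficulty you flag in the last paragraph is genuine, and your resolution is not adequate as written. After applying the inductive hypothesis inside $H$ you obtain $h \in H$ with $P_i^{\,h} = Q_i$ for $i \neq k$, and you then need $P_k^{\,h} = P_k$; nothing so far forces $h \in N_G(P_k)$. You propose to fix this by observing that $\{P_k^{\,h}\} \cup \{Q_i\}_{i \neq k}$ and $\{P_k\} \cup \{Q_i\}_{i \neq k}$ are both Sylow bases of $G$ and that ``any two such extensions turn out to be conjugate by an element of $\bigcap_{i \neq k} N_G(Q_i)$''. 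But that assertion is essentially a special case of the very theorem you are proving, and invoking ``Hall's theory of system normalisers'' at this point is circular: the basic facts about system normalisers (e.g.\ that they act transitively in the required way, or the index formula $[G : D] = \prod_p [G : N_G(G_{p'})]$) are typically established \emph{together with} the conjugacy of Sylow systems, not prior to it. To make the induction close you would need either to prove that transitivity statement from scratch, or to reorganise the argument (for instance, running the induction on complement systems $\{G_{p'}\}$ rather than on Sylow bases, where a counting argument comparing $\prod_p [G:N_G(G_{p'})]$ with the size of a single conjugacy class finishes the job).
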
}{
\begin{theorem}[Hall~\cite{hall1938a}, cf.~\cite{robinson1996a}]
  \label{thm:sol-conj}
  Any two Sylow bases of a solvable group are conjugate.
\end{theorem}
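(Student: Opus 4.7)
The plan is to reduce the claim to Hall's earlier theorem on the existence and conjugacy of Hall $\pi$-subgroups in solvable groups. I would first translate Sylow bases into \emph{Sylow complement systems}: given a Sylow basis $\{P_1, \ldots, P_\ell\}$, the products $H_i = \prod_{j \ne i} P_j$ are subgroups (since the $P_j$ pairwise permute) of order $\abs{G}/p_i^{e_i}$, hence Hall $p_i'$-subgroups of $G$; conversely, one recovers each $P_i$ as $\bigcap_{j \ne i} H_j$, and products of any sub-collection of the $H_j$ remain subgroups because they pairwise permute. This correspondence commutes with conjugation, so it suffices to prove that the two complement systems $\{H_i\}$ and $\{H_i'\}$ attached to the two given Sylow bases are \emph{simultaneously} conjugate in $G$.

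I would establish this by induction on the number $\ell$ of distinct prime divisors of $\abs{G}$. The base case $\ell = 1$ is trivial, as the only Sylow subgroup is $G$ itself. For the inductive step, Hall's conjugacy theorem produces $g_0 \in G$ with $H_1^{g_0} = H_1'$; after replacing the first system by its $g_0$-conjugate we may assume $H_1 = H_1'$. Inside the proper solvable subgroup $H_1$, the intersections $K_i = H_1 \cap H_i$ for $i \geq 2$ form a complement system of $H_1$ indexed by the primes $p_2, \ldots, p_\ell$ and arise from the Sylow basis $\{P_2, \ldots, P_\ell\}$ of $H_1$; similarly for $K_i' = H_1 \cap H_i'$. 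By the inductive hypothesis applied inside $H_1$, there is $h \in H_1$ with $K_i^h = K_i'$ for all $i \geq 2$. Setting $g = g_0 h$, I would then verify that $H_i^g = H_i'$ for every $i$, from which $P_i^g = Q_i$ follows by the bijection of the first paragraph.

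The main obstacle is this last verification: conjugation by $h$ manifestly fixes $H_1$ and acts correctly on subgroups of $H_1$, but the $H_i$ for $i \geq 2$ are not contained in $H_1$. The cleanest way I see is to pass back through the Sylow-basis picture, where $h$ maps the Sylow basis $\{P_2, \ldots, P_\ell\}$ of $H_1$ onto $\{Q_2, \ldots, Q_\ell\}$; since each $H_i = \prod_{j \ne i} P_j$ is the product of the Sylow $p_j$-subgroups for $j \ne i$, and these Sylow subgroups are now matched in a conjugation-equivariant way, $h$ sends $H_i$ to $H_i'$ as required. An alternative route, bypassing the complement-system translation entirely, is to induct on $\abs{G}$ using a minimal normal subgroup $N$ of $G$ (which is elementary abelian and lies in the unique $P_k$ for its prime): reduce modulo $N$, apply induction in $G/N$, and lift the conjugating element by a Schur--Zassenhaus-style argument inside the subgroups $N P_i$. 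Either route closes the induction, and the whole proof is essentially a careful bookkeeping exercise on top of \thmref{sol-syl} and Hall's conjugacy theorem for Hall $\pi$-subgroups.
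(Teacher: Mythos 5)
There is a genuine gap, and it sits exactly at the step you flagged as ``the main obstacle.'' (For context: the paper does not prove this statement at all; it quotes it as Hall's theorem with a pointer to Robinson. The textbook proof is a counting argument: any choice of one Hall $p_i'$-subgroup $H_i$ for each prime is automatically a complement system because the indices $\abs{G:H_i}$ are pairwise coprime, so the number of systems is $\prod_i \abs{G:\normalize_G(H_i)}$; since each $\abs{G:\normalize_G(H_i)}$ divides $p_i^{e_i}$, these indices are pairwise coprime and $\abs{G:\bigcap_i \normalize_G(H_i)} = \prod_i \abs{G:\normalize_G(H_i)}$, so a single conjugation orbit already exhausts all systems.)

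In your induction on $\ell$, after arranging $H_1 = H_1'$ you obtain $h \in H_1$ with $P_j^h = Q_j$ only for $j \geq 2$; nothing whatsoever controls $P_1^h$. Since $H_i = P_1 \prod_{j \neq 1, i} P_j$ for $i \geq 2$, the desired conclusion $H_i^h = H_i'$ would force $P_1^h \leq H_i'$, i.e.\ essentially $P_1^h = \bigcap_{j\ge 2} H_j' = Q_1$, and that is precisely what has not been shown. Your proposed repair (``these Sylow subgroups are now matched in a conjugation-equivariant way'') tacitly includes $P_1$ among the matched subgroups, so it is circular: the residual task --- find an element normalizing $P_2, \ldots, P_\ell$ that carries $P_1$ to $Q_1$ --- is the actual content of the theorem, and closing it leads you back to the coprime-index counting above. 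Your alternative sketch via a minimal normal subgroup $N$ has the same simultaneity defect as stated: Schur--Zassenhaus inside each $N P_i$ yields a different conjugator $n_i \in N$ for each $i$, not one element working for all $i$ at once. That route can be repaired, but only by conjugating the whole Hall $q'$-member $H_q = \prod_{p_i \neq q} P_i$ of the basis by a single $n \in N$ inside $N H_q$ (here $q$ is the prime of $N$, and $n \in N \leq P_q$ automatically fixes $P_q$), and then deducing $P_i^n = Q_i$ for $i$ with $p_i \neq q$ from the order computation $H_q'^{\phantom{.}} \cap Q_i N = Q_i$; as written, your proposal does not contain this step.
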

}

Notice that this implies that the group $G$ has at most $n$ Sylow bases.  We also require the ability to compute a Sylow basis of a solvable group.  This was shown by Kantor and Taylor~\cite{kantor1988a} in the setting of permutation groups so it also holds in our case where the group is specified by its Cayley table.

\intoctornot{
\begin{theorem}
  \label{lem:comp-syl-basis-poly}
    A Sylow basis of a solvable group can be computed deterministically in polynomial time.
\end{theorem}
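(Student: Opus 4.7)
The plan is to invoke Kantor and Taylor's polynomial-time algorithm~\cite{kantor1988a} for computing Sylow systems of solvable permutation groups, after first converting the Cayley-table input into a permutation-group presentation of the same abstract group. The natural vehicle is the left regular representation $\rho : G \hookrightarrow \sym(G)$ defined by $\rho(g)(x) = gx$, which is an isomorphism onto its image $\rho(G) \leq \sym(G)$.

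First I would build $\rho(G)$ directly from the Cayley table: reading the row (or column) indexed by $g$ yields the one-line form of the permutation $\rho(g)$ in time $O(n)$, so a complete set of $n$ generators for $\rho(G)$ can be written down in time $O(n^2)$. Next I would feed these generators to the Kantor-Taylor algorithm, which produces a Sylow basis $\setb{P_i'}{1 \leq i \leq \ell}$ of $\rho(G)$ in time polynomial in $n$, with each $P_i'$ represented by a generating set of permutations. Finally, I would pull each $P_i'$ back to a subgroup $P_i \leq G$ by applying the inverse $\rho^{-1}(\sigma) = \sigma(1_G)$ element by element; this preimage is read off from the Cayley table in constant time per permutation, and it suffices to apply $\rho^{-1}$ to the generators since $\rho^{-1}$ is a group homomorphism on $\rho(G)$.

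Because $\rho$ is an isomorphism, each resulting $P_i$ is a Sylow $p_i$-subgroup of $G$ of the correct order, and the pairwise permutability $P_i P_j = P_j P_i$ is inherited from the permutability $P_i' P_j' = P_j' P_i'$ guaranteed on the image side. Hence $\setb{P_i}{1 \leq i \leq \ell}$ is a Sylow basis of $G$, and the total running time is the sum of a polynomial-time reduction from the Cayley-table model to the permutation model, the Kantor-Taylor black box, and a polynomial-time pull-back, all of which are polynomial in $n$. I do not anticipate a substantive obstacle; the content of the argument is essentially a polynomial-time translation between the two input models bracketing the Kantor-Taylor algorithm.
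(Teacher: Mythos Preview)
Your proposal is correct and matches the paper's treatment. The paper does not give a proof of this statement at all: it simply cites Kantor and Taylor~\cite{kantor1988a} for the permutation-group case and remarks that ``it also holds in our case where the group is specified by its Cayley table,'' which is precisely the regular-representation reduction you have written out in detail.
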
}{
\begin{theorem}[Kantor and Taylor~\cite{kantor1988a}]
  \label{lem:comp-syl-basis-poly}
    A Sylow basis of a solvable group can be computed deterministically in polynomial time.
\end{theorem}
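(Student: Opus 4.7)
The plan is to deduce this from the Kantor--Taylor algorithm for permutation groups by converting the Cayley-table presentation into a permutation representation of the same degree. Given $G$ of order $n$, I would first form the left regular representation $\rho : G \hookrightarrow \sym(G)$ sending $g$ to the permutation $x \mapsto gx$. The full list $\{\rho(g) : g \in G\}$ can be read off from the Cayley table in $O(n^2)$ time, and the resulting permutation group has degree $n$ and is faithful by Cayley's theorem, so its subgroup lattice is canonically isomorphic to that of $G$.

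Next, I would invoke the Kantor--Taylor algorithm~\cite{kantor1988a} on this input. For a solvable permutation group of degree $n$ given by polynomially many generators, it produces a Sylow basis $\{P_1', \ldots, P_\ell'\}$ in time polynomial in $n$, with each $P_i'$ specified by a polynomial-size generating set of permutations. I would then pull each $P_i'$ back to a subgroup of $G$: because $\rho$ is the regular representation, every permutation $\rho(g)$ is determined by its value at the identity via $\rho(g)(e) = g$, so generators of $P_i'$ translate into elements of $G$ in linear time, and the subgroups themselves can be listed in polynomial time using the Cayley table. The Sylow-basis commuting condition $P_i' P_j' = P_j' P_i'$ transfers to $G$ because $\rho$ is an injective homomorphism, so the resulting collection is a Sylow basis of $G$, as required by \thmref{sol-syl}.

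The only point requiring care is confirming that the Kantor--Taylor running time remains polynomial when the input has $\Theta(n)$ generators acting on $n$ points; this is immediate from their analysis, and is if anything easier in our Cayley-table setting, where membership testing, order computation, and subgroup enumeration are all directly supported by the multiplication table. No genuine obstacle arises beyond this bookkeeping.
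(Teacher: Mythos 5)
Your proposal is correct and matches the paper's treatment: the paper simply cites Kantor and Taylor's permutation-group algorithm and notes it applies in the Cayley-table setting, which is exactly the regular-representation transfer you spell out. The paper gives no further proof, so your write-up just makes explicit the routine bookkeeping the paper leaves implicit.
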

}

Armed with these results, it is now easy to reduce solvable-group isomorphism to $\alpha$-decomposition isomorphism.  The following lemma from the introduction explains why our results are restricted to the class of solvable groups.  
\solred*

\begin{proof}
  Let $G$ and $H$ be solvable groups of order $n = \prod_{i = 1}^\ell p_i^{e_i}$.  We compute a Sylow basis $\setb{P_i'}{1 \leq i \leq \ell}$ for $G$.  Define $P_1 = \prod_{i : p_i > \alpha} P_i'$ and $P_2 = \prod_{i : p_i \leq \alpha} P_i'$; this is an $\alpha$-decomposition for $G$.  We compute a Sylow basis $\setb{Q_i'}{1 \leq i \leq \ell}$ for $H$ and consider all of its $n$ conjugates $\setb{Q_i'^h}{1 \leq i \leq \ell}$ where $h \in H$.  For each of these, we define $Q_1 = \prod_{i : p_i < \alpha} Q_i'$ and $Q_2 = \prod_{i : p_i \leq \alpha} Q_i')$ and test if the $\alpha$-decompositions $(P_1, P_2)$ and $(Q_1, Q_2)$ are isomorphic.  We claim that $G \cong H$ \ifft $(P_1, P_2)$ is isomorphic to one of the $(Q_1, Q_2)$ computed above.

  Clearly, if $G$ and $H$ are not isomorphic then no $\alpha$-decomposition of $G$ is isomorphic to an $\alpha$-decomposition of $H$.  If $\phi : G \ra H$ is an isomorphism, then $\setb{\phi[P_i']}{1 \leq i \leq \ell}$ is a Sylow basis for $H$.  By \thmref{sol-conj}, it is equal to some conjugate of $\setb{Q_i'}{1 \leq i \leq \ell}$.  Then

  \begin{equation*}
    (Q_1, Q_2) = (\prod_{i : p_i < \alpha} \phi[P_i'], \prod_{i : p_i \leq \alpha} \phi[P_i'])
  \end{equation*}
is an $\alpha$-decomposition for $H$ that is isomorphic to $(P_1, P_2)$ and our reduction will test if $(P_1, P_2)$ is isomorphic to $(Q_1, Q_2)$.
\end{proof}

Next, we reduce $\alpha$-decomposition isomorphism to $\alpha$-composition pair isomorphism.  First, we define the notion of an $\alpha$-composition pair.

\begin{definition}
  An $\alpha$-composition pair for an $\alpha$-decomposition $(P_1, P_2)$ of a solvable group $G$ is a pair $(P_1, S_2)$ where $S_2$ is a composition series for $P_2$.
\end{definition}

For convenience, we will sometimes say that $(P_1, S_2)$ is an $\alpha$-composition pair for $G$.  Let $(P_1, S_2)$ and  $(Q_1, S_2')$ be a $\alpha$-decompositions for $G$ and $H$.  Then $(P_1, S_2)$ and  $(Q_1, S_2')$ are isomorphic if there is an isomorphism $\phi$ from $(P_1, P_2)$ to $(Q_1, Q_2)$ which restricts to an isomorphism\footnote{Two composition series $G_0 = 1 \tril \cdots \tril G_m = G$ and $H_0 = 1 \tril \cdots \tril H_{m'} = H$ are isomorphic if there exists an isomorphism $\phi : G \ra H$ such that each $\phi[G_i] = H_i$.} from $S_2$ to $S_2'$.

The reduction from $\alpha$-decomposition isomorphism to $\alpha$-composition pair isomorphism, requires an upper bound on the number of composition series for a group and a way to enumerate all composition series.  This was shown in~\cite{rosenbaum2013c}; we repeat it here for the convenience of the reader.  The argument is based on a suggestion by Laci Babai; previously, we used a more complex argument to enumerate all composition series within a particular class.

\intoctornot{
\begin{lemma}
  \label{lem:comp-bound}
  \begin{sloppypar}
    Let $G$ be a group.  Then the number of composition series for $G$ is at most $n^{(1 / 2) \log_p n + O(1)}$ where $p$ is the smallest prime dividing the order of the group.  Moreover, one can enumerate all composition series for $G$ in $n^{(1 / 2) \log_p n + O(1)}$ time.
  \end{sloppypar}
\end{lemma}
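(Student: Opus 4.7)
The plan is to bound the number of composition series by encoding each one as a short generator tuple and counting. I focus on the solvable case (sufficient for this paper, since the groups to which the lemma is applied are all solvable), where each composition factor is cyclic of prime order and a single generator per step suffices.

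Given a composition series $1 = G_0 \triangleleft G_1 \triangleleft \cdots \triangleleft G_m = G$, I associate the tuple $(g_1, \ldots, g_m)$ where $g_i$ is, say, the lex-smallest element of $G_i \setminus G_{i-1}$. Because $|G_i/G_{i-1}|$ is prime, the coset $g_i G_{i-1}$ generates $G_i/G_{i-1}$, so $G_i = \langle g_1, \ldots, g_i\rangle$ and the series is uniquely recovered from the tuple. Since $g_i \in G_i$, the number of such tuples (for a fixed shape of factor orders) is at most $\prod_{i=1}^m |G_i|$; summing over the polynomially many shapes bounds the total number of composition series.

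The key estimate is $\prod_{i=1}^m |G_i| \leq n^{(1/2)\log_p n + O(1)}$. Setting $y_i = \log_p |G_i/G_{i-1}| \geq 1$ so that $\sum_i y_i = \log_p n$ and $\log_p |G_i| = \sum_{j=1}^i y_j$,
\begin{equation*}
\log_p \prod_{i=1}^m |G_i| \;=\; \sum_{i=1}^m \sum_{j=1}^{i} y_j \;=\; \sum_{j=1}^m (m - j + 1)\, y_j.
\end{equation*}
This is a linear functional in $y$ with coefficients $m, m-1, \ldots, 1$. A short calculation shows that, subject to $\sum y_j = \log_p n$, $y_j \geq 1$, and $m \leq \log_p n$, the maximum is attained at $m = \log_p n$ with every $y_j = 1$ (the $p$-group case), giving value $m(m+1)/2 = (1/2)(\log_p n)^2 + O(\log_p n)$.

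For enumeration, I perform a depth-first recursion: given a partial series ending at $G_{i-1}$, iterate over all overgroups $G_i \supsetneq G_{i-1}$ with $|G_i/G_{i-1}|$ prime. The standard count of at most $|H|/(q-1)$ order-$q$ cyclic subgroups in a group $H$, summed over primes $q$, yields at most $|G/G_{i-1}|$ such overgroups up to a polylog factor; applying the same product estimate to the branching factors shows the recursion tree has size $n^{(1/2)\log_p n + O(1)}$, with polynomial work per node. The main obstacle is justifying the linear-programming step rigorously (verifying the maximum lies at the boundary $m = \log_p n$, $y_j \equiv 1$); a secondary subtlety for the general non-solvable case is that non-abelian simple composition factors would require two generators per step, but by Feit--Thompson such factors occur only when $p = 2$, and each contributes at least $\log_2 60$ to $\log_2 n$, which tightens $m$ enough that an analogous optimization preserves the bound.
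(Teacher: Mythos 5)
Your overall strategy---build the series one step at a time, bound the number of choices per step, and then optimize the resulting product over the sequence of factor orders---is in the same spirit as the paper's proof, and your optimization step and final exponent are correct. The genuine gap is the central counting claim: you assert that, for a fixed sequence of factor orders, the number of encoding tuples is at most $\prod_{i}|G_i|$ ``since $g_i \in G_i$.'' This does not follow, because $G_i$ is only determined \emph{after} $g_i$ is chosen: given $G_{i-1}$, the admissible values of $g_i$ are the lex-smallest elements of $H\setminus G_{i-1}$ as $H$ ranges over all admissible overgroups, and distinct overgroups give distinct such elements, so the number of admissible $g_i$ equals the number of admissible overgroups and can vastly exceed $|G_i|$. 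Concretely, for $G=C_2^k$ the first step already has $n-1$ admissible choices of $g_1$ while $|G_1|=2$; and if one drops the lex-selection, the set of tuples with $|\langle g_1,\dots,g_i\rangle|=|G_i|$ for all $i$ has size about $\prod_{i=1}^k(2^k-2^{i-1})\approx n^{\log_2 n}$, far above $\prod_i 2^i$. So the inequality you invoke is really a bound on the number of admissible next subgroups per step, which is precisely what remained to be proved.

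The fix is the paper's coset observation: every element of a fixed coset $xG_{i-1}$ generates the same overgroup $\langle G_{i-1},x\rangle$, so the number of possible next subgroups is at most $[G:G_{i-1}]\le n/p^{i-1}$ (equivalently, for a fixed factor sequence the sets $G_i\setminus G_{i-1}$ over distinct choices of $G_i$ are disjoint). Plugging this per-step bound into your linear optimization (the weights become $j$ rather than $m-j+1$, with the same maximum $\tfrac{1}{2}\log_p^2 n+O(\log_p n)$) yields the claimed bound, and your enumeration paragraph already implicitly uses this correct bound, so the repair is half-present in your own write-up. Two secondary points: the paper proves the lemma for arbitrary groups by enumerating maximal chains of subgroups, in which every step is one-generated automatically, so neither solvability nor the Feit--Thompson patch is needed (solvability does suffice for this paper's application to $P_2$ and $Q_2$); and both the ``polynomially many shapes'' claim and the per-level polylog factor in your enumeration are unproven and, taken literally over up to $\log_2 n$ levels, would leak an extra $n^{\Theta(\log\log n)}$ factor beyond the stated $n^{(1/2)\log_p n+O(1)}$---the paper's formulation avoids fixing a factor sequence and incurs no such losses.
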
}{
\begin{lemma}[\cite{rosenbaum2013c}]
  \label{lem:comp-bound}
  \begin{sloppypar}
    Let $G$ be a group.  Then the number of composition series for $G$ is at most $n^{(1 / 2) \log_p n + O(1)}$ where $p$ is the smallest prime dividing the order of the group.  Moreover, one can enumerate all composition series for $G$ in $n^{(1 / 2) \log_p n + O(1)}$ time.
  \end{sloppypar}
\end{lemma}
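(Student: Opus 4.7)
My plan is to enumerate every composition series by successively choosing generating sets for each composition factor. For any composition series $1 = G_0 \tril G_1 \tril \cdots \tril G_m = G$, each factor $G_i / G_{i-1}$ is a finite simple group, hence is generated by at most $d_i \leq 2$ elements (one if cyclic of prime order, two otherwise). Lifting these generators to $g_{i,1},\ldots,g_{i,d_i} \in G_i$, the entire chain is recovered via $G_i = \langle g_{1,1},\ldots,g_{i,d_i}\rangle$. Since each lift may be chosen from $G_i$, the number of composition series is at most $\prod_{i=1}^m |G_i|^{d_i}$.

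To bound this product I would first dispatch the case where every composition factor is cyclic of prime order, which includes all solvable $G$ and, by Feit--Thompson, every $G$ whose smallest prime $p$ is odd. Then $d_i = 1$ throughout, so $\prod_i |G_i| = \prod_j k_j^{m-j+1}$ with $k_j \coleq |G_j|/|G_{j-1}|$. Maximizing $\sum_j (m-j+1)\log_p k_j$ subject to $\sum_j \log_p k_j = \log_p n$ and $\log_p k_j \geq 1$ is a one-line optimization whose maximum is $(1/2)(\log_p n)^2 + O(\log_p n)$, attained at $k_j = p$ for every $j$ and $m = \log_p n$. Exponentiating gives $\prod_i |G_i| \leq n^{(1/2)\log_p n + O(1)}$. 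In the remaining case $p = 2$ with at least one non-abelian simple composition factor, each such factor has order at least $60$; a case analysis shows that the doubled exponent $d_i = 2$ is offset by $\log_2 k_i \geq \log_2 60$, and that the extremal configuration remains the all-cyclic one, preserving the bound.

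For the enumeration algorithm I would proceed by a top-down recursion on chains. Given the currently constructed tail $G_i \supset \cdots \supset G_m = G$, enumerate all candidates for $G_{i-1}$ as kernels of surjections from $G_i$ onto some simple group $S$, parametrized by the images of a fixed generating set of $G_i$; for each of the polynomially many simple orders dividing $|G_i|$ this yields at most $|G_i|^{d_i}$ candidates, each of which can be tested for normality and simplicity of the quotient from the Cayley table in polynomial time. Since $G_i$ is fixed at the moment we extend the chain, the level-$i$ branching factor is $|G_i|^{d_i}$, and the total running time telescopes to $\prod_i |G_i|^{d_i}$ up to polynomial overhead, matching the counting bound.

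The main obstacle is obtaining this tight $n^{(1/2)\log_p n}$ exponent in the presence of non-abelian composition factors, since the naive bound $\prod_i |G_i|^2$ recovers only the generator-enumeration exponent $\log_p n$. The resolution combines Feit--Thompson (which forces $p = 2$ whenever non-abelian simple factors occur) with the order bound $k_i \geq 60$ on such factors; together these supply enough slack in the exponent budget $\sum_j \log_p k_j = \log_p n$ to absorb the extra generator contributed by each non-abelian factor and preserve the $(1/2)$ prefactor.
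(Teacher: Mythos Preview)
Your counting step has a gap. You assert that the number of composition series is at most $\prod_i |G_i|^{d_i}$ because ``each lift may be chosen from $G_i$,'' but $G_i$ is not a fixed set: it depends on the very series you are trying to count. What you have actually shown is that every composition series admits a generating tuple lying in $\prod_i G_i^{d_i}$ \emph{for its own} $G_i$'s, and that the map from such tuples to series is surjective; but the total number of valid generating tuples (ranging over all series of a given profile) is not bounded by $\prod_i |G_i|^{d_i}$. For instance, when $G = (\mathbb{Z}/p)^m$ with all $d_i = 1$, the valid tuples are exactly the ordered bases of $G$, and there are $\prod_{i=1}^m (p^m - p^{i-1}) \approx n^{\log_p n}$ of them, already exceeding your claimed $\prod_i p^i = n^{(1/2)\log_p n + O(1)}$. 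Your top-down enumeration has a parallel mismatch: parametrizing surjections $G_i \to S$ by the images of a fixed generating set of $G_i$ yields at most $|S|^{d(G_i)}$ candidates, not $|G_i|^{d_i}$, and $d(G_i)$ need not be bounded by $2$.

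The paper closes this gap with one observation you are missing: in a bottom-up enumeration, the next term $G_{k+1} = \langle G_k, g\rangle$ depends only on the \emph{coset} $gG_k$, so the branching factor at step $k$ is at most $|G/G_k| \leq n/p^k$ (using only $|G_{i+1}| \geq p\,|G_i|$). The product $\prod_{k=0}^{\lfloor \log_p n\rfloor - 1} n/p^k$ then telescopes directly to $n^{(1/2)\log_p n + O(1)}$, with no profile optimization, no abelian/non-abelian case split, and no appeal to Feit--Thompson. The resulting enumeration of maximal subgroup chains already covers all composition series of an arbitrary finite group, so your entire $p=2$ analysis is unnecessary.
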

}

\begin{proof}
  We show that one can enumerate a class of chains that contains all maximal chains of subgroups in $n^{\log_p n + O(1)}$ time.  Since every maximal chain of subgroups contains at most one composition series as a subchain, this suffices to prove the result.

  We start by choosing the first nontrivial subgroup in the series.  Each of these is generated by a single element so there are at most $n$ choices.  If we have a chain $G_0 = 1 < \cdots < G_k$ of subgroups of $G$, then the next subgroup in the chain can be chosen in at most $\abs{G / G_k}$ ways since different representatives of the same coset generate the same subgroup.  Since each $\abs{G_{i + 1}} \geq p \abs{G_i}$, we see that the number of choices $\abs{G / G_k}$ for $G_{k + 1}$ is at most $n / p^k$.  Therefore, the total number of choices required to construct a chain of subgroups in this manner is at most
  \begin{align*}
    \prod_{k = 0}^{\lfloor \log_p n \rfloor - 1} (n / p^k) & \leq p^{\sum_{k = 0}^{\lceil \log_p n \rceil} k} \\
                                                        {} & = p^{(1 / 2) \log_p^2 n + O(\log_p n)} \\
                                                        {} & \leq n^{(1 / 2) \log_p n + O(1)}
  \end{align*}
  
  Since the set of subgroup chains enumerated by this process includes all maximal chains of subgroups, the result follows. 
\end{proof}

We are now ready to derive the algorithm discussed in the introduction.

\alphared*

\begin{proof}
  Let $S_2$ be an arbitrary composition series for $P_2$.  For each composition series $S_2'$ for $Q_2$, we test if the $\alpha$-composition pairs $(P_1, S_2)$ and $(Q_1, S_2')$ are isomorphic.  If $\phi : (P_1, P_2) \ra (Q_1, Q_2)$ is an isomorphism, then $(Q_1, \phi[S_2])$ is an $\alpha$-composition pair for $H$ that is isomorphic to $(P_1, S_2)$.  Thus, the $\alpha$-decompositions $(P_1, P_2)$ and $(Q_1, Q_2)$ are isomorphic \ifft the $\alpha$-composition pair $(P_1, S_2)$ is isomorphic to $(Q_1, S_2')$ for some composition series $S_2'$ for $Q_2$.  The order of $Q_2$ is at most $n$; the smallest prime dividing the order of $Q_2$ is equal to the smallest prime dividing the order of $H$ by \defref{alpha-decomp}.  The complexity then follows from \lemref{comp-bound}.
\end{proof}

We can also prove Turing reductions from solvable-group canonization to $\alpha$-decomposition canonization and from $\alpha$-decomposition canonization to $\alpha$-composition canonization.  For the convenience of the reader, we explicitly define canonical forms of groups, $\alpha$-decompositions and $\alpha$-decomposition pairs.

\begin{definition}
  A map $\can_{\grp}$ is a canonical form for groups if for each group $G$, $\can_{\grp}(G)$ is an $n \times n$ multiplication table with elements in $[n]$ that is isomorphic to $G$, such that, if $G$ and $H$ are groups, $G \cong H$ \ifft $\can_{\grp}(G) = \can_{\grp}(H)$.
\end{definition}

\begin{definition}
  \label{defn:alpha-decomp-can}
  A map $\can_{\alphadecomp}$ is a canonical form for $\alpha$-decompositions if for each $\alpha$-decomposition $(P_1, P_2)$ of a group $G$, $\can_{\alphadecomp}(P_1, P_2) = (M, \psi[P_1], \psi[M_2])$ such that the following hold.

  \begin{enumerate}
  \item $M$ is an $n \times n$ matrix with entries in $[n]$.
  \item $M$ is the multiplication table for a group that is isomorphic to $G$ under the isomorphism $\psi : G \ra [n]$.
  \item
    \begin{sloppypar}
      If $(P_1, P_2)$ and $(Q_1, Q_2)$ are $\alpha$-decompositions then $(P_1, P_2) \cong (Q_1, Q_2)$ \ifft $\can_{\alphadecomp}(P_1, P_2) = \can_{\alphadecomp}(Q_1, Q_2)$.
    \end{sloppypar}
  \end{enumerate}
\end{definition}

\begin{definition}
  \label{defn:alpha-comp-pair-can}
  A map $\can_{\alphapair}$ is a canonical form for $\alpha$-composition pairs if for each $\alpha$-composition pair $(P_1, S_2 = (P_{2, 0} = 1 < \cdots < P_{2, m} = P_2))$ of an $\alpha$-decomposition $(P_1, P_2)$ of a group $G$, $\can_{\alphapair}(P_1, S_2) = (M, \psi[P_1], \psi[P_{2, 0}], \ldots, \psi[P_{2, m}])$ such that the following hold.

  \begin{enumerate}
  \item $M$ is an $n \times n$ matrix with entries in $[n]$.
  \item $M$ is the multiplication table for a group that is isomorphic to $G$ under $\psi : G \ra [n]$.
  \item
    \begin{sloppypar}
      If $(P_1, S_2)$ and $(Q_1, S_2')$ are $\alpha$-decompositions then $(P_1, S_2) \cong (Q_1, S_2')$ \ifft $\can_{\alphapair}(P_1, S_2) = \can_{\alphapair}(Q_1, S_2')$.
    \end{sloppypar}
  \end{enumerate}
\end{definition}

Our canonical form reductions now follow via similar techniques. 

\begin{lemma}
  \label{lem:group-red-alpha-decomp-can}
  Computing the canonical form of a solvable group is polynomial-time Turing reducible to computing canonical forms of $\alpha$-decompositions for the group where $p$ is the smallest prime dividing the order of the group.
\end{lemma}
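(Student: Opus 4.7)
The plan is to reduce the task to a single oracle call, exploiting the fact that all Sylow bases of a solvable group are conjugate. First, I would apply \thmref{lem:comp-syl-basis-poly} to compute a Sylow basis $\setb{P_i'}{1 \leq i \leq \ell}$ for $G$ in deterministic polynomial time and then form the $\alpha$-decomposition $(P_1, P_2)$ by setting $P_1 = \prod_{i : p_i > \alpha} P_i'$ and $P_2 = \prod_{i : p_i \leq \alpha} P_i'$, exactly as in the proof of \lemref{sol-red}. I would then make a single oracle call to obtain $\can_{\alphadecomp}(P_1, P_2) = (M, \psi[P_1], \psi[P_2])$ and output $M$ as the canonical form $\can_{\grp}(G)$.

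The correctness argument has two parts. First, the output does not depend on which particular Sylow basis the algorithm happens to produce: by \thmref{sol-conj} any two Sylow bases of $G$ are conjugate by some $g \in G$, so the corresponding $\alpha$-decompositions are related by the inner automorphism $x \mapsto g^{-1} x g$, are isomorphic as $\alpha$-decompositions, and hence have identical canonical forms by \defref{alpha-decomp-can}. Second, if $G \cong H$ via an isomorphism $\phi : G \ra H$, then $\phi$ sends the Sylow basis chosen for $G$ to a Sylow basis of $H$, which by \thmref{sol-conj} is conjugate to the Sylow basis the algorithm chose for $H$; composing $\phi$ with the appropriate inner automorphism of $H$ yields an isomorphism between the $\alpha$-decompositions constructed for $G$ and for $H$, so the returned matrices $M$ coincide. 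Conversely, equality of the returned matrices immediately gives $G \cong H$ since each is a multiplication table of a group isomorphic to the corresponding input, again by \defref{alpha-decomp-can}.

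For the complexity, \thmref{lem:comp-syl-basis-poly} produces the Sylow basis in polynomial time, computing the products $P_1$ and $P_2$ from the Cayley table is polynomial, and the oracle is queried exactly once, so the whole reduction is polynomial-time Turing in the $\alpha$-decomposition canonization oracle. I do not anticipate any serious obstacle; the only real subtlety is noticing that \thmref{sol-conj} collapses all $\alpha$-decompositions arising from Sylow bases into a single isomorphism class, so a single oracle call suffices rather than the $n$ conjugates that were needed for the isomorphism reduction in \lemref{sol-red}.
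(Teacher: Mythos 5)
Your proposal is correct, but it organizes the reduction differently from the paper. The paper enumerates \emph{all} Sylow bases of $G$ (there are at most $n$ of them by \thmref{sol-conj}, obtainable in polynomial time by conjugating the one basis computed via Theorem~\ref{lem:comp-syl-basis-poly}), queries the oracle on each resulting $\alpha$-decomposition, and outputs the lexicographically least of the returned canonical forms; in that argument Hall's conjugacy theorem is used essentially only to bound the number of oracle calls, while invariance under isomorphism follows because an isomorphism $G \ra H$ maps Sylow bases to Sylow bases and hence the two enumerations produce the same \emph{set} of canonical forms. You instead make a single oracle call on the $\alpha$-decomposition arising from one computed Sylow basis, and you use \thmref{sol-conj} for \emph{correctness}: conjugation by $g$ carries $\prod_{i : p_i > \alpha} P_i'$ and $\prod_{i : p_i \leq \alpha} P_i'$ to the corresponding products for any other Sylow basis, so all Sylow-basis-derived $\alpha$-decompositions of $G$ lie in a single isomorphism class; combined with the fact that an isomorphism $G \ra H$ followed by a suitable inner automorphism of $H$ matches the two chosen bases, the third condition of \defref{alpha-decomp-can} forces the returned tables to coincide whenever $G \cong H$, and the second condition gives the converse. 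Both arguments are valid polynomial-time Turing reductions; yours buys a single oracle query and a cleaner invariance statement, while the paper's lexicographic-minimum construction is more defensive (its output is well defined without having to observe that all the queried decompositions are isomorphic) and runs parallel to the enumeration used in \lemref{sol-red}.
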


\begin{proof}
  Let $G$ be a solvable group of order $n = \prod_{i = 1}^\ell p_i^{e_i}$.  For each Sylow basis $\setb{P_i'}{1 \leq i \leq \ell}$ of $G$, we let $P_1 = \prod_{i : p_i > \alpha} P_i'$ and $P_2 = \prod_{i : p_i \leq \alpha} P_i'$ and compute $\can_{\alphadecomp}(P_1, P_2)$.  We define $\can_{\grp}(G)$ to be the multiplication table of the lexicographically least of these canonical forms.  Since two groups are isomorphic \ifft the sets of isomorphism classes of their $\alpha$-decompositions coincide, it follows that $\can_{\grp}$ is a canonical form.  By \thmref{sol-conj}, there are at most $n$ Sylow bases for $G$ which can be enumerated in polynomial time.  Thus, the reduction can be performed in polynomial time.
\end{proof}

\begin{lemma}
  \label{lem:alpha-decomp-red-pair-can}
  Computing the canonical form of an $\alpha$-decomposition of a group is $n^{(1 / 2) \log_p n + O(1)}$ time Turing reducible to computing canonical forms of $\alpha$-composition pairs for the group where $p$ is the smallest prime dividing the order of the group.
\end{lemma}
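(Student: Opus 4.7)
The plan is to mirror the proof of \lemref{alpha-red} verbatim at the level of structure, but instead of running pairwise isomorphism tests I would invoke the $\alpha$-composition-pair canonization oracle on every candidate and take the lexicographically least output as the canonical form of the $\alpha$-decomposition.

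Concretely, given an $\alpha$-decomposition $(P_1, P_2)$ of $G$, I would use \lemref{comp-bound} to enumerate every composition series $S_2$ of $P_2$. For each such $S_2$, I would form the $\alpha$-composition pair $(P_1, S_2)$ and compute $\can_{\alphapair}(P_1, S_2)$ via the oracle. I would then define $\can_{\alphadecomp}(P_1, P_2)$ to be the lexicographically smallest triple $(M, \psi[P_1], \psi[M_2])$ returned over the course of this enumeration (with a canonical tiebreaking rule for the $\psi[M_2]$ coordinate if needed).

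To check that this yields a valid canonical form, I would verify the three conditions of \defref{alpha-decomp-can}. Conditions (1) and (2) are inherited directly from the corresponding conditions of $\can_{\alphapair}$, since the winning triple is literally the output of an $\alpha$-composition-pair canonization call on $(P_1, S_2)$. For condition (3), I would argue both directions. If $\phi : (P_1, P_2) \ra (Q_1, Q_2)$ is an isomorphism of $\alpha$-decompositions, then $S_2 \mapsto \phi[S_2]$ is a bijection between composition series of $P_2$ and composition series of $Q_2$, and it sends each $\alpha$-composition pair $(P_1, S_2)$ to an isomorphic $\alpha$-composition pair $(Q_1, \phi[S_2])$. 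Hence the multisets $\{\can_{\alphapair}(P_1, S_2)\}_{S_2}$ and $\{\can_{\alphapair}(Q_1, S_2')\}_{S_2'}$ coincide, so their lex-least elements agree. Conversely, equality of the lex-least elements produces composition series $S_2$ of $P_2$ and $S_2'$ of $Q_2$ with $\can_{\alphapair}(P_1, S_2) = \can_{\alphapair}(Q_1, S_2')$; property (3) of $\can_{\alphapair}$ then gives $(P_1, S_2) \cong (Q_1, S_2')$ as $\alpha$-composition pairs, and restricting the witnessing isomorphism to the underlying subgroups yields $(P_1, P_2) \cong (Q_1, Q_2)$.

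For the running time, I note that by \defref{alpha-decomp} every prime dividing $\abs{P_2}$ is at most $\alpha$, so when $P_2$ is nontrivial its smallest prime divisor coincides with $p$, the smallest prime dividing $\abs{G}$; the edge case $P_2 = 1$ leaves a unique (empty) composition series and is immediate. Applying \lemref{comp-bound} to $P_2$ then bounds both the count and the enumeration cost of composition series of $P_2$ by $\abs{P_2}^{(1/2) \log_p \abs{P_2} + O(1)} \leq n^{(1/2) \log_p n + O(1)}$, and only a polynomial-time overhead per oracle call (forming the pair, sorting outputs) is incurred, keeping the total within the claimed bound. The main conceptual point—and the only place any real thought is needed—is the bijection $S_2 \mapsto \phi[S_2]$ argument that turns the enumeration into a canonical choice; everything else is bookkeeping inherited from \lemref{alpha-red}.
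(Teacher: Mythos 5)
Your proposal matches the paper's proof: both enumerate all composition series of $P_2$ via \lemref{comp-bound}, invoke the $\alpha$-composition-pair canonization oracle on each resulting pair $(P_1, S_2)$, and define $\can_{\alphadecomp}(P_1, P_2)$ as the lexicographically least output (projected to the triple of \defref{alpha-decomp-can}), with the same complexity accounting. Your write-up simply spells out the correctness argument (the $S_2 \mapsto \phi[S_2]$ bijection) and the prime-divisor bookkeeping that the paper leaves implicit.
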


\begin{proof}
  Let $(P_1, P_2)$ be an $\alpha$-decomposition of a group $G$.  We use \lemref{comp-bound} to enumerate all of the at most $n^{(1 / 2) \log_p n + O(1)}$ composition series $S_2$ for $P_2$.  We define $\can_{\alphadecomp}(P_1, P_2) = (M, \psi[P_1], \psi[P_{2, m}])$ where $(M, \psi[P_1], \psi[P_{2, 0}], \ldots, \psi[P_{2, m}])$ is the lexicographically least canonical form of the $\alpha$-composition pairs $(P_1, S_2)$ that result from this process.  It follows from \defref{alpha-comp-pair-can} that $\can_{\alphadecomp}$ is a canonical form.
\end{proof}

\section{$\alpha$-composition-pair isomorphism and canonization}
\label{sec:graph-red}
In this section, we show our reduction from $\alpha$-composition pair isomorphism to low-degree graph isomorphism.  Our reduction also extends to reducing $\alpha$-composition pair canonization to computing canonical forms of low-degree graphs.  Our proofs follow an outline similar to the analogous reduction from composition series isomorphism to low-degree graph isomorphism in the case of $p$-groups~\cite{rosenbaum2013c}, but are more complex due to the more general structure of solvable groups.

\subsection{Isomorphism testing}
At a high level, our algorithm consists of the following steps.  First, we augment our $\alpha$-composition pair $(P_1, P_2)$ by choosing an ordered generating set $\bmg$ for the subgroup $P_1$ (which corresponds to the large primes) to obtain the \emph{augmented $\alpha$-composition pair} $(P_1, S_2, \bmg)$.  We say that a mapping $\phi : G \ra H$ is an isomorphism between the augmented $\alpha$-decompositions $(P_1, S_2, \bmg)$ and $(Q_1, S_2', \bmh)$ for $G$ and $H$ if $\phi$ is an $\alpha$-composition pair isomorphism for $(P_1, S_2)$ and $(Q_1, S_2')$ and $\phi(\bmg) = \bmh$.  The reason for choosing an augmented $\alpha$-composition pair is so that we can reduce the degree of the part of the graph we construct that corresponds to $P_1$ using the trick due to Wagner~\cite{wagner2011a} mentioned in the introduction.

Since one can fix an ordered generating set $\bmg$ for $P_1$ and consider all possible ordered generating sets for $Q_1$, it is easy to see that $\alpha$-composition pair isomorphism is $n^{\log_{\alpha} n + O(1)}$ Turing-reducible to augmented $\alpha$-composition pair isomorphism.  (Recall that we will later set $\alpha = \log n / \log \log n$ so this is $n^{O(\log n / \log \log n)}$ time and is less than the complexity we are aiming for.)  We state this in the following lemma.

\begin{lemma}
  \label{lem:alpha-comp-red}
  Testing isomorphism of the $\alpha$-composition pairs $(P_1, S_2)$ and $(Q_1, S_2')$ for the solvable groups $G$ and $H$ is $n^{\log_{\alpha} n + O(1)}$ deterministic time Turing reducible to testing isomorphism of augmented $\alpha$-composition pairs for $(P_1, S_2)$ and $(Q_1, S_2')$ where $p$ is the smallest prime dividing the order of the group.
\end{lemma}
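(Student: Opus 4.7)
The plan is to fix a short ordered generating set for $P_1$ and then exhaustively enumerate all ordered tuples of the same length in $Q_1$ as candidate images under an isomorphism. Concretely, I would first compute a minimum ordered generating sequence $\bmg = (g_1, \ldots, g_k)$ for $P_1$ greedily: starting from the empty tuple, at each stage pick any element of $P_1$ that lies outside the subgroup generated so far, until $P_1$ is reached. By \defref{alpha-decomp}, every prime dividing $\abs{P_1}$ strictly exceeds $\alpha$, so along the strictly increasing chain
\[ 1 < \langle g_1 \rangle < \langle g_1, g_2 \rangle < \cdots < \langle g_1, \ldots, g_k \rangle = P_1 \]
each strict containment multiplies the order by a divisor of $\abs{P_1}$ that is greater than $1$, and hence by Lagrange by an integer whose smallest prime factor exceeds $\alpha$. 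Therefore $\alpha^k < \abs{P_1} \leq n$, giving $k \leq \log_{\alpha} n$.

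Next, I would iterate over every ordered tuple $\bmh = (h_1, \ldots, h_k) \in Q_1^k$, and for each such $\bmh$ invoke the assumed augmented $\alpha$-composition pair isomorphism subroutine on $(P_1, S_2, \bmg)$ and $(Q_1, S_2', \bmh)$, outputting ``isomorphic'' as soon as one call accepts and ``not isomorphic'' if every call rejects. Each tuple is processed with polynomially many bookkeeping operations plus a single oracle call, and the number of tuples is at most $\abs{Q_1}^k \leq n^{\log_{\alpha} n}$, yielding the claimed $n^{\log_{\alpha} n + O(1)}$ overhead.

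For correctness: if $\phi : G \ra H$ is an $\alpha$-composition pair isomorphism from $(P_1, S_2)$ to $(Q_1, S_2')$, then $\phi(\bmg) := (\phi(g_1), \ldots, \phi(g_k)) \in Q_1^k$ appears in our enumeration, and on that iteration the same $\phi$ witnesses the augmented pair isomorphism $(P_1, S_2, \bmg) \cong (Q_1, S_2', \phi(\bmg))$, so the oracle accepts. Conversely, any augmented $\alpha$-composition pair isomorphism, by forgetting the generating tuple, is in particular an $\alpha$-composition pair isomorphism, so every acceptance of the oracle is sound.

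The only nonroutine ingredient here is the bound $k \leq \log_{\alpha} n$, which is precisely why the Sylow basis was split at threshold $\alpha$ in \lemref{sol-red}: by confining all primes greater than $\alpha$ to $P_1$, the subgroup $P_1$ automatically admits a very short generating sequence, so enumerating its image costs only $n^{\log_{\alpha} n}$. Once this observation is in hand, the rest of the reduction is entirely mechanical.
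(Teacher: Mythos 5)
Your proposal is correct and matches the paper's (one-sentence) argument: fix an ordered generating set for $P_1$, enumerate the at most $n^{\log_\alpha n}$ candidate image tuples in $Q_1$, and query the augmented-pair oracle, with the key point being that every prime dividing $\abs{P_1}$ exceeds $\alpha$ and hence $P_1$ has a generating sequence of length at most $\log_\alpha n$ --- exactly the bound you prove via the subgroup chain. The only cosmetic difference is that you enumerate all tuples in $Q_1^k$ rather than only ordered generating sets; this is harmless (and fixable by a polynomial-time generation check), since any tuple accepted by the oracle is the image of $\bmg$ under an isomorphism mapping $P_1$ onto $Q_1$ and so generates $Q_1$ anyway.
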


We then construct a tree whose leaves represent the elements of $G$; by using the ordered generating set $\bmg$ chosen above, we are able to ensure that the degree of this tree is at most $\alpha + O(1)$.  By augmenting this tree with gadgets that represent the multiplication table of the group, we obtain an object that represents the isomorphism class of the augmented $\alpha$-composition pair $(P_1, P_2, \bmg)$.  The final step of the algorithm is to apply the following result\intoct{ due to Babai and Luks~\cite{babai1983a,babai1983b}} for computing canonical forms of low-degree graphs.

\intoctornot{
\begin{theorem}
  \label{thm:const-deg-can}
  Canonization of colored graphs of degree at most $d$ is in $n^{O(d)}$ time.
\end{theorem}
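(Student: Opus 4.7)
The plan is to follow Luks' group-theoretic approach to bounded-degree graph isomorphism~\cite{luks1982a} and promote it to canonization in the manner of Babai and Luks~\cite{babai1983a,babai1983b}. First I would reduce to canonizing connected rooted colored graphs of degree at most $d$: a canonical form for a disconnected graph is obtained by concatenating the canonical forms of its connected components in lexicographic order, and for a connected graph one enumerates all $n$ possible roots, computes the canonical form of each rooted graph, and takes the lexicographic minimum. Each of these reductions costs at most a factor of $n$, which is absorbed into $n^{O(d)}$.

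Given a rooted connected graph $(X, v_0)$ of degree at most $d$, partition the vertex set by BFS levels $L_0 = \{v_0\}, L_1, L_2, \ldots, L_k$ from $v_0$. Any automorphism of $(X, v_0)$ fixes each level setwise, and the natural tower obtained by successively stabilizing the levels pointwise gives subgroups $1 = A_0 \trileq A_1 \trileq \cdots \trileq A_k = \aut(X, v_0)$ (after passing to a suitable overgroup). Each quotient $A_i / A_{i-1}$ embeds into the direct product, over $u \in L_{i-1}$, of the symmetric group on the children of $u$; in particular each composition factor of $\aut(X, v_0)$ embeds into $\sym(d)$, so $\aut(X, v_0)$ lies in the class $\Gamma_d$ of groups all of whose composition factors embed in $\sym(d)$.

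The central recursive subroutine is string canonization under a $\Gamma_d$-group $\Pi \leq \sym(\Omega)$, presented by a suitable chain of subgroups: given a coloring $c \colon \Omega \ra \bbN$, compute the lexicographically smallest $c \circ \pi^{-1}$ over $\pi \in \Pi$, together with a witness $\pi$. Recursion proceeds along two axes: over orbits of $\Pi$ on $\Omega$, and over cosets of a normal subgroup $\Pi' \trileq \Pi$ of index at most $\abs{\sym(d)}$ extracted from the composition series. Branching is by at most $d!$ at each level and the recursion depth on orbit-splits is $O(\log n)$, which gives total time $n^{O(d)}$ for string canonization. Canonizing $(X, v_0)$ reduces to applying this routine level by level, with the string at level $i$ encoding the edges between $L_{i-1}$ and $L_i$ together with the already-computed canon at level $i-1$.

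The main obstacle is the canonization step itself as opposed to bare isomorphism testing: sibling recursive branches produce canonical renamings of possibly distinct subsets of $\Omega$, and these must be glued into a single canonical renaming of the whole input that does not depend on algorithmic choices. This is handled by the coset-canonization technique of Babai--Luks: every recursive call is associated with a known coset of a known subgroup of a root group $\Pi$, so its output is a canonical coset representative, and taking a lex-min across sibling cosets yields a canonical choice within the union. Each gluing step multiplies the cost by $\mathrm{poly}(d!)$, which is absorbed into the $n^{O(d)}$ bound and completes the proof.
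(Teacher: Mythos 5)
The paper does not actually prove this statement: Theorem~\ref{thm:const-deg-can} is imported as a black-box result of Babai and Luks~\cite{babai1983a,babai1983b}, building on Luks~\cite{luks1982a}, so the expected ``proof'' here is the citation itself. Your proposal instead attempts to re-derive the theorem. As an outline of the Babai--Luks argument it identifies the right architecture: reduce to rooted connected graphs, use the layer-stabilizer tower to place $\aut(X, v_0)$ in the class $\Gamma_d$, canonize strings under $\Gamma_d$-groups by recursion, and glue sibling branches via canonical coset representatives and lexicographic minima.

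However, the complexity accounting at the heart of your sketch is wrong as stated, and this is a genuine gap. Recursing ``over cosets of a normal subgroup $\Pi' \trileq \Pi$ of index at most $\abs{\sym(d)}$ extracted from the composition series'' with branching $d!$ per level does not give $n^{O(d)}$: a composition series of a $\Gamma_d$-group $\Pi \leq \sym(\Omega)$ can have length $\Theta(n)$ (an elementary abelian $2$-group of order $2^{n/2}$ acting on $n$ points lies in $\Gamma_d$ for every $d \geq 2$), so the naive count from this branching is $d!^{\Theta(n)}$, and the claimed ``recursion depth $O(\log n)$'' is unjustified because the orbit splits need not interleave with the coset branching. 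The actual Luks/Babai--Luks recursion is structured differently: in the transitive case one passes to a minimal block system with $m$ blocks and recurses over the cosets of the kernel $N$ of the action on blocks; the branching factor $[\Pi : N]$ is bounded by the order of a primitive $\Gamma_d$-group on $m$ points, which is $m^{O(d)}$ (Babai--Cameron--P\'alfy; Luks' original weaker bound already suffices for fixed $d$), and after descending to $N$ the windows shrink to the blocks. It is this primitive-group order bound, fed into the resulting recurrence, that yields $n^{O(d)}$; your sketch never invokes it, so the stated running time does not follow. Either supply that ingredient (and tighten the tower/embedding claim, since vertices in $L_i$ may have several parents in $L_{i-1}$), or do what the paper does and simply cite Babai--Luks.
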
}{
\begin{theorem}[Babai and Luks~\cite{babai1983a,babai1983b}]
  \label{thm:const-deg-can}
  Canonization of colored graphs of degree at most $d$ is in $n^{O(d)}$ time.
\end{theorem}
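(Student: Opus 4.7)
The plan is to follow the classical approach of Luks, adapted to canonization by Babai and Luks. The key conceptual move is to recast bounded-degree graph canonization as a string canonization problem under a restricted class of permutation groups. Let $\cG_d$ denote the class of permutation groups whose composition factors all have order at most $d$; this class is closed under taking subgroups, direct products, and wreath products (on the right) with subgroups of $\sym(d)$, which is exactly what is needed for the divide-and-conquer recursion.

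First I would reduce to the rooted connected case. For a disconnected graph, I would canonize each connected component separately and combine the results via the standard canonization-of-multisets reduction, trying each vertex $v$ as a root and taking the lexicographically smallest canonical form obtained. Having fixed a root $v$, I would build the BFS tree at $v$ and treat non-tree edges as labels attached to the tree vertices. Because the graph has maximum degree $d$, each vertex has at most $d$ children in this tree, so the stabilizer of $v$ in $\aut(G)$ embeds into an iterated wreath product of copies of $\sym(d)$, and in particular lies in $\cG_d$.

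With this structural fact in hand, I would canonize by divide-and-conquer on the BFS tree: recursively canonize each child subtree, canonically order the (at most $d$) children by their returned symbols, and break ties among orderings by canonizing the cross-edge labels under the residual $\sym(d)$-action at each node. Each combine step reduces to a coset intersection problem for subgroups in $\cG_d$, capturing the automorphisms of the assembled subtree that preserve both the child-subtree canonical forms and the cross-edge pattern. Propagating an $n^{O(d)}$ cost per combine step through the $O(n)$ tree nodes, then trying all $n$ possible roots and taking the lexicographic minimum, yields the claimed $n^{O(d)}$ bound overall.

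The main obstacle, and the heart of the Luks argument, is establishing that coset intersection and pointwise set-stabilizer computations within $\cG_d$ can be carried out in polynomial time per invocation: this requires exploiting the bounded composition factors to control orbit sizes and recursion branching at every level. For the canonization extension one must additionally maintain a canonical representative (rather than merely an isomorphism witness) at each recursive call, which is handled by the lexicographic-minimum tie-breaking used throughout the Babai--Luks framework, together with a careful ordering on the symbols returned by the recursive subproblems.
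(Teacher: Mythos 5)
This statement is not proved in the paper at all: it is imported verbatim as a black box from Babai and Luks, so there is no internal proof to compare against, and your proposal has to stand or fall as a reconstruction of their argument. As such it captures the right general strategy (reduce bounded-degree graph canonization to canonization of strings/structures relative to a restricted permutation-group class, recurse along a layered/BFS decomposition from a fixed root or edge, try all roots and take the lexicographic minimum), but two points are genuinely off. First, the group class is misdefined: what is needed is Luks' class $\Gamma_d$ of groups all of whose composition factors embed in $\sym(d)$, not groups whose composition factors have \emph{order} at most $d$. With your definition the closure properties you invoke fail -- already $\sym(d)$ itself has the composition factor $\altr(d)$ of order $d!/2 > d$, so your class is not closed under the wreath products with subgroups of $\sym(d)$ that the layered recursion produces, and the vertex stabilizer of a connected degree-$d$ graph need not lie in it.

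Second, and more seriously, the canonization content of Babai--Luks is dismissed in one sentence (``handled by the lexicographic-minimum tie-breaking used throughout the Babai--Luks framework''). Computing the lexicographically least image of a string under an arbitrary permutation group is NP-hard, so lex-min tie-breaking cannot simply be appealed to; the entire point of the Babai--Luks paper is a canonical-placement procedure that recurses on the orbit and block structure of the acting group and exploits the fact that primitive groups in $\Gamma_d$ have order $n^{O(d)}$, which is what makes each recursive level affordable and yields the $n^{O(d)}$ bound. Likewise, the combine steps are not ``coset intersection in polynomial time per invocation'': the set-stabilizer/coset routines within $\Gamma_d$ themselves cost $n^{O(d)}$, and for canonization one needs canonical-form analogues of these routines, not just isomorphism witnesses. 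So while your outline is a fair sketch of the isomorphism-testing side of Luks' method, the step from isomorphism to canonical forms -- the part this theorem actually asserts -- is missing its key idea.
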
}

The main challenge compared to $p$-group isomorphism~\cite{rosenbaum2013c} is dealing with the fact that some of the prime divisors of a solvable group can be small while others may be large.  This is the main reason why the correctness proof is significantly more complex than for $p$-groups.  Since a $p$-group has exactly one prime divisor, it was possible to handle the cases of small and large primes separately using a graph-isomorphism based $p$-group algorithm~\cite{rosenbaum2013c} (which is fast when the prime is small) and the generator-enumeration algorithm (which is fast when the prime is large).  On the other hand, for solvable groups, it is necessary to design a hybrid algorithm that is fast for both cases simultaneously.

As mentioned above, the first step in the graph construction is to define a tree for an augmented $\alpha$-composition pair $(P_1, P_2, \bmg)$.  We do this by constructing trees $T_1$ and $T_2$ whose leaves correspond to the elements of $P_1$ and $P_2$.  In order to define the part of the tree corresponding to $P_1$, we need a way to canonically order the elements of a group given an ordered generating set.  For completeness, we state and prove the required properties from~\cite{rosenbaum2013c}.

\intoctornot{
\begin{definition}
  \label{defn:gen-ord}
  Let $G$ be a group with an ordered generating set $\bmg = (g_1, \ldots, g_k)$.  Define a total order $\prec_{\bmg}$ on $G$ by $x \prec_{\bmg} y$ if $w_{\bmg}(x) \prec w_{\bmg}(y)$ where each $w_{\bmg}(x) = (x_1, \ldots, x_j)$ is the first word in $\{g_1, \ldots, g_k\}^*$ under the standard ordering such that $x = x_1 \cdots x_j$.
\end{definition}}{
\begin{definition}[\cite{rosenbaum2013c}]
  \label{defn:gen-ord}
  Let $G$ be a group with an ordered generating set $\bmg = (g_1, \ldots, g_k)$.  Define a total order $\prec_{\bmg}$ on $G$ by $x \prec_{\bmg} y$ if $w_{\bmg}(x) \prec w_{\bmg}(y)$ where each $w_{\bmg}(x) = (x_1, \ldots, x_j)$ is the first word in $\{g_1, \ldots, g_k\}^*$ under the standard ordering such that $x = x_1 \cdots x_j$.
\end{definition}
}

\intoctornot{
We will also need the following lemma from~\cite{rosenbaum2013c}.

\begin{lemma}
  \label{lem:gen-ord}
  Let $G$ and $H$ be groups with ordered generating sets $\bmg = (g_1, \ldots, g_k)$ and $\bmh = (h_1, \ldots, h_k)$, and let $x, y \in G$.  Then

  \begin{enumerate}
  \item $\prec_{\bmg}$ is a total ordering on $G$.
  \item if $\phi : G \ra H$ is an isomorphism such that each $\phi(g_i) = h_i$, then $x \prec_{\bmg} y$ \ifft $\phi(x) \prec_{\bmh} \phi(y)$.
  \item we can decide if $x \prec_{\bmg} y$ in $O(n \abs{\bmg})$ time.
  \end{enumerate}
\end{lemma}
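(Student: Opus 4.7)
The plan is to handle the three parts in order, observing that the core content lies in item~(2); items~(1) and~(3) are essentially bookkeeping about the word-enumeration procedure implicit in \defref{gen-ord}.

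For item~(1), I would note that the standard ordering $\prec$ on $\{g_1, \ldots, g_k\}^*$ (say shortlex: first by length, then lexicographically by the indices of the generators) is a total ordering on words. Since $\bmg$ generates $G$, every $x \in G$ can be written as a product of generators, so the set of words representing $x$ is nonempty and therefore has a $\prec$-minimum $w_{\bmg}(x)$. Distinct $x, y \in G$ give distinct representatives (a single word represents only one element), so the map $x \mapsto w_{\bmg}(x)$ is injective. Pulling back a total order through an injection yields a total order, giving~(1).

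For item~(2), the key observation is that the word ordering on $\{g_1, \ldots, g_k\}^*$ depends only on the sequence of \emph{indices}, not on the group elements themselves, so it corresponds bijectively with the word ordering on $\{h_1, \ldots, h_k\}^*$ via the index-preserving bijection $g_i \mapsto h_i$. If $\phi$ is an isomorphism with $\phi(g_i) = h_i$ for each $i$, then for any word $w = g_{i_1} \cdots g_{i_j}$ representing $x$, the corresponding word $w' = h_{i_1} \cdots h_{i_j}$ represents $\phi(x)$; and conversely, any representative of $\phi(x)$ pulls back via $\phi^{-1}$ to a representative of $x$ of the same index sequence. Thus $w_{\bmh}(\phi(x))$ has exactly the same index sequence as $w_{\bmg}(x)$, and likewise for $y$. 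Since these index sequences determine the ordering, $x \prec_{\bmg} y$ iff $\phi(x) \prec_{\bmh} \phi(y)$.

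For item~(3), I would compute $w_{\bmg}(x)$ and $w_{\bmg}(y)$ by enumerating words in $\prec$-order via a BFS-style procedure: maintain a queue of (element, word) pairs in increasing word order, starting from $(1_G, \varepsilon)$; pop each pair, and if the element has not been seen before, record its canonical word and push its $k = |\bmg|$ extensions by right-multiplication with each $g_i$. Stop once the canonical words of both $x$ and $y$ are known. Each of the $n$ group elements is recorded at most once, and each recording generates $k$ successors, each costing $O(1)$ Cayley-table lookups, for total time $O(n|\bmg|)$. Comparing the two resulting words takes time linear in their length, which is $O(\log_p n) \subseteq O(n)$, and is absorbed. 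The main (mild) obstacle is just specifying the enumeration cleanly so that ``the first word'' really means what \defref{gen-ord} says; using shortlex makes the BFS discover canonical words exactly in $\prec$-order, which settles it.
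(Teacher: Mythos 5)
Your proposal is correct and follows essentially the same route as the paper: injectivity of $x \mapsto w_{\bmg}(x)$ for part (1), the index-sequence correspondence of minimal words under $\phi$ for part (2), and a breadth-first enumeration (the paper phrases it as BFS on the Cayley graph $\cay(G, S)$) computing canonical words in $O(n\abs{\bmg})$ time for part (3). One tiny slip: minimal word length need not be $O(\log_p n)$ (think of a cyclic group with one generator), but your fallback bound of $O(n)$ is what matters and the conclusion stands.
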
}{
\begin{lemma}[\cite{rosenbaum2013c}]
  \label{lem:gen-ord}
  Let $G$ and $H$ be groups with ordered generating sets $\bmg = (g_1, \ldots, g_k)$ and $\bmh = (h_1, \ldots, h_k)$, and let $x, y \in G$.  Then

  \begin{enumerate}
  \item $\prec_{\bmg}$ is a total ordering on $G$.
  \item if $\phi : G \ra H$ is an isomorphism such that each $\phi(g_i) = h_i$, then $x \prec_{\bmg} y$ \ifft $\phi(x) \prec_{\bmh} \phi(y)$.
  \item we can decide if $x \prec_{\bmg} y$ in $O(n \abs{\bmg})$ time.
  \end{enumerate}
\end{lemma}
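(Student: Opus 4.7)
The plan is to establish the three parts of \lemref{gen-ord} in sequence, with the common theme that the standard ordering $\prec$ is a well-ordering on the free monoid $\{g_1, \ldots, g_k\}^*$ that interacts cleanly with evaluation maps and homomorphisms. All three statements will follow from properties of the single map $x \mapsto w_{\bmg}(x)$ from $G$ into this free monoid.

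For part~(1), I would first observe that $w_{\bmg}$ is well-defined: since $\bmg$ generates $G$, every $x \in G$ is represented by some word, and the well-ordering $\prec$ guarantees a unique $\prec$-first such word. The map is injective because $w_{\bmg}(x) = w_{\bmg}(y)$ forces $x = y$ upon evaluation. Pulling the total order $\prec$ back along this injection yields a total order on $G$. For part~(2), the renaming $g_i \mapsto h_i$ extends to a length- and lex-preserving bijection $\Phi : \{g_1, \ldots, g_k\}^* \to \{h_1, \ldots, h_k\}^*$, and since $\phi$ is a homomorphism with $\phi(g_i) = h_i$, a word $w$ evaluates to $x$ in $G$ iff $\Phi(w)$ evaluates to $\phi(x)$ in $H$. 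Consequently the $\prec$-minimum word representing $x$ corresponds under $\Phi$ to the $\prec$-minimum word representing $\phi(x)$, giving $w_{\bmh}(\phi(x)) = \Phi(w_{\bmg}(x))$ and the stated equivalence.

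For part~(3), I would compute $w_{\bmg}$ by breadth-first search on the Cayley graph of $G$ with respect to $\bmg$: start from the identity with the empty word, and each time an element $a$ is dequeued with recorded word $w(a)$, iterate $i = 1, \ldots, k$ in order, look up $a g_i$ in the Cayley table, and enqueue $a g_i$ with word $w(a) g_i$ if it has not yet been visited. A short induction on the BFS level confirms that the recorded word of each element is exactly its $\prec$-minimum representation. The BFS visits each of $n$ elements once and performs $\abs{\bmg}$ constant-time operations per visit, yielding total time $O(n \abs{\bmg})$; a final comparison of $w_{\bmg}(x)$ against $w_{\bmg}(y)$ under $\prec$ is a linear scan whose cost is dominated by this bound.

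The only mildly subtle point is the induction in part~(3) that BFS with ordered enqueuing outputs shortlex-first words; it uses the FIFO property of the queue together with the fixed iteration order over generators, so that within each BFS level, elements are dequeued in the lex order inherited from the previous level and their enqueues continue in shortlex. Everything else is routine elementary group theory and algorithms.
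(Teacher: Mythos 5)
Your proposal is correct and follows essentially the same route as the paper: injectivity of $w_{\bmg}$ plus pullback of the total order for part (1), the word-level correspondence $w_{\bmh}(\phi(x)) = \Phi(w_{\bmg}(x))$ for part (2), and breadth-first search on $\cay(G, \{g_1, \ldots, g_k\})$ to compute shortlex-minimal words in $O(n\abs{\bmg})$ time for part (3). The only difference is bookkeeping in the BFS (you assign words at first discovery using the FIFO/generator-order invariant, while the paper takes the minimum of $w_{\bmg}(y) \concat g$ over incoming edges level by level), which is an equivalent and equally valid implementation.
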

}

\begin{proof}
  Let $S = \{g_1, \ldots, g_k\}$.  For part (a), it is clear that $\prec_{\bmg}$ is a total order since $w_{\bmg} : G \ra S^*$ is clearly injective and the standard ordering on $S^*$ is a total order.

  For part (b), consider an isomorphism $\phi : G \ra H$ such that each $\phi(g_i) = h_i$.  Then if $w_{\bmg}(x) = (x_1, \ldots, x_j)$, $w_{\bmh}(\phi(x)) = (\phi(x_1), \ldots, \phi(x_j))$.  Thus, $x \prec_{\bmg} y$ \ifft $w_{\bmg}(x) \prec w_{\bmg}(y)$ \ifft $w_{\bmh}(\phi(x)) \prec w_{\bmh}(\phi(y))$ \ifft $x \prec_{\bmh} y$.

  For part (c), it suffices to show how to compute $w_{\bmg}(x)$ in polynomial time.  Consider the Cayley graph $\cay(G, S)$ for the group $G$ with generating set $S$.  Then the word $w_{\bmg}(x)$ corresponds to the edges in the minimum length path from $1$ to $x$ in $\cay(G, S)$ that comes first lexicographically.  We can find this path in $O(n \abs{\bmg})$ time by visiting the nodes in breadth-first order starting with $1$.  At the \nth{j} stage, we know $w_{\bmg}(y)$ for all $y \in G$ at a distance of at most $j$ from the root.  We then compute $w_{\bmg}(x)$ for each $x$ at a distance of $j + 1$ from the root by selecting the minimal word $w_{\bmg}(x) \concat g_{x, y}$ over all edges $(x, y)$ associated with an element $g_{x, y}$ of $S$.
\end{proof}

Now we can define the tree that corresponds to $P_1$.  We do this by choosing a balanced binary tree whose leaves are elements of $P_1$.  The choice of this tree is arbitrary so long as it depends only on $\prec_{\bmg}$.  The reason for constructing the trees for $P_1$ and $P_2$ separately is that this allows us to ensure that the tree for $P_1$ has only constant degree.  Otherwise, it would have degree $\Omega(n)$ for groups divisible by large primes which would result in a very slow algorithm.  Later on, we will combine the trees for $P_1$ and $P_2$ to obtain a tree whose leaves correspond to elements of $G$.

\begin{definition}
  \label{defn:TP}
  Let $P_1$ be a group with ordered generating set $\bmg = (g_1, \ldots, g_k)$.  To construct the rooted tree $T(P_1, \bmg)$, we create a leaf node for each element of $P_1$ and color each node by the number that corresponds to its position in the ordering $\prec_{\bmg}$; we then arrange the nodes on a line from smallest to largest according to their colors.  We attach a parent node to each pair of adjacent leaves starting with the smallest pair; if $\abs{P_1}$ is odd, we attach a single parent node to the last leaf.  We then arrange the parent nodes just generated on a line according to the ordering on their children and add new parent nodes for them in the same way.  We continue in this manner until we obtain a single root node from which all the leaves are descended; this yields the tree $T(P_1, \bmg)$.
\end{definition}

Next, we define the tree for the $S_2$ using a definition from~\cite{rosenbaum2013c}.  We start by letting $P_2$ be the root of the tree.  We then partition $P_2$ into the cosets obtained by taking $P_2$ mod the subgroup before $P_2$ in $S_2$.  These are the children of the node $P_2$.  We continue this partitioning process until we obtain cosets in $P_2 / 1$; these correspond to the leaves.  We state the definition for general groups, but in our case the groups will always be solvable.

\intoctornot{
\begin{definition}
  \label{defn:TS}
  Let $P_2$ be a group and consider the composition series $S_2$ given by the subgroups $P_{2, 0} = 1 \tril \cdots \tril P_{2, m} = P_2$.  Then $T(S_2)$ is defined to be the rooted tree whose nodes are $\bigcup_i \left(P_2 / P_{2, i}\right)$.  The root node is $P_2$.  The leaf nodes are $\{x\} \in P_2 / 1$ which we identify with the elements $x \in P_2$.  For each node $x P_{2, i + 1} \in P_2 / P_{2, i + 1}$, there is an edge to each $y P_{2, i}$ such that $y P_{2, i} \subseteq x P_{2, i + 1}$.
\end{definition}}{
\begin{definition}[\cite{rosenbaum2013c}]
  \label{defn:TS}
  Let $P_2$ be a group and consider the composition series $S_2$ given by the subgroups $P_{2, 0} = 1 \tril \cdots \tril P_{2, m} = P_2$.  Then $T(S_2)$ is defined to be the rooted tree whose nodes are $\bigcup_i \left(P_2 / P_{2, i}\right)$.  The root node is $P_2$.  The leaf nodes are $\{x\} \in P_2 / 1$ which we identify with the elements $x \in P_2$.  For each node $x P_{2, i + 1} \in P_2 / P_{2, i + 1}$, there is an edge to each $y P_{2, i}$ such that $y P_{2, i} \subseteq x P_{2, i + 1}$.
\end{definition}
}

In order to obtain a tree whose leaves correspond to elements of $G$, we need to combine the trees for $P_1$ and $S_2$.  For this, we require a variant of the rooted product~\cite{godsil1978a} called a leaf product~\cite{rosenbaum2013c}.  Given two rooted trees, their leaf product is obtained by identifying the root node of a copy of the second tree with each leaf node.

\intoctornot{
\begin{definition}
  \label{defn:leaf-prod}
  Let $T_1$ and $T_2$ be trees rooted at $r_1$ and $r_2$.  Then the leaf product $T_1 \leafprod T_2$ is the tree rooted at $r_1$ with vertex set
  \begin{equation*}
    V(T_1) \cup \setb{(x, y)}{x \in L(T_1) \text{ and } y \in V(T_2) \setminus \{r_2\}}
  \end{equation*}
  The set of edges is
  \begin{align*}
    E(T_1) & \cup \setb{(x, (x, y))}{x \in L(T_1) \text{ and } (r_2, y) \in E(T_2)} \\
    {} & \cup \setb{((x, y), (x, z))}{x \in L(T_1) \text{ and } (y, z) \in E(T_2) \text{ where } y, z \not= r_2}
  \end{align*}
\end{definition}}{
\begin{definition}[\cite{rosenbaum2013c}]
  \label{defn:leaf-prod}
  Let $T_1$ and $T_2$ be trees rooted at $r_1$ and $r_2$.  Then the leaf product $T_1 \leafprod T_2$ is the tree rooted at $r_1$ with vertex set
  \begin{equation*}
    V(T_1) \cup \setb{(x, y)}{x \in L(T_1) \text{ and } y \in V(T_2) \setminus \{r_2\}}
  \end{equation*}
  The set of edges is
  \begin{align*}
    E(T_1) & \cup \setb{(x, (x, y))}{x \in L(T_1) \text{ and } (r_2, y) \in E(T_2)} \\
    {} & \cup \setb{((x, y), (x, z))}{x \in L(T_1) \text{ and } (y, z) \in E(T_2) \text{ where } y, z \not= r_2}
  \end{align*}
\end{definition}
}

For convenience, we identify the tuples $(x, (y, z))$, $((x, y), z)$ with $(x, y, z)$ in the vertex set so that leaf products are associative.  It is also useful to define leaf products of tree isomorphisms and bijections between the leaves of two trees\intoct{ as in~\cite{rosenbaum2013c}}.

\intoctornot{
\begin{definition}
  \label{defn:leaf-prod-iso}
  For each $1 \leq i \leq k$, let $T_i$ and $T_i'$ be trees rooted at $r_i$ and $r_i'$ and let $\phi_i : L(T_i) \ra L(T_i')$ be a bijection that extends to a unique isomorphism which we denote by $\hat \phi : T_i \ra T_i'$.  Then the leaf product $\bigleafprod_{i = 1}^k \phi_i : \bigleafprod_{i = 1}^k T_i \ra \bigleafprod_{i = 1}^k T_i'$ sends each $(x_1, \ldots, x_j)$ to $(\hat \phi_1(x_1), \ldots, \hat \phi_j(x_j))$ where each $x_i \in L(T_i)$ for $i < j$, $x_j \in V(T_j) \setminus \{r_j\}$ and $j \leq k$.
\end{definition}}{
\begin{definition}[\cite{rosenbaum2013c}]
  \label{defn:leaf-prod-iso}
  For each $1 \leq i \leq k$, let $T_i$ and $T_i'$ be trees rooted at $r_i$ and $r_i'$ and let $\phi_i : L(T_i) \ra L(T_i')$ be a bijection that extends to a unique isomorphism which we denote by $\hat \phi : T_i \ra T_i'$.  Then the leaf product $\bigleafprod_{i = 1}^k \phi_i : \bigleafprod_{i = 1}^k T_i \ra \bigleafprod_{i = 1}^k T_i'$ sends each $(x_1, \ldots, x_j)$ to $(\hat \phi_1(x_1), \ldots, \hat \phi_j(x_j))$ where each $x_i \in L(T_i)$ for $i < j$, $x_j \in V(T_j) \setminus \{r_j\}$ and $j \leq k$.
\end{definition}
}

It is easy to see that $\bigleafprod_{i = 1}^k \phi_i$ is a well-defined isomorphism from $\bigleafprod_{i = 1}^k T_i$ to $\bigleafprod_{i = 1}^k T_i'$.  We are now finally in a position to define the tree for a augmented $\alpha$-composition pair.

\begin{definition}
  \label{defn:aug-comp}
  Let $(P_1, S_2, \bmg)$ be an augmented $\alpha$-composition pair for a solvable group $G$.  We define $T(P_1, S_2, \bmg) = T(P_1, \bmg) \leafprod T(S_2)$.
\end{definition}

As in the case of $p$-groups, we cannot attach the aforementioned multiplication gadgets directly to the tree $T(P_1, S_2, \bmg)$ because each leaf be attached to $n$ gadgets and would thus have degree $\Omega(n)$; this would cause our algorithm to be extremely slow.  We resolve this by utilizing the leaf product of $T(P_1, S_2, \bmg)$ with itself so that each multiplication gadget is only attached to a constant number of leaves.

The following notation is convenient as it allows us to easily associate elements of $G$ with nodes in the tree $T(P_1, S_2, \bmg)$.  Let $* : \setb{(x_1, x_2)}{x_i \in P_i} \ra G$ by $*(x_1, x_2) = x_1 x_2$ and note that this is a bijection.  Similarly, we define $\bullet : \setb{(x_1, x_2)}{x_i \in Q_i} \ra H$ by $\bullet(x_1, x_2) = x_1 x_2$.  We can then represent each $x \in G$ by the node $*^{-1}(x)$ in $T(P_1, S_2, \bmg)$ and attach the gadget for each multiplication rule $x y = z$ to the nodes $*^{-1}(x)$, $*^{-1}(y)$ and $*^{-1}(z)$.  We formalize this in the following definition.

\begin{figure}[H]
  \centering
  \includegraphics[scale=1.2]{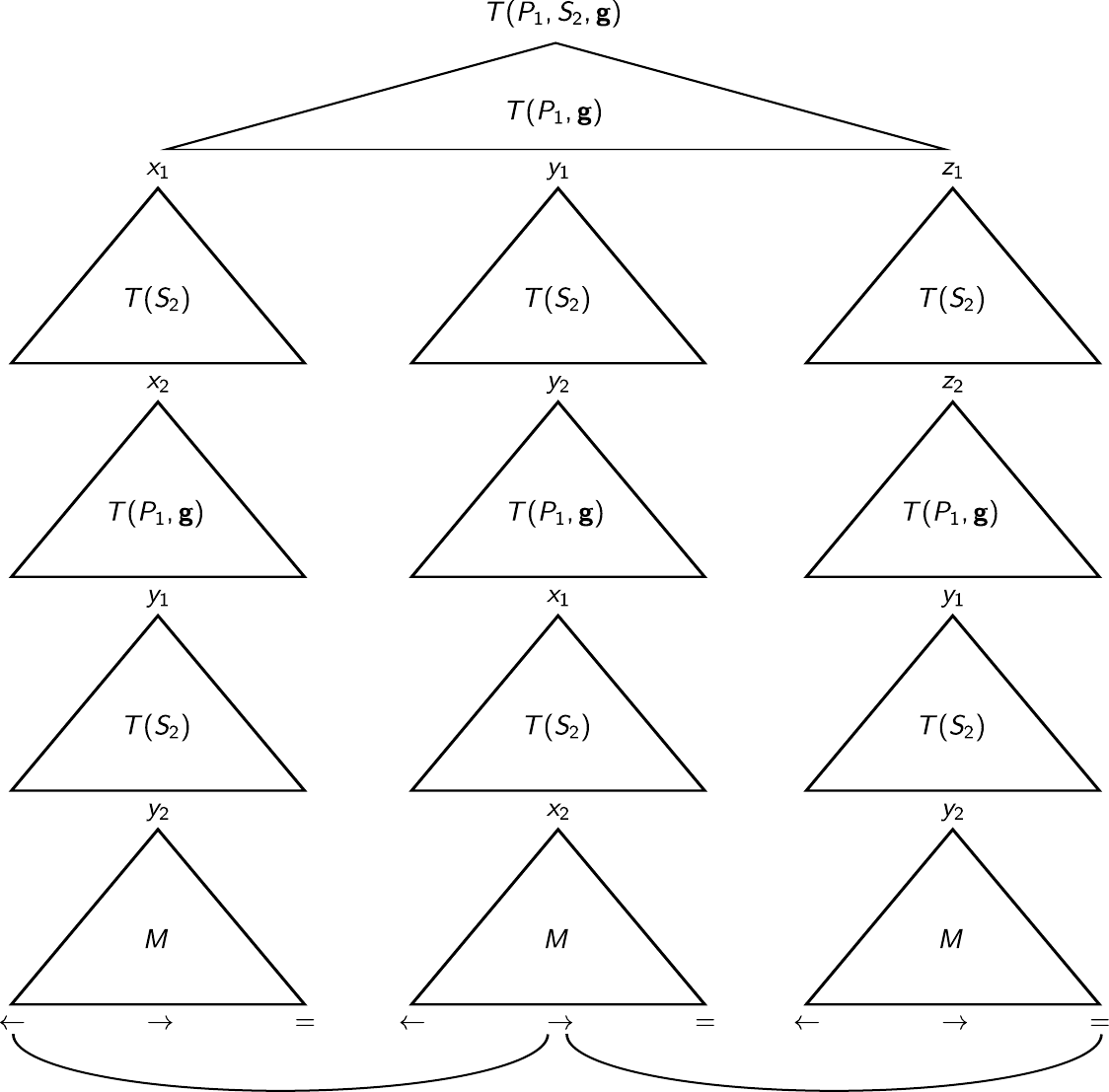}
  \caption{The graph $X(P_1, S_2, \bmg)$ with the multiplication gadget for $x y = z$ where $z = x y$, $*^{-1}(x) = (x_1, x_2)$, $*^{-1}(y) = (y_1, y_2)$ and $*^{-1}(z) = (z_1, z_2)$}
  \label{fig:X-PS}
\end{figure}

\begin{definition}
  \label{defn:X-PS}
  Let $(P_1, S_2, \bmg)$ be an augmented $\alpha$-composition pair for a solvable group $G$ and define $M$ to be the tree with a root connected to three nodes $\la$, $\ra$ and $=$ with colors ``left'', ``right'' and ``equals'' respectively.  We construct $X(P_1, S_2, \bmg)$ by starting with the tree $T(P_1, S_2, \bmg) \leafprod T(P_1, S_2, \bmg) \leafprod M$ and connecting multiplication gadgets to the leaf nodes.  For each $x, y \in G$, we create the path $((*^{-1}(x), *^{-1}(y), \la), (*^{-1}(y), *^{-1}(x), \ra), (*^{-1}(x y), *^{-1}(y), =))$.  We color each node $(x_1, 1)$ where $x_1 \in P_1$ ``second identity.''  Finally, we color the remaining nodes ``internal.''

\end{definition}

The graph $X(P_1, S_2, \bmg)$ can be thought of a rooted tree with edges added between some nodes at the same levels.  The edges from the original tree are called \emph{tree edges} and the edges between nodes at the same level are called \emph{cross edges}.  We show $X(P_1, S_2, \bmg)$ in \figref{X-PS}.

The correctness of our reduction is based on the fact that two augmented composition pairs $(P_1, S_2, \bmg)$ and $(Q_1, S_2', \bmh)$ are isomorphic \ifft $X(P_1, S_2, \bmg)$ and $X(Q_1, S_2', \bmh)$ are isomorphic.  We prove this in the remainder of this subsection.

Some additional terminology is required for the proof.  We define $\augcomp$ to be the \classorcat\spc of augmented composition pairs for finite solvable groups \inpuborpriv{and}{and isomorphisms between them;} let $\augcomptree$ be the \classorcat\spc of graphs that are isomorphic to the graph $X(P_1, S_2, \bmg)$ for some augmented composition pair $(P_1, S_2, \bmg)$\inpriv{ and isomorphisms between such graphs}.  We overload the symbol $X$ from \defref{X-PS} by defining $X(\phi) : X(P_1, S_2, \bmg) \ra X(Q_1, S_2', \bmh)$ to be $\restr{\phi}{P_1} \leafprod \restr{\phi}{P_2} \leafprod \restr{\phi}{P_1} \leafprod \restr{\phi}{P_2} \leafprod \id_M$ for each $\alpha$-composition pair isomorphism $\phi : (P_1, S_2, \bmg) \ra (Q_1, S_2', \bmh)$\inpriv{ (thus obtaining a functor)}.

In order to prove the correctness of our reduction, we need to show that the augmented $\alpha$-composition pairs $(P_1, S_2, \bmg)$ and $(Q_1, S_2', \bmh)$ are isomorphic \ifft the graphs $X(P_1, S_2, \bmg)$ and $X(Q_1, S_2', \bmh)$ are isomorphic.  The forward direction of the implication \inpuborpriv{is equivalent to}{follows from} the assertion that \inpuborpriv{$X_{(P_1, S_2, \bmg), (Q_1, S_2', \bmh)} : \iso((P_1, S_2, \bmg), (Q_1, S_2', \bmh)) \ra \iso(X(P_1, S_2, \bmg), X(Q_1, S_2', \bmh))$ is well-defined}{$X$ is a functor}.  Proving the converse is more difficult and is one of the main lemmas of this subsection.


\begin{lemma}
  \label{lem:X-functor}
  Let $(P_1, S_2, \bmg)$ and $(Q_1, S_2', \bmh)$ be augmented $\alpha$-composition pairs for the solvable groups $G$ and $H$.  Then \inpuborpriv{the map
    \begin{equation*}
      X_{(P_1, S_2, \bmg), (Q_1, S_2', \bmh)} : \iso((P_1, S_2, \bmg), (Q_1, S_2', \bmh)) \ra \iso(X(P_1, S_2, \bmg), X(Q_1, S_2', \bmh))
    \end{equation*}
is well-defined.}{$X : \augcomp \ra \augcomptree$ is a functor.}
\end{lemma}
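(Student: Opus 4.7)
The plan is to decompose the verification of $X(\phi)$ being a graph isomorphism along the leaf-product structure. The first step is to observe that an augmented $\alpha$-composition pair isomorphism $\phi$ restricts to group isomorphisms $\restr{\phi}{P_1} : P_1 \ra Q_1$ and $\restr{\phi}{P_2} : P_2 \ra Q_2$ that send $\bmg$ to $\bmh$ and map each subgroup $P_{2,i}$ in $S_2$ to the corresponding $P_{2,i}'$ in $S_2'$. Invoking \lemref{gen-ord}(b), the ordering $\prec_{\bmg}$ is preserved by $\restr{\phi}{P_1}$, so the balanced binary tree $T(P_1, \bmg)$ of \defref{TP} is mapped to $T(Q_1, \bmh)$ by a color-preserving tree isomorphism extending $\restr{\phi}{P_1}$ on leaves. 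Similarly, the map $x P_{2,i} \mapsto \phi(x) P_{2,i}'$ is a tree isomorphism $T(S_2) \ra T(S_2')$ extending $\restr{\phi}{P_2}$, since $\phi$ sends cosets of $P_{2,i}$ to cosets of $P_{2,i}'$.

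By \defref{leaf-prod-iso}, the leaf product of these two tree isomorphisms yields a tree isomorphism $T(P_1, S_2, \bmg) \ra T(Q_1, S_2', \bmh)$; taking the further leaf product with itself and with $\id_M$ gives a tree isomorphism between the underlying trees of $X(P_1, S_2, \bmg)$ and $X(Q_1, S_2', \bmh)$. The remaining work is to verify that the cross edges (multiplication gadgets) and the vertex colors are preserved. The key observation is that $*$ commutes with $\phi$: since every prime dividing $\abs{P_1}$ exceeds $\alpha$ while every prime dividing $\abs{P_2}$ is at most $\alpha$ by \defref{alpha-decomp}, we have $P_1 \cap P_2 = 1$ and hence $\abs{P_1 P_2} = \abs{P_1}\abs{P_2} = \abs{G}$, so $*$ is indeed a bijection; consequently, if $*^{-1}(x) = (x_1, x_2)$, then $\bullet^{-1}(\phi(x)) = (\phi(x_1), \phi(x_2))$ because $\phi(x_1) \in Q_1$ and $\phi(x_2) \in Q_2$.

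Using this identification, preservation of the multiplication gadgets reduces to the fact that $\phi$ is a group homomorphism: each path $((*^{-1}(x), *^{-1}(y), \la), (*^{-1}(y), *^{-1}(x), \ra), (*^{-1}(xy), *^{-1}(y), =))$ maps to the corresponding path for $(\phi(x), \phi(y), \phi(x)\phi(y))$, since $\phi(xy) = \phi(x)\phi(y)$ and $\id_M$ fixes the color labels $\la, \ra, =$. The ``second identity'' nodes $(x_1, 1)$ with $x_1 \in P_1$ are sent to $(\phi(x_1), 1)$ using $\phi(1) = 1$, preserving that color, and ``internal'' nodes map to ``internal'' nodes by elimination. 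Functoriality, when required, follows at once from the fact that leaf products commute with composition of tree isomorphisms and from the functoriality of restriction.

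The main obstacle will be the bookkeeping required to reconcile the three different ways of labeling leaves: as pairs $(x_1, x_2) \in L(T(P_1, \bmg)) \times L(T(S_2))$ via the leaf product, as tuples including the $M$-factor in the iterated leaf product $T(P_1, S_2, \bmg) \leafprod T(P_1, S_2, \bmg) \leafprod M$, and as elements of $G$ via $*$. Once the commutativity $\bullet^{-1} \circ \phi = (\restr{\phi}{P_1} \times \restr{\phi}{P_2}) \circ *^{-1}$ is pinned down, every remaining verification is a mechanical consequence of $\phi$ being an augmented $\alpha$-composition pair isomorphism.
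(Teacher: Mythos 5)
Your proposal is correct and follows essentially the same route as the paper's proof: use \lemref{gen-ord} to get a tree isomorphism on the $T(P_1,\bmg)$ factor, the subgroup-chain preservation to get one on $T(S_2)$, take leaf products, note $\bullet^{-1}(\phi(x)) = (\phi(x_1),\phi(x_2))$, and then check that the multiplication gadgets and the ``second identity'' coloring are respected because $\phi$ is a homomorphism with $\phi[P_i]=Q_i$. The only cosmetic difference is that you spell out explicitly why $*$ is a bijection, which the paper simply asserts when defining $*$.
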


Before proceeding with the proof, it is convenient to introduce additional notation.  Let $x, y \in G$.  Consider the sequence of nodes that starts at $*^{-1}(x)$, follows tree edges (away from the root) to a node colored ``left'', follows a cross edge to a node colored ``right'', then follows tree edges (towards the root) to $*^{-1}(y)$, follows tree edges (away from the root) back to the same node colored ``right'' and finally follows a cross edge to a node colored ``equal''; we call this a $W$-\emph{sequence} from $x$ to $y$ to $x y$ since its shape resembles a $W$ (see \figref{X-PS}).  Since $W$-sequences correspond to multiplication gadgets, there is exactly one $W$-sequence from $*^{-1}(x)$ to $*^{-1}(y)$: namely, the one that results from the multiplication gadget
\begin{equation*}
  ((*^{-1}(x), *^{-1}(y), \la), (*^{-1}(y), *^{-1}(x), \ra), (*^{-1}(x y), *^{-1}(y), =)).
\end{equation*}
Therefore, we denote \emph{the} $W$-sequence from $x$ to $y$ to $x y$ by $W(x, y)$.  We now proceed with our proof.

\begin{proof}
  Consider the augmented $\alpha$-composition pairs $(P_1, S_2, \bmg)$ and $(Q_1, S_2', \bmh)$ for the solvable groups $G$ and $H$.  Let $\phi : (P_1, S_2, \bmg) \ra (Q_1, S_2', \bmh)$ be an isomorphism and let $P_{2, 0} = 1 \tril \cdots \tril P_{2, m} = P_2$ and $Q_{2, 0} = 1 \tril \cdots \tril Q_{2, m} = Q_2$ be the subgroup chains for $S_2$ and $S_2'$.  Because $\phi(\bmg) = \bmh$, it follows from~\lemref{gen-ord} that $\restr{\phi}{P_1}$ extends to a unique isomorphism between the rooted colored trees $T(P_1, \bmg)$ and $T(Q_1, \bmh)$.  Moreover, since each $\phi[P_{2, i}] = Q_{2, i}$, we see that $\restr{\phi}{P_2}$ extends to a unique isomorphism from $T(S_2)$ to $T(S_2')$.  Thus, $\restr{\phi}{P_1} \leafprod \restr{\phi}{P_2}$ is an isomorphism from $T(P_1, \bmg) \leafprod T(S_2)$ to $T(Q_1, \bmh) \leafprod T(S_2')$; therefore, $X(\phi) = \restr{\phi}{P_1} \leafprod \restr{\phi}{P_2} \leafprod \restr{\phi}{P_1} \leafprod \restr{\phi}{P_2} \leafprod \id_M$ is a tree isomorphism.

  Let $x, y \in G$ and let $*^{-1}(x) = (x_1, x_2)$.  Then $X(\phi)$ maps $*^{-1}(x)$ to $(\phi(x_1), \phi(x_2)) = \bullet^{-1}(\phi(x))$ as $\phi(x) = \phi(x_1) \phi(x_2)$.  Similarly, recalling that we identified expressions of the forms $((x_1, x_2), (y_1, y_2))$ and $(x_1, x_2, y_1, y_2)$, we see that $X(\phi)$ maps $(*^{-1}(x), *^{-1}(y))$ to $(\bullet^{-1}(\phi(x)), \bullet^{-1}(\phi(y)))$

  Consider the path
  \begin{equation*}
    ((*^{-1}(x), *^{-1}(y), \la), (*^{-1}(y), *^{-1}(x), \ra), (*^{-1}(x y), *^{-1}(y), =))
  \end{equation*}
  in $X(P_1, S_2, \bmg)$.  The image of this path under $X(\phi)$ is
  \begin{equation*}
    ((\bullet^{-1}(\phi(x)), \bullet^{-1}(\phi(y)), \la), (\bullet^{-1}(\phi(y)), \bullet^{-1}(\phi(x)), \ra), (\bullet^{-1}(\phi(x y)), \bullet^{-1}(\phi(y)), =)).
  \end{equation*}
  By \defref{X-PS}, this path is one of the multiplication gadgets in $X(Q_1, S_2', \bmh)$.  Thus, $X(\phi)$ maps each $W$-sequence in $X(P_1, S_2, \bmg)$ to a $W$-sequence in $X(Q_1, S_2', \bmh)$.  Moreover,
 $X(\phi)$ maps each node $(x_1, 1)$ to $(\phi(x_1), 1)$, so it respects the ``second identity'' color.  This implies that $X(P_1, S_2, \bmg) \cong X(Q_1, S_2', \bmh)$ since both graphs have the same number of multiplication gadgets (and hence the same number of $W$-sequences).\inpriv{

    Finally, if $(R_1, S_2'', \bmk)$ is an $\alpha$-composition pair and $\psi$ is an isomorphism from $(Q_1, S_2', \bmh)$ to $(R_1, S_2'', \bmk)$, then $X(\psi \phi) = X(\psi) X(\phi)$ and $X(\id_{(P_1, S_2, \bmg)}) = \id_{X(P_1, S_2, \bmg)}$.  Thus, $X$ is a functor.}
\end{proof}

In order to show if that if the graphs $X(P_1, S_2, \bmg)$ and $X(Q_1, S_2', \bmh)$ are isomorphic then so are the augmented $\alpha$-composition pairs $(P_1, S_2, \bmg)$ and $(Q_1, S_2', \bmh)$, it suffices to show that \inpuborpriv{the map $X_{(P_1, S_2, \bmg), (Q_1, S_2', \bmh)} : \iso((P_1, S_2, \bmg), (Q_1, S_2', \bmh)) \ra \iso(X(P_1, S_2, \bmg), X(Q_1, S_2', \bmh))$ is surjective}{$X$ is a full functor}.  This is the key to our correctness proof and implies that augmented $\alpha$-composition pair isomorphism reduces to testing isomorphism of the resulting graphs. 
To do this, we need to show that every isomorphism from $X(P_1, S_2, \bmg)$ to $X(Q_1, S_2', \bmh)$ can be written as a leaf product of group isomorphisms.  We accomplish this by restricting the isomorphism between the graphs to certain subsets of nodes and showing that the isomorphism is the leaf product of these restrictions (which turn out to be group isomorphisms).  An isomorphism $\theta : X(P_1, S_2, \bmg) \ra X(Q_1, S_2', \bmh)$ induces the bijection $\phi = \bullet \circ \theta \circ *^{-1} : G \ra H$.  We call this $\phi$ the \emph{induced bijection} for $\theta$.

\begin{lemma}
  \label{lem:iso-decomp}
  Let $X(P_1, S_2, \bmg)$ and $X(Q_1, S_2', \bmh)$ be augmented $\alpha$-composition pairs for the solvable groups $G$ and $H$, let $\theta : X(P_1, S_2, \bmg) \ra X(Q_1, S_2', \bmh)$ be an isomorphism and let $\phi$ be its induced bijection.  Then


  \begin{enumerate}
  \item $\phi : G \ra H$ is a group isomorphism,
  \item $\phi_1 = \restr{\phi}{P_1} : P_1 \ra Q_1$ and $\phi_2 = \restr{\phi}{P_2} : P_2 \ra Q_2$ are group isomorphisms,
  \item $\theta = \phi_1 \leafprod \phi_2 \leafprod \phi_1 \leafprod \phi_2 \leafprod \id_M$ and
  \item $\phi : (P_1, S_2, \bmg) \ra (Q_1, S_2', \bmh)$ is an augmented $\alpha$-composition pair isomorphism.
  \end{enumerate}
\end{lemma}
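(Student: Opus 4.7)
The plan is to use the graph structure of $X(P_1, S_2, \bmg)$ to recover group-theoretic data in three stages. First, I split $\theta$ into a rooted colored tree isomorphism plus separate constraints coming from the cross edges. Second, I extract from the tree part a bijection $\phi_1 : P_1 \to Q_1$ together with, for each $x_1 \in P_1$, a bijection $\psi_{x_1} : P_2 \to Q_2$ that respects the composition series. Third, I exploit the cross edges to force consistency across copies and derive the homomorphism property.

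For stage one, note that cross edges in $X(P_1, S_2, \bmg)$ connect only nodes colored $\la$, $\ra$, or $=$, whereas the tree edges incident to such nodes go up to parents colored ``internal''. Hence $\theta$ preserves the partition of edges into tree edges and cross edges, inducing a rooted-colored-tree isomorphism $\bar\theta$ on $T(P_1, \bmg) \leafprod T(S_2) \leafprod T(P_1, \bmg) \leafprod T(S_2) \leafprod M$. For stage two, the five levels in this leaf product are separated by depth from the root, so $\bar\theta$ respects the hierarchy. The distinct position colors on the leaves of each copy of $T(P_1, \bmg)$ force $\bar\theta$ to match position~$i$ to position~$i$ in both the first and the second copy; this gives $\phi_1 : P_1 \to Q_1$ from the first copy and, a~priori, different bijections $\tilde\phi_1^{(x_1, x_2)} : P_1 \to Q_1$ from each second copy. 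For each leaf $x_1$ of the first $T(P_1, \bmg)$, restricting $\bar\theta$ to the attached copy of $T(S_2)$ gives a bijection $\psi_{x_1} : P_2 \to Q_2$ that, by the coset structure of $T(S_2)$ (\defref{TS}), maps each $P_{2, i}$-coset to a $Q_{2, i}$-coset; the ``second identity'' color on $(x_1, 1)$ pins $\psi_{x_1}(1) = 1$, so in fact $\psi_{x_1}(P_{2, i}) = Q_{2, i}$ for every $i$. A similar analysis in the second copy of $T$ yields a~priori more bijections $\tilde\psi^{(x_1, x_2, y_1)} : P_2 \to Q_2$.

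For stage three, I will write $\bar\theta(x_1, x_2, y_1, y_2) = (\phi_1(x_1), \psi_{x_1}(x_2), \tilde\phi_1^{(x_1, x_2)}(y_1), \tilde\psi^{(x_1, x_2, y_1)}(y_2))$. Applying $\theta$ to the $\la$-$\ra$ cross edge between $(x_1, x_2, y_1, y_2, \la)$ and $(y_1, y_2, x_1, x_2, \ra)$, and using that every $\la$-$\ra$ cross edge of $X(Q_1, S_2', \bmh)$ has the swap form $((\bullet^{-1}(u), \bullet^{-1}(v), \la), (\bullet^{-1}(v), \bullet^{-1}(u), \ra))$, I equate coordinate tuples at the two endpoints to conclude $\tilde\phi_1^{(x_1, x_2)} = \phi_1$ and $\tilde\psi^{(x_1, x_2, y_1)} = \psi_{y_1}$ identically in $(x_1, x_2)$. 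Applying $\theta$ to the $\ra$-$=$ cross edge then yields $\phi(xy) = \phi(x)\phi(y)$, where the induced bijection $\phi$ satisfies $\phi(x) = \phi_1(x_1) \psi_{x_1}(x_2)$ for $*^{-1}(x) = (x_1, x_2)$; since $\theta$ is a bijection, so is $\phi$, proving~(a). Setting $\phi_2 := \psi_1$ and noting that $\phi_1(1) = 1$ (position~$1$ in $\prec_{\bmg}$ is the identity), the homomorphism identity $\phi(x_1 x_2) = \phi(x_1) \phi(x_2)$ for $x_1 \in P_1$, $x_2 \in P_2$ unpacks to $\phi_1(x_1) \psi_{x_1}(x_2) = \phi_1(x_1) \psi_1(x_2)$; cancellation gives $\psi_{x_1} = \phi_2$ for every $x_1 \in P_1$, yielding~(b) and~(c). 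For~(d), the equality $\phi_2(P_{2, i}) = Q_{2, i}$ shows that $\phi$ sends $S_2$ to $S_2'$, and $\phi(\bmg) = \bmh$ follows by induction on position: $g_j$ occupies position~$j+1$ in $\prec_{\bmg}$ (its canonical word being $(g_j)$), and the position-matching property of $\phi_1$ then forces $\phi_1(g_j) = h_j$.

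The main obstacle I anticipate lies in stage three, in the coordinate bookkeeping needed to eliminate the a~priori dependence of the second-copy bijections on the outer coordinates $(x_1, x_2)$. The key insight will be to exploit the $(u, v) \mapsto (v, u)$ swap symmetry of the $\la$-$\ra$ cross edges to collapse these dependencies into the single pair $(\phi_1, \phi_2)$; without this collapse one cannot conclude that $\theta$ is a leaf product of group isomorphisms, and the decomposition in~(c) would fail.
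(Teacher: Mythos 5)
Your proposal is correct and follows essentially the same route as the paper's proof: in both, the crux is that a color-preserving isomorphism must send multiplication gadgets to multiplication gadgets (your cross-edge coordinate bookkeeping is the paper's $W$-sequence argument), the ``second identity'' color pins the $P_1$-coordinate and the identity, and your cancellation $\phi_1(x_1)\psi_{x_1}(x_2)=\phi_1(x_1)\psi_1(x_2)$ is exactly the paper's step for part (c), with part (d) recovered from the path of nodes above $(1,1)$, i.e.\ the coset structure of $T(S_2)$. The only divergence is that you also verify $\phi(\bmg)=\bmh$ via the leaf position colors, a point the paper's proof of (d) silently omits; your argument for it is fine under the implicit convention that the generating tuples consist of distinct non-identity elements and have equal length (without such a convention the claim $\phi(\bmg)=\bmh$ is not forced by the graphs, since distinct tuples can induce the same ordering $\prec$, but that is an issue with the lemma as stated rather than with your proof).
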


\begin{proof}
  Let us start with part (a).  It follows from the assumption that $\theta$ is an isomorphism (and hence bijective) that $\phi$ is a bijection.

  Let $x, y \in G$.  Now, $\theta$ maps the nodes $*^{-1}(x)$ and $*^{-1}(y)$ in $X(P_1, S_2, \bmg)$ to $\bullet^{-1}(\phi(x))$ and $\bullet^{-1}(\phi(y))$ by definition of $\phi$.  It follows that $\theta$ maps the $W$-sequence $W(x, y)$ from $x$ to $y$ to $xy$ in $X(P_1, S_2, \bmg)$ to the $W$-sequence $W(\phi(x), \phi(y))$ in $X(Q_1, S_2', \bmh)$.  Now, since $\theta$ maps $*^{-1}(xy)$ to $\bullet^{-1}(\phi(xy))$, it follows that the $W$-sequence $W(\phi(x), \phi(y))$ in $X(Q_1, S_2', \bmh)$ is from $\phi(x)$ to $\phi(y)$ to $\phi(xy)$.  Therefore, by \defref{X-PS}, $\phi(x y) = \phi(x) \phi(y)$ so $\phi$ is a group isomorphism.


  Now we prove (b).  Let $x_1 \in P_1$.  Because $\theta$ respects the ``second identity'' color, it follows that it maps $(x_1, 1)$ to $(x_1', 1)$ for some $x_1' \in Q_1$.  Then $x_1' = \phi(x_1)$ which implies that $\phi[P_1] = Q_1$.

  Now let $x_2 \in P_2$.  Because $\phi$ is an isomorphism, $\phi(1) = 1$; thus, $\theta$ sends the node $(1, 1)$ to $(1, 1)$ which implies that it maps $1$ to $1$.  Thus, for some $x_2' \in Q_2$,
  \begin{align*}
    \theta(1, x_2)      & = (1, x_2') \\
    \theta(*^{-1}(x_2)) & = \bullet^{-1}(x_2') \\
    \phi(x_2)           & = x_2'.
  \end{align*}
  Thus, $\theta(1, x_2) = (1, \phi(x_2))$ so $\phi[P_2] = Q_2$ and $\phi_2$ is a group isomorphism.

  For part (c), let $x, y \in G$ and $*^{-1}(x) = (x_1, x_2)$.  By part (b), $\theta$ sends the node $x_1$ to $\phi_1(x_1)$.  Therefore, for some $x_2' \in Q_2$,
  \begin{align*}
    \theta(x_1, x_2)          & = (\phi(x_1), x_2') \\
    \bullet(\theta(x_1, x_2)) & = \phi(x_1) x_2' \\
    \phi(x)                   & = \phi(x_1) x_2'.
  \end{align*}
  Since $\phi(x) = \phi(x_1) \phi(x_2)$, this implies that $x_2' = \phi(x_2)$ so $\theta$ maps $*^{-1}(x) = (x_1, x_2)$ to $\bullet^{-1}(\phi(x)) = (\phi(x_1), \phi(x_2))$.
  
  Now consider a node $(*^{-1}(x), *^{-1}(y), \ell)$ where $x, y \in G$ and $\ell \in \{\la, \ra, =\}$.  As $(*^{-1}(x), *^{-1}(y))$ is in the subtree rooted at $*^{-1}(x)$, $\theta$ sends it to a node of the form $(\bullet^{-1}(\phi(x)), \bullet^{-1}(b))$ for some $b \in H$.  Similarly, $\theta$ maps the node $(*^{-1}(y), *^{-1}(x))$ to a node of the form $(\bullet^{-1}(\phi(y)), \bullet^{-1}(a))$ for some $a \in H$.  Now, because $(*^{-1}(x), *^{-1}(y))$ and $(*^{-1}(y), *^{-1}(x))$ are in the $W$-sequence from $x$ to $y$ to $x y$, $(\bullet^{-1}(\phi(x)), \bullet^{-1}(b))$ and $(\bullet^{-1}(\phi(y)), \bullet^{-1}(a))$ are in the $W$-sequence from $\phi(x)$ to $\phi(y)$ to $\phi(x y)$.  Then by \defref{X-PS}, $a = \phi(x)$ and $b = \phi(y)$.  Therefore, $\theta$ maps $(*^{-1}(x), *^{-1}(y))$ to $(*^{-1}(\phi(x)), *^{-1}(\phi(y)))$.  Because of the coloring of the leaves in \defref{X-PS}, it follows that $\theta = \phi_1 \leafprod \phi_2 \leafprod \phi_1 \leafprod \phi_2 \leafprod \id_M$.

  Finally, let us prove part (d).  We already know that $\phi$ is a group isomorphism by part (a).  By part (b), we know that each $\phi[P_i] = Q_i$.

  Let $P_{2, 0} = 1 \tril \cdots \tril P_{2, m} = P_2$ and $Q_{2, 0} = 1 \tril \cdots \tril Q_{2, m} = Q_2$ be the subgroup chains for $S_2$ and $S_2'$.  We need to show that each $\phi[P_{2, i}] = Q_{2, i}$.  By part (c), $\theta$ maps $(1, 1)$ in $X(P_1, S_2, \bmg)$ to $(1, 1)$ in $X(Q_1, S_2', \bmh)$.  Now the path from the root of $X(P_1, S_2, \bmg)$ to $(1, 1)$ contains the nodes $(1, P_{2, m}), \ldots, (1, P_{2, 0})$ (in that order).  Moreover, the descendants of the node $(1, P_{2, i})$ that are in $P_1 \times P_2$ are $\setb{(1, x_2)}{x_2 \in P_{2, i}}$.  Similarly, the path from the root of $X(Q_1, S_2', \bmh)$ to $(1, 1)$ contains the nodes $(1, Q_{2, m}), \ldots, (1, Q_{2, 0})$ (in that order) and the descendants of the node $(1, Q_{2, i})$ that are also in $Q_1 \times Q_2$ are $\setb{(1, x_2')}{x_2' \in Q_{2, i}}$.  Therefore, $\theta$ maps each set $\setb{(1, x_2)}{x_2 \in P_{2, i}}$ to $\setb{(1, x_2')}{x_2' \in Q_{2, i}}$.  Then, by definition of $\phi$, $\phi[P_{2, i}] = Q_{2, i}$ and part (d) is proved.
\end{proof}

We now prove that \inpuborpriv{$X_{(P_1, S_2, \bmg), (Q_1, S_2', \bmh)}$ is bijective}{$X$ is a fully faithful functor}.  For isomorphism testing, we only need to show that it is \inpuborpriv{surjective}{full}.  However, we will need it to be \inpuborpriv{injective}{faithful} later when we discuss canonical forms.

\begin{theorem}
\label{thm:X-fff}
\inpuborpriv{Let $(P_1, S_2, \bmg)$ and $(Q_1, S_2', \bmh)$ be augmented $\alpha$-composition pairs for the solvable groups $G$ and $H$.  Then $X_{(P_1, S_2, \bmg), (Q_1, S_2', \bmh)}$ is a bijection}{$X : \augcomp \ra \augcomptree$ is a fully faithful functor and can be evaluated in polynomial time}.\inpub{  Moreover, both $X(P_1, S_2, \bmg)$ and $X(\phi)$ where $\phi \in \iso((P_1, S_2, \bmg), (Q_1, S_2', \bmh))$ can be computed in polynomial time.}
\end{theorem}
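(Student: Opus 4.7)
The plan is to reduce the theorem to the two preceding lemmas: Lemma~\ref{lem:X-functor} shows that $X$ is well-defined on morphisms, and Lemma~\ref{lem:iso-decomp} does essentially all the heavy lifting for surjectivity of $X_{(P_1, S_2, \bmg), (Q_1, S_2', \bmh)}$. The remaining tasks are to verify injectivity and to check that the construction is polynomial time, both of which are routine.

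For surjectivity, I would take an arbitrary graph isomorphism $\theta : X(P_1, S_2, \bmg) \ra X(Q_1, S_2', \bmh)$ and form its induced bijection $\phi = \bullet \circ \theta \circ *^{-1} : G \ra H$. Lemma~\ref{lem:iso-decomp}(d) then gives that $\phi$ is an augmented $\alpha$-composition pair isomorphism, and part~(c) of the same lemma gives $\theta = \phi_1 \leafprod \phi_2 \leafprod \phi_1 \leafprod \phi_2 \leafprod \id_M = X(\phi)$. Hence every $\theta$ lies in the image of $X_{(P_1, S_2, \bmg), (Q_1, S_2', \bmh)}$.

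For injectivity, suppose $X(\phi) = X(\psi)$ for two augmented $\alpha$-composition pair isomorphisms $\phi, \psi : (P_1, S_2, \bmg) \ra (Q_1, S_2', \bmh)$. Restricting both maps to the leaves of $T(P_1, \bmg) \leafprod T(S_2)$, which via $*$ are in bijection with $G$, forces $\phi_1(x_1) = \psi_1(x_1)$ and $\phi_2(x_2) = \psi_2(x_2)$ for every $x_1 \in P_1$, $x_2 \in P_2$. Since $G = P_1 P_2$ and both maps are homomorphisms, $\phi(x_1 x_2) = \phi_1(x_1) \phi_2(x_2) = \psi_1(x_1) \psi_2(x_2) = \psi(x_1 x_2)$, so $\phi = \psi$.

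For the polynomial-time claim, $T(P_1, \bmg)$ is a balanced binary tree on $|P_1|$ leaves whose colors are the positions in $\prec_{\bmg}$, computable in polynomial time by Lemma~\ref{lem:gen-ord}(c); $T(S_2)$ has $\mathrm{poly}(n)$ nodes and is read directly from the subgroup chain; the two leaf products and the constant-size tree $M$ add only $\mathrm{poly}(n)$ further vertices; and the $n^2$ multiplication gadgets, each of constant size, are attached by looking up products in the Cayley table and decoding $G$ as $P_1 \times P_2$ via $*$. Computing $X(\phi)$ given $\phi$ just means applying $\phi_1$ and $\phi_2$ componentwise to every tuple and leaving $\id_M$ fixed. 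The main conceptual obstacle---ruling out spurious graph isomorphisms not of the form $X(\phi)$---has already been surmounted in Lemma~\ref{lem:iso-decomp}, so the proof here is essentially an assembly of that lemma with Lemma~\ref{lem:X-functor} together with routine polynomial-time bookkeeping.
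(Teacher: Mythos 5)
Your proposal is correct and follows essentially the same route as the paper: well-definedness from Lemma~\ref{lem:X-functor}, surjectivity by applying Lemma~\ref{lem:iso-decomp} to the induced bijection of an arbitrary graph isomorphism, and injectivity by noting that equality of the leaf products forces $\phi_i = \psi_i$ and hence $\phi = \psi$ since $G = P_1 P_2$. The only cosmetic difference is that you fold the polynomial-time bookkeeping into the same argument, whereas the paper dispatches it separately (Lemma~\ref{lem:X-poly}) with the same observation.
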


\begin{proof}
We know that \inpuborpriv{$X_{(P_1, S_2, \bmg), (Q_1, S_2', \bmh)}$ is well-defined}{$X$ is a functor} by \lemref{X-functor}.  Let $\theta : X(P_1, S_2, \bmg) \ra X(Q_1, S_2', \bmh)$ be an isomorphism.  By \lemref{iso-decomp}, the induced bijection $\phi : (P_1, S_2, \bmg) \ra (Q_1, S_2', \bmh)$ is an isomorphism and $\theta = \phi_1 \leafprod \phi_2 \leafprod \phi_1 \leafprod \phi_2 \leafprod \id_M$ where each $\phi_i = \restr{\phi}{P_i}$.  Then $X(\phi) = \theta$ so $X_{(P_1, S_2, \bmg), (Q_1, S_2', \bmh)}$ is surjective.

Let $\phi, \psi : (P_1, S_2, \bmg) \ra (Q_1, S_2', \bmh)$ be isomorphisms and suppose that $X(\phi) = X(\psi)$.  Then $\phi_1 \leafprod \phi_2 \leafprod \phi_1 \leafprod \phi_2 \leafprod \id_M = \psi_1 \leafprod \psi_2 \leafprod \psi_1 \leafprod \psi_2 \leafprod \id_M$ where each $\phi_i = \restr{\phi}{P_i}$ and each $\psi_i = \restr{\psi}{P_i}$.  Therefore, each $\phi_i = \psi_i$ so $X_{(P_1, S_2, \bmg), (Q_1, S_2', \bmh)}$ is injective.
\end{proof}

Correctness of our reduction now follows.

\begin{corollary}
  \label{cor:aug-alpha-red-cor}
  Let $(P_1, S_2, \bmg)$ and $(Q_1, S_2', \bmh)$ be augmented $\alpha$-composition pairs for the solvable groups $G$ and $H$.  Then $(P_1, S_2, \bmg) \cong (Q_1, S_2', \bmh)$ \ifft $X(P_1, S_2, \bmg) \cong X(Q_1, S_2', \bmh)$.
\end{corollary}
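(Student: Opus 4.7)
The plan is to derive this corollary as an immediate consequence of \thmref{X-fff}, which has already established the heavy lifting: the map $X$ induces a bijection between the set of augmented $\alpha$-composition pair isomorphisms from $(P_1, S_2, \bmg)$ to $(Q_1, S_2', \bmh)$ and the set of graph isomorphisms from $X(P_1, S_2, \bmg)$ to $X(Q_1, S_2', \bmh)$. Since a bijection between two sets exists iff the sets are simultaneously nonempty or simultaneously empty, existence of an isomorphism on one side corresponds exactly to existence on the other.

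For the forward direction, suppose $\phi : (P_1, S_2, \bmg) \ra (Q_1, S_2', \bmh)$ is an augmented $\alpha$-composition pair isomorphism. Then by \lemref{X-functor} (equivalently, by well-definedness of $X$ in \thmref{X-fff}), $X(\phi)$ is an isomorphism from $X(P_1, S_2, \bmg)$ to $X(Q_1, S_2', \bmh)$, so the two graphs are isomorphic.

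For the reverse direction, suppose $\theta : X(P_1, S_2, \bmg) \ra X(Q_1, S_2', \bmh)$ is a graph isomorphism. By the surjectivity part of \thmref{X-fff}, there exists an augmented $\alpha$-composition pair isomorphism $\phi : (P_1, S_2, \bmg) \ra (Q_1, S_2', \bmh)$ with $X(\phi) = \theta$; in particular, such a $\phi$ exists, so $(P_1, S_2, \bmg) \cong (Q_1, S_2', \bmh)$.

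There is no real obstacle here: all of the conceptual difficulty has been packaged into \thmref{X-fff}, and the corollary is simply the statement that a fully faithful functor preserves and reflects the isomorphism relation on objects. The proof is essentially a one-line invocation of the preceding theorem.
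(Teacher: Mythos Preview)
Your proposal is correct and matches the paper's own treatment: the paper does not even spell out a proof, simply stating ``Correctness of our reduction now follows'' immediately after \thmref{X-fff} and then asserting the corollary. Your argument---forward direction from \lemref{X-functor} (well-definedness of $X$), reverse direction from the surjectivity established in \thmref{X-fff}---is precisely the intended unpacking.
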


Because $X$ is defined in terms of leaf products of structures that can be computed in polynomial time, it is immediate that $X$ can also be evaluated in polynomial time.

\begin{lemma}
  \label{lem:X-poly}
  Let $(P_1, S_2, \bmg)$ and $(Q_1, S_2', \bmh)$ be augmented $\alpha$-composition pairs for the solvable groups $G$ and $H$ and let $\phi : (P_1, S_2, \bmg) \ra (Q_1, S_2', \bmh)$ be an isomorphism.  Then both $X(P_1, S_2, \bmg)$ and $X(\phi)$ can be computed in polynomial time.
\end{lemma}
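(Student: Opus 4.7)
The plan is to observe that $X(P_1, S_2, \bmg)$ is built out of a constant number of leaf products of objects that each have size at most $O(n)$, together with $O(n^2)$ multiplication gadgets of constant size, and that every ingredient is definable in polynomial time from the Cayley table of $G$ and the generating sequence $\bmg$.

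For the graph itself, I would first construct $T(P_1, \bmg)$ by invoking \lemref{gen-ord}(c) to compute the total order $\prec_{\bmg}$ on the elements of $P_1$ in polynomial time (concretely, by computing each $w_{\bmg}(x)$ via breadth-first search on the Cayley graph of $(P_1, \bmg)$), and then assembling the balanced binary tree described in \defref{TP} bottom-up in linear time in $|P_1|$. The tree $T(S_2)$ from \defref{TS} is immediate: its node set $\bigcup_i (P_2/P_{2,i})$ has polynomially many elements, can be enumerated from the Cayley table, and the parent-child relation is the containment $y P_{2,i} \subseteq x P_{2,i+1}$, which is easy to test. The leaf product $T(P_1,\bmg) \leafprod T(S_2)$ from \defref{leaf-prod} has $O(n)$ vertices; one more leaf product of this tree with itself yields a tree of size $O(n^2)$; and attaching the constant-size tree $M$ is trivial. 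Finally, the multiplication gadgets of \defref{X-PS} are added by iterating over all pairs $(x, y) \in G \times G$, reading $xy$ from the Cayley table, locating the nodes $*^{-1}(x), *^{-1}(y), *^{-1}(xy)$ (which is polynomial given $G = P_1 P_2$), and writing down the three-node path; this is $O(n^2)$ work overall, and the node colorings (``second identity'' and ``internal'') are then easy to set.

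For $X(\phi)$, by \defref{leaf-prod-iso} the map $\restr{\phi}{P_1} \leafprod \restr{\phi}{P_2} \leafprod \restr{\phi}{P_1} \leafprod \restr{\phi}{P_2} \leafprod \id_M$ sends a vertex tuple of the form $(x_1, x_2, y_1, y_2, m)$ to the tuple obtained by applying $\phi$ componentwise in the first four coordinates and the identity on the $M$-coordinate. Having already written down $X(P_1, S_2, \bmg)$, producing $X(\phi)$ just requires iterating over its polynomially many vertices and performing at most four constant-time lookups of $\phi$ per vertex.

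No step in this plan is a serious obstacle: everything reduces to a constant-arity sequence of leaf products of polynomial-size trees together with $O(n^2)$ gadget attachments and $O(n^2)$ vertex relabelings. The only point that invokes previously established technology rather than a direct computation is the construction of $T(P_1,\bmg)$, which depends on \lemref{gen-ord}(c); with that in hand, the lemma follows immediately.
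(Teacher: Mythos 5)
Your proposal is correct and matches the paper's approach: the paper simply remarks that since $X$ is defined via leaf products of structures computable in polynomial time, the claim is immediate, and your write-up just spells out those same steps (ordering via \lemref{gen-ord}, building $T(P_1,\bmg)$ and $T(S_2)$, taking leaf products, attaching the $O(n^2)$ gadgets, and applying $\phi$ componentwise for $X(\phi)$) in more detail. No gap.
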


The last ingredient that we require for our algorithm for augmented $\alpha$-composition pair isomorphism is a bound on the degree of the graph.

\begin{lemma}
  \label{lem:aug-alpha-graph}
  Let $(P_1, S_2, \bmg)$ be an augmented $\alpha$-composition pair for the solvable group $G$.  Then the graph $X(P_1, S_2, \bmg)$ has degree at most $\max\{\alpha + 1, 4\}$ and size $O(n^2)$.
\end{lemma}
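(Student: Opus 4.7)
I would walk through the leaf-product construction of $X(P_1, S_2, \bmg)$ level by level, bounding the tree degree at each stage, and then separately account for the cross edges added by the multiplication gadgets. First, $T(P_1, \bmg)$ is a balanced binary tree, so its maximum degree is at most $3$; and $T(S_2)$ has maximum degree at most $\alpha + 1$, because each non-root node $yP_{2, i}$ has one parent and $|P_{2, i+1}/P_{2, i}|$ children, and every prime dividing $|P_2|$ is at most $\alpha$ by the definition of an $\alpha$-decomposition. The crucial feature of the leaf product is that in $T_1 \leafprod T_2$, each vertex retains its degree from whichever of $T_1$ or $T_2$ it naturally lives in, except that each leaf $x$ of $T_1$ gains one new edge per child of the root of $T_2$.

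Iterating, $T(P_1, S_2, \bmg) = T(P_1, \bmg) \leafprod T(S_2)$ has max degree $\max\{3, 1 + \alpha, \alpha + 1\} = \alpha + 1$. The second leaf product with another copy of $T(P_1, S_2, \bmg)$ (whose root has degree at most $2$, since it is the root of $T(P_1, \bmg)$) keeps this bound because its outer leaves now have degree $1 + 2 = 3 \leq \alpha + 1$. The final leaf product with $M$, whose root has degree $3$, causes the leaves of $T(P_1, S_2, \bmg) \leafprod T(P_1, S_2, \bmg)$ to acquire degree $1 + 3 = 4$. Hence the underlying tree of $X(P_1, S_2, \bmg)$ has max degree $\max\{\alpha + 1, 4\}$.

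Next I would check that the cross edges from the multiplication gadgets do not blow past this bound. Each $\la$-node $(*^{-1}(a), *^{-1}(b), \la)$ appears only in the unique gadget for the rule $a \cdot b$, so it picks up one cross edge. Each $=$-node $(*^{-1}(a), *^{-1}(b), =)$ appears only in the gadget for the pair $(ab^{-1}, b)$ and likewise gains one cross edge. The $\ra$-node $(*^{-1}(a), *^{-1}(b), \ra)$ appears in the unique gadget for the pair $(b, a)$, where it is the middle vertex of a three-node path, so it gains two cross edges. All three leaf types have tree degree $1$, giving final degrees $2$, $3$, $2$ respectively, all at most $4$, so the overall degree bound is preserved.

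For the size bound, $T(P_1, \bmg)$ has $O(|P_1|)$ vertices, and $T(S_2)$ has $\sum_i |P_2|/|P_{2,i}| = O(|P_2|)$ vertices by a geometric series argument (since $|P_{2, i+1}| \geq 2 |P_{2, i}|$). Using $|T_1 \leafprod T_2| = |T_1| + |L(T_1)|(|T_2| - 1)$, it follows that $T(P_1, S_2, \bmg)$ has $O(n)$ vertices and exactly $n$ leaves; a second leaf product produces $O(n^2)$ vertices; the further leaf product with the constant-size $M$ and the addition of $n^2$ constant-size multiplication gadgets both preserve $O(n^2)$ vertices and $O(n^2)$ edges. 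The whole argument is essentially combinatorial bookkeeping; the only subtle step is the per-node gadget count in the degree analysis, which is the one place where a careless argument could allow a leaf to acquire unboundedly many cross edges.
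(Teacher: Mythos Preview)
Your proof is correct and follows the same leaf-product degree-and-size bookkeeping as the paper's three-sentence argument, only in greater detail. In particular, your explicit per-node count showing that each $\la$-, $\ra$-, and $=$-leaf acquires at most one, two, and one cross edge respectively is a point the paper leaves implicit, and your index $|P_{2,i+1}/P_{2,i}|$ for the number of children of $yP_{2,i}$ should read $|P_{2,i}/P_{2,i-1}|$, but this does not affect the bound.
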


\begin{proof}
  The trees $T(P_1, \bmg)$, $T(S_2)$ and $M$ have degrees $3$, at most $\alpha + 1$ and $3$ respectively.  Since $\abs{P_1} \abs{P_2} = n$, the size of $T(P_1, \bmg) \leafprod T(S_2)$ is $O(n)$.  Thus, $T(P_1, \bmg) \leafprod T(S_2) \leafprod T(P_1, \bmg) \leafprod T(S_2) \leafprod M$ has size $O(n^2)$ and degree at most $\max\{\alpha + 1, 4\}$.
\end{proof}

Finally, we obtain our result for augmented $\alpha$-composition pair isomorphism.

\begin{theorem}
  \label{thm:aug-alpha-iso}
  Let $(P_1, S_2, \bmg)$ and $(Q_1, S_2', \bmh)$ be augmented $\alpha$-composition pairs for the solvable groups $G$ and $H$.  Then we can test if $(P_1, S_2, \bmg) \cong (Q_1, S_2', \bmh)$ in $n^{O(\alpha)}$ time.
\end{theorem}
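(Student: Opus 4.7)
The plan is to combine the reduction machinery already built up in this section with the bounded-degree graph isomorphism algorithm cited as \thmref{const-deg-can}. Given the two augmented $\alpha$-composition pairs $(P_1, S_2, \bmg)$ and $(Q_1, S_2', \bmh)$, the first step is to construct the graphs $X(P_1, S_2, \bmg)$ and $X(Q_1, S_2', \bmh)$ using \defref{X-PS}. By \lemref{X-poly}, this construction runs in polynomial time, and by \lemref{aug-alpha-graph}, each graph has size $O(n^2)$ and maximum degree at most $\max\{\alpha + 1, 4\}$.

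The second step is to invoke \thmref{const-deg-can} (the Babai--Luks bounded-degree graph isomorphism/canonization algorithm) on the pair $X(P_1, S_2, \bmg)$, $X(Q_1, S_2', \bmh)$ to decide whether they are isomorphic as colored graphs. Since these graphs have $O(n^2)$ vertices and degree $O(\alpha)$, this runs in time $(n^2)^{O(\alpha)} = n^{O(\alpha)}$, which dominates the polynomial cost of the graph construction.

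Correctness is immediate from \corref{aug-alpha-red-cor}: the augmented $\alpha$-composition pairs $(P_1, S_2, \bmg)$ and $(Q_1, S_2', \bmh)$ are isomorphic if and only if the colored graphs $X(P_1, S_2, \bmg)$ and $X(Q_1, S_2', \bmh)$ are isomorphic. Thus the output of the bounded-degree graph isomorphism test is precisely the answer for augmented $\alpha$-composition pair isomorphism, and the overall running time is $n^{O(\alpha)}$ as claimed.

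There is essentially no obstacle remaining: all the conceptual work (defining $X$, proving it is fully faithful in \thmref{X-fff}, and bounding its degree and size in \lemref{aug-alpha-graph}) has already been completed earlier in this section. The proof is simply the act of assembling these three ingredients and reading off the time bound.
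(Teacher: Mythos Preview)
Your proposal is correct and follows essentially the same approach as the paper's proof: compute the graphs via \lemref{X-poly}, bound their degree and size via \lemref{aug-alpha-graph}, apply \thmref{const-deg-can}, and conclude correctness from \corref{aug-alpha-red-cor}. You are slightly more explicit than the paper in spelling out the $(n^2)^{O(\alpha)} = n^{O(\alpha)}$ calculation, but the argument is the same.
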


\begin{proof}
  By \lemref{X-poly}, we can compute the graphs $X(P_1, S_2, \bmg)$ and $X(Q_1, S_2', \bmh)$ in polynomial time.  By \lemref{aug-alpha-graph} and \thmref{const-deg-can}, we can decide if $X(P_1, S_2, \bmg) \cong X(Q_1, S_2', \bmh)$ in $n^{O(\alpha)}$ time.  Finally, \corref{aug-alpha-red-cor} tells us that $(P_1, S_2, \bmg) \cong (Q_1, S_2', \bmh)$ \ifft $X(P_1, S_2, \bmg) \cong X(Q_1, S_2', \bmh)$.
\end{proof}

Using \lemref{alpha-comp-red}, we obtain the following corollary.

\begin{corollary}
  \label{cor:alpha-iso}
  Let $(P_1, S_2)$ and $(Q_1, S_2')$ be $\alpha$-composition pairs for the solvable groups $G$ and $H$.  Then we can test if $(P_1, S_2) \cong (Q_1, S_2')$ in $n^{O(\alpha) + \log_{\alpha} n}$ time.
\end{corollary}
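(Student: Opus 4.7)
The plan is to combine the two main ingredients that have already been assembled in this subsection: the Turing reduction from $\alpha$-composition pair isomorphism to augmented $\alpha$-composition pair isomorphism given by \lemref{alpha-comp-red}, and the $n^{O(\alpha)}$ time algorithm for the augmented version given by \thmref{aug-alpha-iso}. Since the reduction in \lemref{alpha-comp-red} runs in $n^{\log_{\alpha} n + O(1)}$ deterministic time and each of its oracle queries is answered in $n^{O(\alpha)}$ time by \thmref{aug-alpha-iso}, chaining the two bounds yields a total running time of $n^{O(\alpha) + \log_{\alpha} n}$, exactly as claimed.

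Concretely, I would fix an arbitrary ordered generating set $\bmg$ for $P_1$ (which exists and can be chosen in polynomial time because every prime dividing $\abs{P_1}$ exceeds $\alpha$, giving a generating set of size at most $\log_{\alpha} n$). For each candidate ordered tuple $\bmh$ of elements of $Q_1$ of the same length, I would verify that $\bmh$ generates $Q_1$ and then invoke \thmref{aug-alpha-iso} to test whether $(P_1, S_2, \bmg) \cong (Q_1, S_2', \bmh)$ as augmented $\alpha$-composition pairs. Output YES if and only if some choice of $\bmh$ produces an isomorphism. Correctness is immediate: any isomorphism $\phi : (P_1, S_2) \to (Q_1, S_2')$ maps $\bmg$ to some ordered generating set of $Q_1$, which is enumerated by the loop; conversely, an augmented isomorphism is in particular an $\alpha$-composition pair isomorphism.

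The only thing that needs to be verified for the complexity is that the number of candidates $\bmh$ is at most $n^{\log_{\alpha} n + O(1)}$, which is precisely the counting done in \lemref{alpha-comp-red}. I do not anticipate any real obstacle here: both ingredients do the heavy lifting, and the corollary is essentially the statement that the Turing reduction composes cleanly with the black-box algorithm for the augmented problem. The multiplicative cost $n^{\log_{\alpha} n + O(1)} \cdot n^{O(\alpha)}$ collapses into the stated bound $n^{O(\alpha) + \log_{\alpha} n}$, completing the proof.
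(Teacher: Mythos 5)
Your proposal is correct and matches the paper's argument: the corollary is obtained exactly by composing the $n^{\log_{\alpha} n + O(1)}$-time Turing reduction of \lemref{alpha-comp-red} (enumeration of ordered generating sets of $Q_1$, which is also how the paper justifies that lemma) with the $n^{O(\alpha)}$-time test of \thmref{aug-alpha-iso}. Nothing is missing.
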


\subsection{Canonization}
In this subsection, we extend our results for testing isomorphism of $\alpha$-composition pairs to canonization.  This result can be leveraged to obtain faster algorithms for solvable-group isomorphism via collision arguments~\cite{rosenbaum2013b}.  Our canonization algorithm requires another \inpuborpriv{map}{functor} $Y$ that reverses the action of $X$ by sending back to the augmented $\alpha$-composition pairs from which they arise.  We start with the definition for $Y$.  As with $X$, we overload notation so that $Y$ can also be applied to isomorphisms between graphs.

\begin{definition}
  \label{defn:Y-A}
  For each augmented $\alpha$-composition pair $(P_1, S_2, \bmg)$ for a solvable group $G$ and each graph $A \cong X(P_1, S_2, \bmg)$, we fix an arbitrary isomorphism $\pi : X(P_1, S_2, \bmg) \ra A$.  Let $P_{2, 0} = 1 \tril \cdots \tril P_{2, m} = P_2$ be the subgroup chain for $S_2$.  Then we define $Y(A) = (\pi[P_1 \times \{1\}], \pi[\{1\} \times P_{2, 0}] \tril \cdots \tril \pi[\{1\} \times P_{2, m}], \pi(\bmg))$.

  Here, $\pi[\setb{(x_1, x_2)}{x_i \in P_i}]$ is interpreted as a group containing each $\pi[\{1\} \times P_{2, i}]$ as a subgroup.  For each $x_i, y_i, z_i \in P_i$, we define $\pi(x_1, x_2) \pi(y_1, y_2) = \pi(z_1, z_2)$ \ifft there exists a path $(a_{\pi(x)} a_{\pi(y)}, a_{\pi(z)})$ colored $(\text{``left''}, \text{``right''}, \text{``equals''})$, such that $a_{\pi(x)}$, $a_{\pi(y)}$ and $a_{\pi(z)}$ are descendants of the nodes $\pi(x_1, x_2)$, $\pi(y_1, y_2)$ and $\pi(z_1, z_2)$ in the image of the tree $T(P_1, \bmg) \leafprod T(S_2) \leafprod T(P_1, \bmg) \leafprod T(S_2) \leafprod M$ under $\pi$.

  Let $(P_1, S_2, \bmg)$ and $(Q_1, S_2', \bmh)$ be augmented $\alpha$-composition pairs for the groups $G$ and $H$ and consider the graphs $A \cong X(P_1, S_2, \bmg)$ and  $A' \cong X(Q_1, S_2', \bmh)$.  Let
  $\pi : X(P_1, S_2, \bmg) \ra A$ and $\pi' : X(Q_1, S_2', \bmh) \ra A'$ be the fixed isomorphisms chosen above.  Then for each isomorphism $\theta : A \ra A'$, we define $Y(\theta) : \pi[\setb{(x_1, x_2)}{x_i \in P_i}] \ra \pi'[\setb{(x_1, x_2)}{x_i \in Q_i}]$ to be $\restr{\theta}{\pi[\setb{(x_1, x_2)}{x_i \in P_i}]}$.
\end{definition}

As for $X$, we define $Y_{A, A'} : \iso(A, A') \ra \iso(Y(A), Y(A'))$ by $\theta \mapsto Y(\theta)$ for each pair of graphs $A, A' \in \augcomptree$.

Our first step is to show that $Y$ is well-defined.  Once this is proved, we can leverage \thmref{X-fff} to show that each $Y_{A, A'}$ is bijective.  This allows us to define a canonical form for augmented $\alpha$-composition pairs in terms of $\can_{\graph}$, $X$ and $Y$.

\begin{lemma}
  \label{lem:Y-well-def}
  Let $(P_1, S_2, \bmg)$ be an augmented $\alpha$-composition pair for the solvable group $G$, let $A$ be a graph and let $\pi : X(P_1, S_2, \bmg) \ra A$ be an isomorphism.  Then $Y(A)$ is a well-defined augmented composition pair and can be computed in polynomial time.  Moreover, $Y(\pi) : (P_1, S_2, \bmg) \ra Y(A)$ is an isomorphism.
\end{lemma}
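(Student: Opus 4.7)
The plan is to exploit the fact that $\pi : X(P_1, S_2, \bmg) \ra A$ is a graph isomorphism, so every structural feature of $X(P_1, S_2, \bmg)$ (node colors, tree edges, cross edges, descendant relations, and hence the set of $(\la, \ra, =)$-colored paths satisfying the descendant condition of \defref{Y-A}) is transported bijectively to $A$. Thus it suffices to analyze the corresponding data inside $X(P_1, S_2, \bmg)$ itself, where the structure is already well understood.

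First I would identify what the multiplication rule of \defref{Y-A} picks out inside $X(P_1, S_2, \bmg)$. By \defref{X-PS}, the only paths of colors $(\la, \ra, =)$ whose vertices are descendants of three leaves $(x_1, x_2)$, $(y_1, y_2)$, $(z_1, z_2)$ of $T(P_1, \bmg) \leafprod T(S_2)$ are the multiplication gadgets (equivalently, the $W$-sequences introduced before \lemref{X-functor}); there is exactly one such gadget per ordered pair $((x_1, x_2), (y_1, y_2))$, and it ends at $(z_1, z_2) = *^{-1}(*(x_1, x_2) \cdot *(y_1, y_2))$. Pulling back through $\pi^{-1}$, the same uniqueness holds in $A$. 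Consequently the relation of \defref{Y-A} is a well-defined binary operation on $\pi[\setb{(x_1, x_2)}{x_i \in P_i}]$, and the bijection $\psi : G \ra \pi[\setb{(x_1, x_2)}{x_i \in P_i}]$ given by $\psi(x) = \pi(*^{-1}(x))$ is a group isomorphism by construction.

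Next I would verify that the distinguished subsets and generators transport correctly. Under $\psi$, the subgroup $P_1 \leq G$ goes to $\pi[P_1 \times \{1\}]$, each $P_{2, i}$ goes to $\pi[\{1\} \times P_{2, i}]$, and the ordered generating set $\bmg$ of $P_1$ goes to $\pi(\bmg)$. Since $(P_1, S_2, \bmg)$ is an augmented $\alpha$-composition pair for $G$ and $\psi$ is a group isomorphism, the $\psi$-image $Y(A)$ inherits every defining property, so $Y(A)$ is an augmented $\alpha$-composition pair. Identifying $G$ with $\setb{(x_1, x_2)}{x_i \in P_i}$ via $*$, the map $\psi$ coincides with $Y(\pi)$, which is therefore an isomorphism $(P_1, S_2, \bmg) \ra Y(A)$.

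For the polynomial-time claim, $X(P_1, S_2, \bmg)$ is computable in polynomial time by \lemref{X-poly}, and from the given $\pi$ one reads off the distinguished subsets $\pi[\setb{(x_1, x_2)}{x_i \in P_i}]$, $\pi[P_1 \times \{1\}]$, $\pi[\{1\} \times P_{2, i}]$ and the tuple $\pi(\bmg)$ directly; each product in $Y(A)$ is then obtained in polynomial time by locating the unique $(\la, \ra, =)$-colored path in $A$ with the required descendant structure. The only real challenge is the careful bookkeeping across the several bijections ($*$, $\pi$, and the leaf-product coordinates); the substance of the argument is simply that a graph isomorphism must transport all the combinatorial structure, including the multiplication gadgets, faithfully from $X(P_1, S_2, \bmg)$ to $A$.
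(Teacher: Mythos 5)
Your argument is correct and follows essentially the same route as the paper: both proofs transport the existence and uniqueness of the multiplication gadgets ($W$-sequences) through the graph isomorphism $\pi$ to conclude that the operation of \defref{Y-A} is well defined, that $\pi[\setb{(x_1, x_2)}{x_i \in P_i}]$ is a group, and that $Y(\pi)$ is an augmented $\alpha$-composition-pair isomorphism. The only minor difference is in the polynomial-time claim, where the paper locates $\pi[P_1 \times \{1\}]$ and the chain $\pi[\{1\} \times P_{2, k}]$ structurally inside $A$ (via the ``second identity'' colors, depths from the root, and the identity node), while you read these sets off the given $\pi$ directly; under the lemma's hypotheses both are acceptable.
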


\begin{proof}
  We claim that $\pi[\setb{(x_1, x_2)}{x_i \in P_i}]$ is indeed a group if interpreted according to~\defref{Y-A}.  Let $x_i, y_i, z_i \in P_i$.  Then $\pi(x_1, x_2) \pi(y_1, y_2) = \pi(z_1, z_2)$ \ifft there exists a path $(a_{\pi(x)} a_{\pi(y)}, a_{\pi(z)})$ colored $(\text{``left''}, \text{``right''}, \text{``equals''})$, such that $a_{\pi(x)}$, $a_{\pi(y)}$ and $a_{\pi(z)}$ are descendants of the nodes $\pi(x_1, x_2)$, $\pi(y_1, y_2)$ and $\pi(z_1, z_2)$ in $A$.  Since $\pi$ is an isomorphism, this is equivalent to the existence of a path $(a_{x} a_{y}, a_{z})$ colored $(\text{``left''}, \text{``right''}, \text{``equals''})$, such that $a_x$, $a_y$ and $a_z$ are descendants of the nodes $(x_1, x_2)$, $(y_1, y_2)$ and $(z_1, z_2)$ in $X(P_1, S_2, \bmg)$.

  This is in turn equivalent to the existence of a $W$-sequence from $x$ to $y$ to $z$ where $x = x_1 x_2$, $y = y_1 y_2$ and $z = z_1 z_2$.  By definition, this $W$-sequence exists \ifft $x y = z$.  Therefore, $\pi[\setb{(x_1, x_2)}{x_i \in P_i}]$ is a group and $Y(\pi)$ is a group isomorphism from $G$ to $\pi[\setb{(x_1, x_2)}{x_i \in P_i}]$.  It is immediate that $Y(A)$ is an augmented $\alpha$-composition pair and $Y(\pi)$ is an augmented $\alpha$-composition pair isomorphism.

  Now we show how to compute $Y(A)$ in polynomial time.  Let $\ell = \lceil \log \abs{P_1} \rceil$ and let the subgroup chain for $S_2$ be $P_{2, 0} = 1 \tril \cdots \tril P_{2, m}$.  Then $\ell$ is the height of $T(P_1, \bmg)$ and $m$ is the height of $T(S_2)$.  Thus, by~\defref{X-PS}, $\pi[P_1 \times \{1\}]$ consists of the nodes in $A$ colored ``second identity'' at a depth of $\ell + m$ from the root.

  To compute each $\pi[\{1\} \times P_{2, k}]$, we first find the node $\pi(1, 1)$; this is the identity element of the group $\pi[\setb{(x_1, x_2)}{x_i \in P_i}]$.  The node $\pi(1, P_{2, k})$ is the node on the path from the root to $\pi(1, 1)$ in $A$ that is at a distance of $\ell + k$ from the root.  Then, by~\defref{X-PS}, each $\pi[\{1\} \times P_{2, k}]$ consists of the nodes in $A$ descended from $\pi(1, P_{2, k})$ that are at a distance of $m - k$ from $\pi(1, P_2)$.
\end{proof}

Now we can show that \inpuborpriv{each $Y_{A, A'}$ is surjective}{$Y$ is a full functor}.

\begin{theorem}
  \label{thm:Y-fff}
  \inpuborpriv{Consider the graphs $A, A' \in \augcomptree$.  Then $Y_{A, A'}$ is a bijection and both $Y(A)$ and $Y(\theta)$ where $\theta \in \iso(Y(A), Y(A'))$ can be computed in polynomial time}{$Y : \augcomptree \ra \augcomp$ is a fully faithful functor and can be evaluated in polynomial time}.
\end{theorem}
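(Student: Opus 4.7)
The plan is to exploit the fact that $Y$ is essentially the inverse of $X$ up to the fixed isomorphisms $\pi$, so that full faithfulness of $Y$ transfers from the already-established full faithfulness of $X$ (\thmref{X-fff}). The argument breaks into four steps: well-definedness of $Y$ on morphisms, functoriality, full faithfulness, and polynomial-time evaluation.

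First I would show that for any graph isomorphism $\theta : A \ra A'$ in $\augcomptree$, with the fixed isomorphisms $\pi : X(P_1, S_2, \bmg) \ra A$ and $\pi' : X(Q_1, S_2', \bmh) \ra A'$ of \defref{Y-A}, the restriction $\restr{\theta}{\pi[\setb{(x_1, x_2)}{x_i \in P_i}]}$ lands in $\pi'[\setb{(x_1, x_2)}{x_i \in Q_i}]$ and is an augmented $\alpha$-composition pair isomorphism. This is because the ``middle'' nodes sit at the canonical depth $\lceil \log \abs{P_1} \rceil + m$ from the root (as used in the proof of \lemref{Y-well-def}), a graph isomorphism must preserve depths, and the group structure and composition-series structure on $Y(A)$ and $Y(A')$ are defined via $W$-sequences and via the root-to-$(1,1)$ path respectively, both of which $\theta$ preserves. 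Functoriality of $Y$ is then immediate since $Y$ is defined as a restriction.

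For full faithfulness, the key observation is that $Y(\pi) : (P_1, S_2, \bmg) \ra Y(A)$ and $Y(\pi') : (Q_1, S_2', \bmh) \ra Y(A')$ are augmented $\alpha$-composition pair isomorphisms by \lemref{Y-well-def}. Given any graph isomorphism $\theta : A \ra A'$, the composition $\pi'^{-1} \circ \theta \circ \pi$ is an isomorphism $X(P_1, S_2, \bmg) \ra X(Q_1, S_2', \bmh)$, so by \thmref{X-fff} it equals $X(\phi)$ for a unique isomorphism $\phi : (P_1, S_2, \bmg) \ra (Q_1, S_2', \bmh)$. Unwinding \defref{Y-A} on each node $\pi(x_1, x_2)$ yields the identity $Y(\theta) = Y(\pi') \circ \phi \circ Y(\pi)^{-1}$. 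Since conjugation by the fixed isomorphisms $Y(\pi)$ and $Y(\pi')$ is a bijection on isomorphism sets, the fullness and faithfulness of $X_{(P_1,S_2,\bmg),(Q_1,S_2',\bmh)}$ transport to the fullness and faithfulness of $Y_{A,A'}$.

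Polynomial-time evaluation of $Y$ on objects is handled by \lemref{Y-well-def}; on morphisms it is just the restriction of $\theta$ to the polynomially identifiable node set $\pi[\setb{(x_1, x_2)}{x_i \in P_i}]$. The most delicate step will be verifying the identity $Y(\theta) = Y(\pi') \circ \phi \circ Y(\pi)^{-1}$, because one must reconcile the abstract group structure on $\pi[\setb{(x_1, x_2)}{x_i \in P_i}]$ (defined via $W$-sequences in $A$) with the group isomorphism $\phi$ extracted via $X$-fullness; once this coherence is in place, all remaining parts of the argument are routine.
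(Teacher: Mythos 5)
Your proposal is correct and follows essentially the same route as the paper: both transfer the bijectivity of $X_{(P_1,S_2,\bmg),(Q_1,S_2',\bmh)}$ (Theorem~\ref{thm:X-fff}) to $Y_{A,A'}$ by writing any $\theta : A \ra A'$ as a conjugate $\pi' \rho \pi^{-1}$ of an isomorphism between the model graphs, using $YX = I_{\augcomp}$ and the fact that $Y$ respects composition, and both get polynomial time from Lemma~\ref{lem:Y-well-def}. Your slightly more explicit verification that $Y(\theta)$ is a well-defined augmented $\alpha$-composition pair isomorphism (via depth and $W$-sequence preservation) is a harmless elaboration of what the paper compresses into its first paragraph.
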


\begin{proof}
  Let $(P_1, S_2, \bmg)$ and $(Q_1, S_2', \bmh)$ be augmented $\alpha$-composition pairs for the solvable groups $G$ and $H$ such that $\pi : X(P_1, S_2, \bmg) \ra A$, $\pi' : X(Q_1, S_2', \bmh) \ra A'$ and $\theta : A \ra A'$ are isomorphisms.
  
  First, we observe that $Y$ respects composition\inpriv{ and the identity} and let $\psi = \theta \pi : X(P_1, S_2, \bmg) \ra A'$.  Since $\theta$ and $\pi$ are isomorphisms so is $\psi$; \lemref{Y-well-def} then implies that $Y(\psi) = Y(\theta) Y(\pi)$ is also an isomorphism.  Therefore, $Y(\theta) = Y(\psi) (Y(\pi))^{-1}$ is an isomorphism and so \inpuborpriv{$Y_{A, A'}$ is a well-defined function}{$Y$ is a well-defined functor}.

  Now we prove that \inpuborpriv{$Y_{A, A'}$ is a bijection}{$Y$ is fully faithful}.  It follows from Definitions~\ref{defn:X-PS} and~\ref{defn:Y-A} that \inpuborpriv{$Y X = I_{\augcomp}$}{$Y_{X_{(P_1, S_2, \bmg)}, X_{(Q_1, S_2', \bmh)}} X_{(P_1, S_2, \bmg), (Q_1, S_2', \bmh)} = I_{\augcomp}$}.  By \thmref{X-fff}, $X_{(P_1, S_2, \bmg), (Q_1, S_2', \bmh)}$ is bijective; this implies that $Y_{X(P_1, S_2, \bmg), X(Q_1, S_2', \bmh)}$ is also bijective since the identity is bijective.  Now we just need to show that $Y_{A, A'}$ is bijective.  For each isomorphism $\theta : A \ra A'$, there exists an isomorphism $\rho : X(P_1, S_2, \bmg) \ra X(Q_1, S_2', \bmh)$ such that $\theta = \pi' \rho \pi^{-1}$.  It follows that $Y(\theta) = Y(\pi') Y(\rho) Y(\pi^{-1})$ from which we see that $Y_{A, A'}$ is indeed bijective.\inpriv{  Thus, $Y$ is fully faithful.}

  We already showed that $Y(A)$ can be computed in polynomial time in \lemref{Y-well-def} and it follows easily from \defref{Y-A} that $Y(\theta)$ can be computed in polynomial time.
\end{proof}

While \thmref{Y-fff} is enough to obtain our canonization results, we point out that $X$ and $Y$ form a category equivalence \inpuborpriv{when viewed as functors}{(see \appref{X-Y-equiv})}.\inpub{  Moreover, the results of this section can be derived from this more general fact.}

To construct our canonical form for augmented $\alpha$-composition pairs, we convert our augmented $\alpha$-composition pairs to graphs of degree at most $\alpha + O(1)$ by applying $X$.  Then we compute the canonical form of the resulting graph using \thmref{const-deg-can} and convert it back into an augmented $\alpha$-composition pair by applying $Y$.  We use $\can_{\graph}$ to denote the map from graphs to their canonical forms from \thmref{const-deg-can}.

\begin{theorem}
  \label{thm:aug-alpha-can}
  $Y \circ \can_{\graph} \circ X$ is a canonical form for augmented $\alpha$-composition pairs.  Moreover, for any $\alpha$-composition pair $(P_1, S_2, \bmg)$, we can compute $(Y \circ \can_{\graph} \circ X)(P_1, S_2, \bmg)$ in $n^{O(\alpha)}$ time.
\end{theorem}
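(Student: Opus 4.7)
The plan is to verify that $Y \circ \can_{\graph} \circ X$ satisfies the (augmented analog of the) three conditions in \defref{alpha-comp-pair-can} and then separately bound the running time. All of the conceptual work has already been done in \thmref{X-fff} and \thmref{Y-fff}; what remains is to compose these ingredients carefully, paying attention to the fact that $Y$ is a function of a graph (not of an isomorphism class of graphs), with a canonical choice of $\pi$ fixed per graph by \defref{Y-A}.

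First I would check well-definedness and format. Given an augmented $\alpha$-composition pair $(P_1, S_2, \bmg)$, \lemref{X-poly} computes $X(P_1, S_2, \bmg)$ in polynomial time, and by \lemref{aug-alpha-graph} the resulting graph has size $O(n^2)$ and degree at most $\max\{\alpha + 1, 4\}$. Applying $\can_{\graph}$ from \thmref{const-deg-can} produces in $n^{O(\alpha)}$ time a canonical graph $A^\star$ that is isomorphic to $X(P_1, S_2, \bmg)$; in particular $A^\star \in \augcomptree$, so \lemref{Y-well-def} applies and $Y(A^\star)$ is a well-defined augmented $\alpha$-composition pair, computable in polynomial time. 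Reading off the multiplication table together with the images of $P_1$, of the subgroups in the composition series, and of $\bmg$ inside this output gives the required canonical-form tuple. The total running time is $n^{O(\alpha)}$.

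Next I would verify the canonical-form property: $(P_1, S_2, \bmg) \cong (Q_1, S_2', \bmh)$ if and only if their images under $Y \circ \can_{\graph} \circ X$ coincide. For the forward direction, any isomorphism of augmented $\alpha$-composition pairs is mapped by $X$ (which is a functor by \lemref{X-functor}) to an isomorphism of the corresponding graphs, whence $\can_{\graph}(X(P_1, S_2, \bmg)) = \can_{\graph}(X(Q_1, S_2', \bmh))$; applying the function $Y$ to equal inputs produces equal outputs. For the converse, \lemref{Y-well-def} tells me that $Y(\can_{\graph}(X(P_1, S_2, \bmg)))$ is isomorphic as an augmented $\alpha$-composition pair to $(P_1, S_2, \bmg)$, and similarly on the $H$ side; if the two canonical outputs are literally equal, transitivity of isomorphism yields $(P_1, S_2, \bmg) \cong (Q_1, S_2', \bmh)$.

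There is no deep obstacle here; the only subtle point, which I would be careful to spell out, is that $Y$ must be applied to literally the same canonical graph on both sides for the forward implication to collapse to equality of outputs. This is exactly what the per-graph choice of $\pi$ in \defref{Y-A} guarantees: the same $A^\star$ always yields the same $Y(A^\star)$, independently of which isomorphism class of augmented $\alpha$-composition pairs one started from. Combined with the uniqueness side of \thmref{Y-fff} (fullness and faithfulness of $Y$), this closes the argument and establishes both the correctness of the canonical form and its $n^{O(\alpha)}$ running time.
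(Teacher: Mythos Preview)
Your proposal is correct and follows essentially the same approach as the paper: apply $X$, then $\can_{\graph}$, then $Y$, citing \lemref{X-poly}, \lemref{aug-alpha-graph}, \thmref{const-deg-can}, and \lemref{Y-well-def}/\thmref{Y-fff} for correctness and timing. The only cosmetic difference is that for the converse direction the paper argues the contrapositive (non-isomorphic inputs $\Rightarrow$ non-isomorphic canonical graphs $\Rightarrow$ non-isomorphic, hence unequal, $Y$-outputs), whereas you use directly that each output is isomorphic to its input (via \lemref{Y-well-def}) and invoke transitivity; both are valid and amount to the same thing.
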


\begin{proof}
  Consider two $\alpha$-composition pairs $(P_1, S_2, \bmg)$ and $(Q_1, S_2', \bmh)$ for the solvable groups $G$ and $H$.  By \corref{aug-alpha-red-cor}, $(P_1, S_2, \bmg) \cong (Q_1, S_2', \bmh)$ \ifft 
  \begin{equation*}
    X(P_1, S_2, \bmg) \cong X(Q_1, S_2', \bmh).
  \end{equation*}
Thus, $(P_1, S_2, \bmg) \cong (Q_1, S_2', \bmh)$ \ifft
\begin{equation*}
  \can_{\graph}(X(P_1, S_2, \bmg)) = \can_{\graph}(X(Q_1, S_2', \bmh))
\end{equation*}
Now, clearly, if $(P_1, S_2, \bmg) \cong (Q_1, S_2', \bmh)$,
\begin{equation*}
  Y(\can_{\graph}(X(P_1, S_2, \bmg))) = Y(\can_{\graph}(X(Q_1, S_2', \bmh)))
\end{equation*}
On the other hand, if $(P_1, S_2, \bmg) \not\cong (Q_1, S_2', \bmh)$, then
\begin{align*}
  \can_{\graph}(X(P_1, S_2, \bmg))    & \not\cong \can_{\graph}(X(Q_1, S_2', \bmh)) \\
  Y(\can_{\graph}(X(P_1, S_2, \bmg))) & \not\cong Y(\can_{\graph}(X(Q_1, S_2', \bmh))) \\
  Y(\can_{\graph}(X(P_1, S_2, \bmg))) & \not= Y(\can_{\graph}(X(Q_1, S_2', \bmh))).
\end{align*}
 Thus, $Y \circ \can_{\graph} \circ X$ is a complete invariant.  Also, $X(P_1, S_2, \bmg) \cong \can_{\graph}(X(P_1, S_2, \bmg))$ so since $Y X = I_{\augcomp}$, we have $(P_1, S_2, \bmg) \cong Y(\can_{\graph}(X(P_1, S_2, \bmg)))$ by \thmref{Y-fff}.  Thus, $Y \circ \can_{\graph} \circ X$ is a canonical form.

  Lastly, we show that $Y(\can_{\graph}(X(P_1, S_2, \bmg)))$ can be computed in $n^{O(\alpha)}$ time.  By \thmref{X-fff}, we can compute $X(P_1, S_2, \bmg)$ in polynomial time.  By \lemref{aug-alpha-graph} and \thmref{const-deg-can}, it takes $n^{O(\alpha)}$ time to compute $\can_{\graph}(X(P_1, S_2, \bmg))$.  Finally, by \thmref{Y-fff}, we can compute $Y(\can_{\graph}(X(P_1, S_2, \bmg)))$ in polynomial time from $\can_{\graph}(X(P_1, S_2, \bmg))$.
\end{proof}

\section{Algorithms for solvable-group isomorphism and canonization}
\label{sec:sol-algorithms}
Armed with the results of Sections~\ref{sec:alpha-red} and \ref{sec:graph-red}, it is easy to prove \thmref{sol-iso} as promised in the introduction.

\soliso*

\begin{proof}
  Let $\alpha$ be a parameter to be chosen later.  By Lemmas~\ref{lem:sol-red} and~\ref{lem:alpha-red}, we can reduce solvable-group isomorphism to $\alpha$-composition pair isomorphism in $n^{(1 / 2 \log_p n + O(1))}$ where $p$ is the smallest prime dividing the order of the group.  Now, by \lemref{alpha-comp-red}, $\alpha$-composition pair isomorphism reduces to augmented $\alpha$-composition pair isomorphism in $n^{\log_{\alpha} n + O(1)}$.  Thus, we can reduce solvable-group isomorphism to augmented $\alpha$-composition pair isomorphism in $n^{(1 / 2) \log_p n + \log_{\alpha} n + O(1)}$ time.
  
  \begin{sloppypar}
    Applying \thmref{aug-alpha-iso}, we obtain an $n^{(1 / 2) \log_p n + \log_{\alpha} n + O(\alpha)}$ time algorithm for solvable-group isomorphism.  The optimal choice for $\alpha$ is $\alpha = \log n / \log \log n$.  The complexity is then $n^{(1 / 2) \log_p n + O(\log n / \log \log n)}$ as claimed.
  \end{sloppypar}
\end{proof}

Our algorithm for solvable-group canonization follows by a similar argument.

\begin{theorem}
  \label{thm:sol-group-can}
  Solvable-group canonization is in $n^{(1 / 2) \log_p n + O(\log n / \log \log n)}$ deterministic time.
\end{theorem}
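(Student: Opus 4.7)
The plan is to mirror the four-stage reduction that yielded \thmref{sol-iso}, replacing each isomorphism test by a canonical form computation. Three of the four stages are already in place. \lemref{group-red-alpha-decomp-can} reduces solvable-group canonization to $\alpha$-decomposition canonization in polynomial time; \lemref{alpha-decomp-red-pair-can} reduces $\alpha$-decomposition canonization to $\alpha$-composition pair canonization in $n^{(1/2)\log_p n + O(1)}$ time; and \thmref{aug-alpha-can} produces canonical forms of augmented $\alpha$-composition pairs in $n^{O(\alpha)}$ time.

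The missing link is a reduction from $\alpha$-composition pair canonization to augmented $\alpha$-composition pair canonization, paralleling \lemref{alpha-comp-red}. Given an $\alpha$-composition pair $(P_1, S_2)$, I first enumerate all ordered generating sets $\bmg$ of $P_1$. Since every prime dividing $\abs{P_1}$ exceeds $\alpha$, the subgroup $P_1$ admits a generating set of size at most $\log_{\alpha} \abs{P_1} \leq \log_{\alpha} n$, and there are at most $n^{\log_{\alpha} n + O(1)}$ such ordered sets. For each $\bmg$ I invoke \thmref{aug-alpha-can} to obtain the canonical form of $(P_1, S_2, \bmg)$, then project out the $\psi(\bmg)$ component to produce a candidate tuple of the shape required by \defref{alpha-comp-pair-can}. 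I define $\can_{\alphapair}(P_1, S_2)$ to be the lexicographically smallest of these candidates.

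The main correctness point to verify is that this lex-min depends only on the isomorphism class of $(P_1, S_2)$. An $\alpha$-composition pair isomorphism $\phi : (P_1, S_2) \ra (Q_1, S_2')$ establishes a bijection between ordered generating sets of $P_1$ and those of $Q_1$ via $\bmg \mapsto \phi(\bmg)$, and for each such pair $(P_1, S_2, \bmg) \cong (Q_1, S_2', \phi(\bmg))$ as augmented $\alpha$-composition pairs. Hence the two multisets of augmented canonical forms, and therefore their lex-min projections, coincide. Conversely, equality of the lex-min candidates lifts to an isomorphism of augmented $\alpha$-composition pairs which restricts to an isomorphism of the underlying $\alpha$-composition pairs.

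Composing the four reductions gives a running time of $n^{(1/2)\log_p n + \log_{\alpha} n + O(\alpha)}$. Setting $\alpha = \log n / \log \log n$ balances the last two terms at $O(\log n / \log \log n)$, yielding the claimed bound of $n^{(1/2)\log_p n + O(\log n / \log \log n)}$. The only conceptual subtlety lies in the third stage, where one must check that lex-min projection over all generator enumerations produces a well-defined canonical form in the sense of \defref{alpha-comp-pair-can}; the rest of the argument is an essentially mechanical canonization-analogue of the proof of \thmref{sol-iso}.
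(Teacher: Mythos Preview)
Your proposal is correct and follows essentially the same route as the paper: reduce via \lemref{group-red-alpha-decomp-can} and \lemref{alpha-decomp-red-pair-can}, then enumerate ordered generating sets of $P_1$ and take the lexicographically least augmented canonical form from \thmref{aug-alpha-can}, finally setting $\alpha = \log n / \log\log n$. In fact you spell out the correctness of the lex-min-over-generators step more carefully than the paper does; the paper dispatches it in a single sentence.
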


\begin{proof}
  Let $\alpha$ be a parameter to be chosen later.  By Lemmas~\ref{lem:group-red-alpha-decomp-can} and~\ref{lem:alpha-decomp-red-pair-can}, we can reduce solvable-group canonization to $\alpha$-composition pair canonization in $n^{(1 / 2 \log_p n + O(1))}$ where $p$ is the smallest prime dividing the order of the group.  $\alpha$-composition pair canonization in turn reduces to augmented $\alpha$-composition pair canonization in $n^{\log_{\alpha} n + O(1)}$ time by enumerating all possible choices of ordered generating sets and choosing the canonical form that comes first lexicographically.  Thus, we can reduce solvable-group canonization to augmented $\alpha$-composition pair canonization in $n^{(1 / 2) \log_p n + \log_{\alpha} n + O(1)}$ time.
  
  Applying \thmref{aug-alpha-can}, we obtain an $n^{(1 / 2) \log_p n + \log_{\alpha} n + O(\alpha))}$ time algorithm for solvable-group canonization.  The optimal choice for $\alpha$ is $\alpha = \log n / \log \log n$.  The complexity is again $n^{(1 / 2) \log_p n + O(\log n / \log \log n)}$ as claimed.
\end{proof}


\section*{Acknowledgements}
I thank Laci Babai for suggesting the simplification of using the concept of a $\alpha$-decomposition and other comments, Paul Beame and Aram Harrow for useful discussions and feedback, Joshua Grochow for additional references and the anonymous reviewers for helpful comments.  Richard Lipton asked if the techniques used for $p$-groups could be applied to groups of order $2^a p^b$ where $p$ is an odd prime.  This inspired our more efficient algorithm for solvable-group isomorphism.  Part of this work was completed at the Center for Theoretical Physics at the Massachusetts Institute of Technology.  This research was funded by the DoD AFOSR through an NDSEG fellowship, by the Simons Foundation through a Simons Award for Graduate Students in Theoretical Computer Science (grant \#316172) and by the NSF under grants CCF-0916400 and CCF-1111382.

\inpriv{\inlong{
    \appendix
    \newpage
    \section{$X$ and $Y$ are a category equivalence}
    \label{app:X-Y-equiv}
    \begin{theorem}
  The functors $X : \augcomp \ra \augcomptree$ and $Y : \augcomptree \ra \augcomp$ form a category equivalence.
\end{theorem}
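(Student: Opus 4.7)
The strategy is two-step: first show that $X$ alone is an equivalence of categories, and then verify that $Y$ plays the role of a quasi-inverse by producing the two required natural isomorphisms. By \thmref{X-fff}, $X$ is fully faithful, and by the very definition of $\augcomptree$ every object is isomorphic to $X(P_1, S_2, \bmg)$ for some augmented $\alpha$-composition pair, so $X$ is essentially surjective. Standard category theory then implies that $X$ is an equivalence; what remains is to identify $Y$ as a (chosen) quasi-inverse.

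For the unit $YX \cong I_{\augcomp}$, inspection of Definitions~\ref{defn:X-PS} and~\ref{defn:Y-A} (taking $\pi = \id_{X(P_1, S_2, \bmg)}$ as the fixed isomorphism whenever $A$ already equals $X(P_1, S_2, \bmg)$) shows that $Y(X(P_1, S_2, \bmg)) = (P_1, S_2, \bmg)$ and $Y(X(\phi)) = \phi$ strictly; this was already noted in the proof of \thmref{Y-fff}. Hence $YX = I_{\augcomp}$ and the unit can be taken to be the identity natural transformation.

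For the counit $XY \cong I_{\augcomptree}$, given $A \in \augcomptree$ with its fixed isomorphism $\pi_A : X(P_1, S_2, \bmg) \ra A$ from \defref{Y-A}, \lemref{Y-well-def} gives that $Y(\pi_A) : (P_1, S_2, \bmg) \ra Y(A)$ is an augmented $\alpha$-composition pair isomorphism. Applying $X$ and composing with $\pi_A$, I would set
\begin{equation*}
  \eta_A \coleq \pi_A \circ \bigl(X(Y(\pi_A))\bigr)^{-1} : X(Y(A)) \ra A,
\end{equation*}
which is an isomorphism in $\augcomptree$. To check naturality in a morphism $\theta : A \ra A'$, I would use the fullness of $X$ (\thmref{X-fff}) to write $\pi_{A'}^{-1} \circ \theta \circ \pi_A = X(\phi)$ for a unique $\phi : (P_1, S_2, \bmg) \ra (Q_1, S_2', \bmh)$; functoriality of $Y$ combined with $YX = I_{\augcomp}$ then yields $Y(\theta) = Y(\pi_{A'}) \circ \phi \circ Y(\pi_A)^{-1}$, and applying $X$ to this identity and rearranging gives the naturality square $\eta_{A'} \circ X(Y(\theta)) = \theta \circ \eta_A$.

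The main obstacle is the bookkeeping around the arbitrary choice of $\pi_A$ in \defref{Y-A}: the counit $\eta_A$ visibly depends on this choice, but the key point is that its naturality does not, because fullness of $X$ lets us lift every graph morphism uniquely to an augmented $\alpha$-composition pair morphism, after which the computation reduces to functoriality of $X$ and $Y$ together with the strict identity $YX = I_{\augcomp}$.
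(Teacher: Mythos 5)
Your proof is correct and follows essentially the same route as the paper: the unit is the strict identity $YX = I_{\augcomp}$ (under the same convention that the fixed isomorphism of Definition~\ref{defn:Y-A} is taken to be the identity on graphs of the form $X(P_1, S_2, \bmg)$), and the counit is assembled from the chosen isomorphisms $\pi_A$, with full faithfulness of $X$ (Theorem~\ref{thm:X-fff}) carrying the argument. The only cosmetic difference is that you verify naturality by lifting $\theta$ through the fully faithful $X$ and computing the square directly, whereas the paper applies $XY$ to the square and invokes faithfulness of $XY$ together with $XY(\eta_A) = \id_{XY(A)}$; the two verifications are equivalent.
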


\begin{proof}
  We start by noting that $YX = I_{\augcomp}$ so $YX$ is naturally isomorphic to the identity functor.  Let $A \in \augcomptree$ be a graph and let $S$ be a augcomposition series such that $A \cong X(S)$.  Then $XY(A) \cong XYX(S)$ so $XY(A) \cong X(S)$ and $A \cong XY(A)$.  Then for each graph $A \in \augcomptree$, choose an isomorphism $\pi_A : XY(A) \ra A$.  We note that $XYXY = XY$ so that $XY(\pi) : XY(A) \ra XY(A)$ is an automorphism.  Define $\eta_A = \pi_A \circ XY(\pi^{-1})$; we claim that $\eta : XY \ra I_{\augcomptree}$ is a natural isomorphism.  Clearly, each $\eta_A$ is an isomorphism.  We need to show that for each $A, A' \in \augcomptree$ and every isomorphism $\theta : A \ra A'$, $\eta_{A'} \circ XY(\theta) = \theta \circ \eta_A$.  Since $XY$ is fully faithful by Theorems~\ref{thm:X-fff} and~\ref{thm:Y-fff}, this is equivalent to $XY(\eta_{A'}) \circ XYXY(\theta) = XY(\theta) \circ XY(\eta_A)$ which holds since $XY(\eta_A) = \id_{XY(A)}$.
\end{proof}
}}

\inlong{\newpage}
\bibliographystyle{initials}
\bibliography{$HOME/LaTeX/computer-science-references,$HOME/LaTeX/math-references,$HOME/LaTeX/quantum-computing-references} 

\end{document}